%% file: sparse_mimo_dof.tex
\documentclass[journal,a4paper,onecolumn,draftcls]{IEEEtran}
\usepackage{amsmath,amssymb,mathtools}
\usepackage{winsnotation}
\usepackage{booktabs,array,enumitem}
\usepackage{xcolor}
\usepackage{pgfplots}
\usepackage{needspace}

\pgfplotsset{compat=1.18}

\usepackage{cite}
\usepackage[colorlinks=true,linkcolor=blue,citecolor=blue,urlcolor=blue]{hyperref}
\usepackage[nameinlink,capitalise,noabbrev]{cleveref}
\usepackage{mathrsfs}
\usepackage{acronym}
\usepackage{tikz}
\usetikzlibrary{arrows.meta,calc,decorations.markings,shapes.geometric}
\crefname{theorem}{Theorem}{Theorems}
\crefname{proposition}{Proposition}{Propositions}
\crefname{lemma}{Lemma}{Lemmas}
\crefname{corollary}{Corollary}{Corollaries}
\crefname{definition}{Definition}{Definitions}
\crefname{assumption}{Assumption}{Assumptions}

\newtheorem{theorem}{Theorem}
\newtheorem{proposition}{Proposition}
\newtheorem{lemma}{Lemma}
\newtheorem{corollary}{Corollary}
\newtheorem{definition}{Definition}
\newtheorem{assumption}{Assumption}
\newtheorem{remark}{Remark}
\newtheorem{sublemma}{Sublemma}
\crefname{sublemma}{Sublemma}{Sublemmas}

\newcommand{\C}{\mathbb C}
\newcommand{\R}{\mathbb R}
\newcommand{\CN}{\mathcal{CN}}
\newcommand{\rank}{\operatorname{rank}}
\newcommand{\diag}{\operatorname{diag}}

\newcommand{\dist}{\operatorname{dist}}

\newcommand{\tr}{\operatorname{tr}}
\newcommand{\realify}{\operatorname{realify}}
\newcommand{\Real}{\operatorname{Re}}
\newcommand{\Imag}{\operatorname{Im}}
\newcommand{\jmathu}{\mathrm j}
\newcommand{\quant}[2]{\langle #1 \rangle_{#2}}

\acrodef{aoa}[AoA]{angle of arrival}
\acrodef{aod}[AoD]{angle of departure}
\acrodef{ula}[ULA]{uniform linear array}
\acrodef{upa}[UPA]{uniform planar array}
\acrodef{nula}[NULA]{nonuniform linear array}
\acrodefplural{nula}[NULAs]{nonuniform linear arrays}
\acrodef{dof}[DoF]{degree of freedom}
\acrodef{snr}[SNR]{signal-to-noise ratio}
\acrodef{sv}[SV]{Saleh-Valenzuela}
\acrodef{isac}[ISAC]{integrated sensing and communication}
\acrodef{mimo}[MIMO]{multiple-input-multiple-output}
\acrodef{6g}[6G]{sixth-generation}
\acrodef{csi}[CSI]{channel state information}
\acrodef{mse}[MSE]{mean-squared error}
\acrodef{lmmse}[LMMSE]{linear minimum mean-squared error}

\acrodefplural{dof}[DoFs]{degrees of freedom}
\acrodefplural{aod}[AoDs]{angles of departure}
\acrodefplural{aoa}[AoAs]{angles of arrival}

\title{Exact Degrees of Freedom of Spatially Sparse MIMO Channels Without Prior CSI}
\author{Yifeng Xiong, \IEEEmembership{Member, IEEE}, Weijiang Zhao, Fan Liu, \IEEEmembership{Senior Member, IEEE}, \\Shi Jin, \IEEEmembership{Fellow, IEEE}, and Jianhua Zhang, \IEEEmembership{Fellow, IEEE}}

\begin{document}
\maketitle

\begin{abstract}
We characterize the \ac{dof} of a point-to-point blockwise memoryless channel without prior \ac{csi}, with a fixed number \(K\) of propagation paths, where the transmitter (Tx) and the receiver (Rx) are equipped with \acp{nula} of \(N_t\) and \(N_r\) antennas, respectively. The positions of array elements are fixed, known, pairwise distinct, and need not be equally spaced. The \ac{ula} is a special case. In each block of length \(T\), the continuous \acp{aoa}, \acp{aod}, and independent complex Gaussian path gains are redrawn. Both Tx and Rx know the state distributions but are not given the current realizations before transmission. The receiver may estimate the channel from reference signals or decode without explicit channel estimation, with reference symbols counted in $T$ and their energy counted against the power constraint. Under the aforementioned model, we show that the \ac{dof} is 
$$
{\rm DoF}=\begin{cases}
            1-\frac{1}{T}, & \mbox{$K=1$}, \\
            K\Big(1-\frac{3}{2T}\Big), & \mbox{$K\geq 2$},
          \end{cases}
$$
when \(N_r\ge K+1\), \(N_t\ge\max\{K,2\}\), and \(T\ge K\). The analytical results are further demonstrated by their applications to the \ac{dof} tradeoff analysis in \ac{isac}. For more general array structures, an achievability result is established, while the converse remains open in general. 
\end{abstract}

\begin{IEEEkeywords}
Degree of freedom, information dimension, noncoherent MIMO, sparse channel, Saleh-Valenzuela model.
\end{IEEEkeywords}

\section{Introduction}
\subsection{Background and Related Works}
The increasing antenna array size and transmission bandwidth in wireless systems make the acquisition of channel knowledge an important part of communication system design~\cite{Larsson2014,Alkhateeb2014,Venugopal2017}. In propagation environments dominated by a limited number of resolvable paths, the channel coefficients are highly correlated through common physical parameters. In particular, each path contributes jointly across antennas, frequencies, and observation times through its gain, directions, and delay. This structure motivates sparse channel models that describe a large collection of coefficients using a small number of propagation components \cite{Sayeed2002,Bajwa2010,Heath2016}. Exploiting the structure can reduce the cost of channel estimation, and also raises the question of how it affects the fundamental communication limit when the channel realization is unknown.

Sparse channel models take diverse forms according to the propagation scenario and signal representation. Far-field spatial domain models use the angular responses of a few paths, while sparse delay-domain models describe frequency-selective propagation~\cite{Sayeed2002,Bajwa2010}. For electrically large apertures in the near field, the response of a path can depend on both angle and distance, leading to a joint angle-distance representation \cite{CuiDai2022}. Different array geometries further produce different response functions~\cite[Ch.~2]{VanTrees2002}. When a smooth parameterization has fewer real coordinates than the ambient channel space, the possible channel realizations live on lower-dimensional manifolds. 

To make the problem explicit, consider the narrowband array observation model
\[
 \RM{Y}_\rho=\sqrt\rho\,\RM{H}\RM{X}+\RM{N},
\]
where \(\RM{X}\), \(\RM{H}\), \(\RM{Y}_\rho\), and \(\RM{N}\) denote the transmitted signal, channel matrix, received signal, and independent Gaussian noise, respectively, and \(\rho\) is the \ac{snr} parameter. The Tx and the Rx know the channel distribution but not the current realization of \(\RM{H}\) before transmission. The Tx encodes information into \(\RM{X}\) independently of \(\RM{H}\), and the Rx decodes the message from \(\RM{Y}_\rho\) under this channel uncertainty. For a given blockwise memoryless channel and power constraint, capacity $C(\rho)$ is the supremum of the mutual information $I(\RM{Y}_\rho;\RM{X})$ per channel use over all admissible input distributions at each \ac{snr} $\rho$. The channel \ac{dof} is the pre-log of this optimized rate~\cite{Ngo2021,StotzBolcskei2016}, in the sense that
$$
{\rm DoF}=\mathop{\lim\sup}_{\rho\rightarrow\infty} \frac{C(\rho)}{\log \rho}.
$$
A specified input distribution would instead yield an achievable mutual-information pre-log. Establishing that an input is pre-log optimal, or \ac{dof}-achieving, requires a matching converse over all admissible input distributions. We would like to highlight that known reference signals can be included as deterministic components of $\RM{X}$. The capacity optimization therefore also covers reference-assisted channel estimation and decoding, with all reference symbols included in $T$ and their energy included in the power constraint.

Considerable effort has been devoted to acquiring sparse channel knowledge from training signals. Compressed channel sensing exploits the dependence among the coefficients to reduce the resources needed for reconstruction \cite{Bajwa2010,Heath2016}. More relevant to the continuous spatial model studied here, Shrestha \emph{et al.} established recovery guarantees for a \ac{mimo} channel with off-grid angles at both the Tx and the Rx from compressed and quantized pilot observations \cite{Shrestha2025}. In these channel estimation problems, the probe matrix is known, hence the noiseless observation varies only with the channel. In communication without prior \ac{csi}, the data-bearing part of the transmitted block is unknown to the Rx, while other entries may be assigned to known reference signals. If the reference signals are insufficient, certain changes in the channel state can then be compensated by changes in the data, leaving the noiseless observation unchanged~\cite{ZhengTse2002}. Consequently, recovering the geometric parameters from known probes and conveying unknown data through the same channel involve different identifiability questions. The communication analysis must account for the joint roles of the data and channel state. 

The information-theoretic study of channel capacity without \acf{csi} constitutes another line of related work. For independent Rayleigh block fading, the seminal work of Zheng and Tse characterized a signaling rank that attains the capacity pre-log, revealing a balance between spatial multiplexing and channel uncertainty \cite{ZhengTse2002}. Subsequent studies established the influence of temporal correlation and more general channel distributions \cite{Koliander2014,Ngo2021}. In particular, Ngo, Yang, and Guillaud derived the optimal \ac{dof} for generic block-fading channel matrices with finite power and finite differential entropy in the ambient matrix space \cite{Ngo2021}. However, a continuous spatially sparse channel may instead have a singular distribution in that space~\cite{Koliander2016}. Its capacity pre-log therefore requires an analysis that respects the geometric constraints on the channel realizations.

Several results further characterize the benefits of particular forms of sparsity structure. For sparse frequency-selective block fading with a fixed finite collection of possible delay supports, support uncertainty contributes no pre-log penalty by itself, while the unknown complex coefficients determine the leading learning cost \cite{KannuSchniter2011}. Angular sparsity has been exploited in blind multiuser massive-\ac{mimo} achievability \cite{ZhangYuanZhang2018}, and known transmit correlation subspaces can reduce noncoherent overhead \cite{ZhangNgoYangNosratinia2022}. Random specular components and blockwise constant phase noise provide additional examples in which the uncertainty structure affects capacity or its high-\ac{snr} expansion \cite{Godavarti2003,Durisi2012}. Nevertheless, their different assumptions about support, subspaces, and continuous parameters do not give a common rule for determining the \ac{dof} of continuous geometric channels.

Against the aforementioned background, the central remaining issue is thus how to characterize the capacity pre-log when both the geometric state and the transmitted data are unknown. We would like to highlight that counting all channel parameters only gives an incomplete description of the achievable pre-log of a candidate input distribution, because some state variations remain distinguishable in the joint output, whereas others can be absorbed into the data or an effective gain~\cite{KarzandZheng2014,ZhengTse2002}. To elaborate, projecting a steering vector onto a transmit subspace can retain some directional variations and suppress others~\cite[Chs.~2 and~8]{VanTrees2002}. Input design thus affects the achievable pre-log by determining the number of transmitted streams and the channel uncertainty visible to those streams. The remaining task is to match an achievable pre-log with an upper bound uniform over all admissible input distributions at each SNR.

\subsection{Contributions of this Paper}
To investigate the above problem, we consider a point-to-point \ac{nula}-\ac{sv} channel with a fixed number of paths~\cite{SalehValenzuela1987,Alkhateeb2014}. The model retains the continuous arrival and departure directions at both ends of the link, together with independent complex Gaussian path gains. The complete state is redrawn independently from block to block, yielding a blockwise memoryless channel whose law is known to both terminals. The known array positions are arbitrary and pairwise distinct, including equally spaced ULA geometries. The block length and array sizes remain fixed as \ac{snr} grows. The \ac{nula}-\ac{sv} model thus provides a representative setting for an explicit output-geometry calculation and a matching capacity-pre-log converse.

Our analysis connects the geometry of the received signal to the communication objective in two steps. For an input distribution fixed as \ac{snr} grows and satisfying the stated regularity conditions, the achievable pre-log per channel use is the joint-minus-conditional real information-dimension difference divided by twice the block length. Under explicit regularity conditions, these dimensions can be evaluated through Jacobian ranks \cite{WuVerdu2010,StotzBolcskei2016,Koliander2016}. This approach builds on geometric analyses of structured communication models \cite{KarzandZheng2014,ChengYu2024}. To establish the capacity pre-log, we then derive an upper bound uniform over all admissible input distributions at each \ac{snr}. In particular, distributions approaching the supremum may depend on \ac{snr} and concentrate near degenerate inputs~\cite{Ngo2021,Morgenshtern2013}. This distinction allows us to identify pre-log-optimal signaling without assuming that one fixed distribution achieves capacity at every finite \ac{snr}.

Under this setting, our main contributions are summarized as follows.
\begin{enumerate}
 \item We determine the exact capacity pre-log under the stated conditions. It equals \(1-1/T\) for \(K=1\) and \(K(1-3/(2T))\) for \(K\ge2\). We show that these values are attained by Gaussian signaling through a single antenna and through a fixed \(K\)-dimensional subarray, respectively. 
 \item We reveal the geometric origin of the noncoherent loss. For the multistream input, the \ac{aoa} contribution cancels between the joint and fixed-data output dimensions, while the complex gains and visible departure geometry determine the remaining loss.
 \item We prove a matching converse covering \ac{snr}-dependent input distributions, including mixtures approaching rank-1 matrices at multiple scales. It also bounds the \ac{dof} of reference-signal-assisted schemes. For the NULA-SV model, the second singular value controls both \ac{aod} resolvability and proximity to the rank-1 set. The resulting conditional output entropy lower bound and output-entropy upper bound are then combined to yield the complete \ac{dof} characterization.
 \item We extend the achievability analysis to general separable finite-ray channels with fixed precoders. We derive an achievable pre-log formula for Gaussian signaling and give sufficient conditions based on path response matrices and array response derivatives. We illustrate these results through \ac{upa} antenna selection and Fourier-mode selection on rectangular continuous apertures, showing how the transmit subspace controls the directional uncertainty that remains distinguishable after accounting for unknown path gains.
\end{enumerate}

The rest of this paper is organized as follows. Section~\ref{sec:model-main-result} presents the model and main result. Sections~\ref{sec:achievability} and~\ref{sec:converse} prove achievability and the converse. Section~\ref{sec:projected-geometry} discusses extensions to other array structures and some practical implications of our results, and we conclude the paper in Section~\ref{sec:conclusion}.

\subsection*{Notations}
Throughout this paper, $\rv{a}$, $\RV{a}$, $\RM{A}$, and $\RS{A}$ denote random scalars, vectors, matrices, and sets, respectively. Their deterministic counterparts are denoted by $a$, $\V{a}$, $\M{A}$, and $\Set{A}$. The same convention applies to finite multisets. The imaginary unit is denoted as $\jmathu=\sqrt{-1}$. The zero matrix and $n$-dimensional identity matrix are denoted by $\M{0}$, and $\M{I}_n$, with dimensions omitted when clear from the context.

The superscripts $T$, $*$, $H$, and $\dagger$ denote transpose, complex conjugation, Hermitian transpose, and the Moore-Penrose pseudoinverse, respectively. The notation $\diag(\V{a})$ denotes the diagonal matrix with diagonal entries given by $\V{a}$, and $\otimes$ denotes the Kronecker product. The operators $\operatorname{vec}$, $\Real$, and $\Imag$ denote column stacking, the real part, and the imaginary part, respectively. The realification $\realify(\M{Z})=(\Real\operatorname{vec}\M{Z},  \Imag\operatorname{vec}\M{Z})$ collects the real and imaginary parts of $\operatorname{vec}\M{Z}$ into a real vector. The notation $\|\V{a}\|_2$ denotes the Euclidean norm, $\|\M{A}\|_2$ the spectral norm, and $\|\M{A}\|_F$ the Frobenius norm. The singular values $\sigma_i(\M{A})$ are ordered nonincreasingly. The notation $\dist_F(\M{A},\Set{A})$ denotes distance to a set in the Frobenius norm. The operators $\rank$, $\ker$, and $\operatorname{col}$ denote rank, kernel, and column space, respectively. For Hermitian matrices, $\M{A}\succeq\M{B}$ means that $\M{A}-\M{B}$ is positive semidefinite, with $\succ$ denoting positive definiteness.

The operators $\E{\cdot}$, $\Prob{\cdot}$, and $\Var{\cdot}$ denote expectation, probability, and variance, respectively. The notation $P_{\RV{x}}$ denotes the probability distribution of $\RV{x}$, and $\perp$ between random quantities denotes independence. The symbols $H(\cdot)$, $h(\cdot)$, and $I(\cdot;\cdot)$ denote discrete entropy, differential entropy, and mutual information, respectively. $H_2(p)$ denotes the entropy of a Bernoulli random variable $\rv{x}$ with $\mathbb{P}\{\rv{x}=0\}=p$. The floor operation $\lfloor\cdot\rfloor$ is applied componentwise to vectors. All logarithms are natural, $\log^+x=\max\{0,\log x\}$ for $x>0$, and $\log^-x=\log^+x-\log x$. The symbols $O(\cdot)$, $o(\cdot)$, and $\Theta(\cdot)$ have their usual asymptotic meanings, with the limiting variable specified by the context. Jacobian ranks, manifold dimensions, and information dimensions are taken over $\R$ after realification. The notation $D_{\V{u}}f$ denotes the Jacobian with respect to $\V{u}$, with other arguments held fixed.

\section{Channel Model and the Main Result}
\label{sec:model-main-result}
We consider fixed linear arrays with arbitrary distinct element positions~\cite[Ch.~2]{VanTrees2002}. For $\nu\in\{r,t\}$, let $x_{\nu,1},\ldots,x_{\nu,N_\nu}$ denote the known positions in wavelengths. Their values do not change with SNR. The steering vector is
\begin{equation}
 \V{a}_\nu(u)=\frac1{\sqrt{N_\nu}}
 [e^{\jmathu2\pi x_{\nu,1}u},\ldots,
  e^{\jmathu2\pi x_{\nu,N_\nu}u}]^T,
 \qquad \nu\in\{r,t\}.
 \label{eq:nula-steering}
\end{equation}
We refer to this general geometry as the NULA model, allowing uniform spacing as a special case. In particular, a \ac{ula} is obtained by $x_{\nu,n}=(n-1)d_\nu$ for a fixed spacing $d_\nu>0$ in wavelengths. The half-wavelength ULA corresponds to $d_\nu=1/2$.

We choose the origin of each array so that $x_{r,1}=x_{t,1}=0$. Indeed, translating either origin changes each ray response only by a scalar phase, which can be absorbed into its complex gain. Conditional on all directions, the independent circular Gaussian gains in \cref{ass:ula} retain their joint distribution after this phase change. They therefore remain independent of the directions. This normalization does not restrict the array geometry.

The physical rays have no intrinsic ordering. We therefore represent the geometry in block $b$ by the random finite multiset
\begin{equation}
 \RS{V}_b
 =\bigl\{\!\bigl\{
 (\rv{u}_{r,b,k},\rv{u}_{t,b,k}):k=1,\ldots,K
 \bigr\}\!\bigr\},
 \label{eq:unordered-geometry}
\end{equation}
and the complete marked ray state by
\begin{equation}
 \RS{S}_b
 =\bigl\{\!\bigl\{
 (\rv{u}_{r,b,k},\rv{u}_{t,b,k},\rv{\alpha}_{b,k}):
 k=1,\ldots,K
 \bigr\}\!\bigr\}.
 \label{eq:unordered-ray-state}
\end{equation}
The multiset notation preserves the \ac{aoa}-\ac{aod}-gain association within each ray. Under the continuous prior imposed below, collisions occur with probability zero, thus the multiset is an ordinary finite set almost surely.

For a deterministic realization $\Set{S}=\{\!\{(u_{r,k},u_{t,k},\alpha_k)\}_{k=1}^K\!\}$,
define the deterministic channel map
\begin{equation}
 \M{H}(\Set{S})
=\sum_{(u_r,u_t,\alpha)\in\Set{S}}
 \alpha\V{a}_r(u_r)\V{a}_t(u_t)^H .
 \label{eq:deterministic-channel-map}
\end{equation}
The random channel matrix in block $b$ is
\begin{equation}
 \RM{H}_b=\M{H}(\RS{S}_b).
\label{eq:sv-channel}
\end{equation}
In what follows, if it is clear from the context, we omit the subscript $b$.

Under the aforementioned setting, we make the following assumptions.
\begin{assumption}[NULA-SV model]
\label{ass:ula}
\mbox{}\par
\begin{enumerate}[label=(U\arabic*)]
 \item For each $\nu\in\{r,t\}$, the element positions in  \eqref{eq:nula-steering} are fixed, known, and pairwise distinct. Fix a closed direction-cosine interval
 \[
  \Set{U}_\nu
  =
  [\underline u_\nu,\overline u_\nu]
  \subseteq[-1,1],
  \qquad
  0<\overline u_\nu-\underline u_\nu\le2.
 \]
 The $2K$ directional variables $\{\rv{u}_{r,k},\rv{u}_{t,k}:k=1,\ldots,K\}$ are mutually independent and satisfy $\Prob{\rv{u}_{\nu,k}\in\Set{U}_\nu}=1$. For each $(\nu,k)$, the random variable $\rv{u}_{\nu,k}$ has a density  $f_{\rv{u}_{\nu,k}}$ with respect to Lebesgue measure on $\Set{U}_\nu$, where
\[
  f_{\rv{u}_{\nu,k}}\in C(\Set{U}_\nu),
  \qquad
  \int_{\Set{U}_\nu}f_{\rv{u}_{\nu,k}}(u)\,du=1.
 \]
 There exist constants $0<f_{\min}\le f_{\max}<\infty$, independent of  $(\nu,k)$, such that
 \[
  f_{\min}
  \le f_{\rv{u}_{\nu,k}}(u)
  \le f_{\max},
  \qquad
  u\in\Set{U}_\nu.
 \]
 Equivalently, the joint density of all directional variables factors as
 \[
  f_{\RV{u}_r,\RV{u}_t}(\V{u}_r,\V{u}_t)
  =
  \prod_{k=1}^{K}
  f_{\rv{u}_{r,k}}(u_{r,k})
  f_{\rv{u}_{t,k}}(u_{t,k}).
 \]
 \item The gains satisfy  $\rv{\alpha}_k\stackrel{\rm iid}{\sim}\CN(0,\sigma_\alpha^2)$, $0<\sigma_\alpha^2<\infty$, and are independent of the directions and the input.
 \item $K$, $T$, $N_t$ and $N_r$ satisfy
 \[
  N_r\ge K+1,\qquad N_t\ge\max\{K,2\},\qquad T\ge K .
 \]
\end{enumerate}
\end{assumption}

Although the physical ray state in \eqref{eq:unordered-ray-state} is unordered, \cref{ass:ula} specifies its distribution through the indexed random variables $\{(\rv{u}_{r,k},\rv{u}_{t,k},\rv{\alpha}_k):k=1,\ldots,K\}$. We use these indices only as auxiliary Euclidean coordinates throughout the analysis. Accordingly, we write $\V{u}_r$, $\V{u}_t$, and $\V{\alpha}$ for deterministic coordinate vectors under these auxiliary indices. We then write
\begin{align*}
 \M{A}_r(\V{u}_r)
 &=[\V{a}_r(u_{r,1}),\ldots,\V{a}_r(u_{r,K})],~~\M{A}_t(\V{u}_t)
 =[\V{a}_t(u_{t,1}),\ldots,\V{a}_t(u_{t,K})],~~\M{D}(\V{\alpha})
 =\diag(\alpha_1,\ldots,\alpha_K),
\end{align*}
so that
\begin{equation}
 \M{H}(\V{u}_r,\V{u}_t,\V{\alpha})
 =\M{A}_r(\V{u}_r)\M{D}(\V{\alpha})
 \M{A}_t(\V{u}_t)^H.
 \label{eq:ordered-channel-chart}
\end{equation}
The transmitter and the receiver know the model and the \textit{a priori} distributions but not the realization of $\RS{S}$ in the current block. Omitting the block index, the communication observation is
\begin{equation}
 \RM{Y}_\rho=\sqrt\rho\,\RM{H}\RM{X}+\RM{N},
\label{eq:channel}
\end{equation}
The noise entries satisfy \(\rv{n}_{it}\stackrel{\rm iid}{\sim}\CN(0,1)\), independently of the complete channel state and the input. The input obeys \(\E{\|\RM{X}\|_F^2}\le T\) and is independent of the channel state. We define\footnote{Throughout this paper, all logarithms are natural, and rates are measured in nats per channel use.}
\begin{align*}
 C_{\rm blind}(\rho)
 &=\frac1T
 \sup_{P_{\RM{X}}:\,\E{\|\RM{X}\|_F^2}\le T}
 I(\RM{X};\RM{Y}_\rho),\\
 d_{\rm blind}^{\rm cap}
 &=\limsup_{\rho\to\infty}
 \frac{C_{\rm blind}(\rho)}{\log\rho}.
\end{align*}
Formally, the state blocks \(\{\RS{S}_b\}_{b\ge1}\) are independent and identically distributed (i.i.d.), and the noise blocks \(\{\RM{N}_b\}_{b\ge1}\) are i.i.d. The two sequences are mutually independent and independent of the message and the encoder's private information. The input blocks need not be independent of one another. Consequently, the channel is blockwise memoryless, with transition distribution
\[
 P_{\RM{Y}^{1:B}\mid\RM{X}^{1:B}}
 =\prod_{b=1}^{B}P_{\RM{Y}_b\mid\RM{X}_b}.
\]
Although coding may span arbitrarily many blocks, capacity is therefore given by the supremum of the single-block mutual information. The channel model is portrayed in Fig.~\ref{fig:nula-channel-model}.

\begin{figure}[t]
 \centering
 \resizebox{0.9\textwidth}{!}{\input{nula_channel_tikz.tex}}
 \vspace{-3mm}
 \caption{Spatially sparse NULA-SV channel and its blockwise state evolution.}
 \label{fig:nula-channel-model}
\end{figure}

\begin{remark}
If the angles are shared across several gain blocks, the transition kernel no longer factors and the present theorem does not apply. In common sparse MIMO channel models, angular or beam subspaces vary more slowly than the small-scale complex gains \cite{Anjinappa2020,VaChoiHeath2017}.
\end{remark}

\subsection{Main Result and Interpretation}
Our main result is as follows, under \cref{ass:ula}.

\begin{theorem}[\ac{dof} of the \ac{nula}-\ac{sv} channel without \ac{csi}]
\label{thm:capacity}
Consider the channel in \eqref{eq:sv-channel}--\eqref{eq:channel} under \cref{ass:ula} and the average-power constraint \(\E{\|\RM{X}\|_F^2}\le T\). The limit \(\lim_{\rho\to\infty}C_{\rm blind}(\rho)/\log\rho\) exists and equals
\begin{equation}
 d_{\rm blind}^{\rm cap}
 =
 \begin{cases}
 1-\dfrac1T,&K=1,\\[4pt]
 K\left(1-\dfrac{3}{2T}\right),&K\ge2.
 \end{cases}
 \label{eq:capacity}
\end{equation}
\begin{IEEEproof}
See Appendix~\ref{app:proof-thm-capacity}.
\end{IEEEproof}
\end{theorem}

For \(K\ge2\), the \(K\) receive-direction dimensions appear in both the joint and fixed-data images and cancel. This cancellation reflects the $K$ locally distinguishable \ac{aoa} variations in both output maps on their probability-1 regular sets. The remaining loss consists of \(2K\) real gain dimensions and \(K\) \ac{aod} dimensions visible to the precoded input.

It is worth noting that there are single-antenna input distributions attaining \(1-1/T\) for every \(K\). At \(K=T=2\), it already attains the capacity pre-log \(1/2\), as the two-stream Gaussian input does. Therefore, pre-log optimality does not in general require a unique input rank. 

\section{Achievability via Quotient Ranks}
\label{sec:achievability}

\subsection{Proof Architecture}
For $K\ge2$, we choose the fixed Gaussian input $\RM{X}=\M{B}_K\RM{S}$. Write $\RM{Z}=\RM{H}\M{B}_K\RM{S}$ for the noiseless output. Independence of the additive noise gives
\[
 I(\RM{S};\RM{Y}_\rho)
 =I(\RM{Z};\RM{Y}_\rho)
  -I(\RM{Z};\RM{Y}_\rho\mid\RM{S}).
\]
The small-noise argument below turns this identity into
\[
 \lim_{\rho\to\infty}
 \frac{I(\RM{S};\RM{Y}_\rho)}{\log\rho}
 =\frac12\bigl[d_I(\RM{Z})-d_I(\RM{Z}\mid\RM{S})\bigr],
\]
where $d_I(\cdot)$ denotes the information dimension, as will be defined later. We show that the joint real Jacobian rank of the input data and the unknown parameters is $2KT+K$, while the real Jacobian rank conditioned on a fixed input is $4K$. By showing that the regularity conditions including absolute continuity, constant rank on probability-1 open sets, and a finite output second moment are satisfied by the chosen input, we then show that the ranks equal the two information dimensions, respectively.

\subsection{R\'enyi Information Dimension and Its Conditional Version}
Information dimension, introduced by R\'enyi, measures the effective real dimension occupied by a probability distribution through the growth rate of the entropy of a fine quantization \cite{Renyi1959,WuVerdu2010}. For a deterministic vector $\V{x}\in\R^n$ and an integer $m\ge1$, define componentwise quantization by
\[
 \langle\V{x}\rangle_m=\frac{\lfloor m\V{x}\rfloor}{m}.
\]
Since multiplication by a deterministic nonzero constant does not change discrete entropy,
$H(\langle\RV{x}\rangle_m)=H(\lfloor m\RV{x}\rfloor)$.

\begin{definition}[Lower and upper R\'enyi information dimensions]
If $H(\lfloor\RV{x}\rfloor)<\infty$, define
\begin{subequations}
\begin{align}
 \underline d_I(\RV{x})
 &=\liminf_{m\to\infty}
 \frac{H(\quant{\RV{x}}{m})}{\log m},\\
 \overline d_I(\RV{x})
 &=\limsup_{m\to\infty}
 \frac{H(\quant{\RV{x}}{m})}{\log m}.
\end{align}
\end{subequations}
If the two quantities coincide, their common value is denoted by $d_I(\RV{x})$.
\end{definition}

Under the finite-entropy condition above, a discrete distribution has information dimension zero and an absolutely continuous distribution on $\R^r$ has information dimension $r$ \cite{Renyi1959,WuVerdu2010}.

We adopt the conditional information dimension of Geiger and Koch \cite[Definition~1]{GeigerKoch2019}:
\begin{subequations}
\begin{align}
 \underline d_I(\RV{z}\mid\RV{\theta})
 &=\liminf_{m\to\infty}
 \frac{H(\langle \RV{z}\rangle_m\mid\RV{\theta})}{\log m},\\
 \overline d_I(\RV{z}\mid\RV{\theta})
 &=\limsup_{m\to\infty}
 \frac{H(\langle\RV{z}\rangle_m\mid\RV{\theta})}{\log m}.
 \label{eq:conditional-information-dimension}
\end{align}
\end{subequations}
When the two quantities coincide, their common value is denoted by $d_I(\RV{z}\mid\RV{\theta})$. If $H(\langle\RV{z}\rangle_1)<\infty$ and the pointwise conditional information dimension $d_I(P_{\RV{z}\mid\RV{\theta}=\V{\theta}})$ exists for $P_{\RV{\theta}}$-almost every $\V{\theta}$, Lemma~2 of \cite{GeigerKoch2019} gives
\begin{equation}
 d_I(\RV{z}\mid\RV{\theta})
 =\int d_I(P_{\RV{z}\mid\RV{\theta}=\V{\theta}})
 \,P_{\RV{\theta}}(d\V{\theta}).
 \label{eq:conditional-information-dimension-average}
\end{equation}

\subsection{From Constant Rank to Information Dimension}
The following conditions suffice to transfer Jacobian ranks to information dimensions. The data-bearing variable is denoted by $\RV{\theta}$ and the nuisance variable by $\RV{u}$ throughout this subsection.

\begin{definition}[Information-dimension transfer conditions]
\label{ass:regular}
Let $\RV{\theta}$ and $\RV{u}$ be finite-dimensional real random vectors, representing the data and nuisance parameters, respectively, and let $\RV{z}=f(\RV{\theta},\RV{u})$. We impose the following conditions.
\begin{enumerate}[label=(A\arabic*)]
 \item The distribution of $(\RV{\theta},\RV{u})$ is absolutely continuous except for a singular set with probability zero.
 \item There exists an open Borel set $\Set{O}\subseteq\R^{\dim\RV{\theta}+\dim\RV{u}}$ such that $\Prob{(\RV{\theta},\RV{u})\in\Set{O}}=1$. The map \(f\) is \(C^1\) on \(\Set{O}\), and
\[
 \rank Df(\V{\theta},\V{u})=r,
 \qquad
 (\V{\theta},\V{u})\in\Set{O}.
\]
For \(P_{\RV{\theta}}\)-almost every \(\V{\theta}\), define $\Set{O}_{\V{\theta}}
 =
 \{\V{u}:(\V{\theta},\V{u})\in\Set{O}\}$. Then $
 P_{\RV{u}\mid\RV{\theta}=\V{\theta}}
 (\Set{O}_{\V{\theta}})=1$ and
\[
 \rank D_{\V{u}}f(\V{\theta},\V{u})=r_u,
 \qquad
 \V{u}\in\Set{O}_{\V{\theta}}.
\]
 \item $\E{\|\RV{z}\|^2}<\infty$.
\end{enumerate}
\end{definition}

As we shall show later, the transfer conditions (A1)--(A3) are not additional assumptions, since they can be derived from the \ac{nula}-\ac{sv} model (U1)--(U3) and the input distribution specified later. For the moment, let us assume that they are true, from which we may obtain the following lemma.

\begin{lemma}[Information dimension of a rectifiable image]
\label{lem:smooth-image}
Under \cref{ass:regular},
\[
 d_I(\RV{z})=r,\qquad
 d_I(\RV{z}\mid\RV{\theta})=r_u.
\]
\begin{IEEEproof}
See Appendix~\ref{app:proof-lem-smooth-image}.
\end{IEEEproof}
\end{lemma}

\Cref{lem:smooth-image} formally shows that the information dimensions equal the Jacobian ranks under (A1)--(A3). In its spirit, \cref{lem:smooth-image} is similar to the Theorem 14 in \cite{Koliander2016}. However, the lack of identifiability of parameters technically prevents us from reusing that result, thereby \cref{lem:smooth-image} is necessary. We next convert the difference of the Jacobian ranks into the mutual information pre-log.

\subsection{From Information Dimensions to an Achievable Pre-Log}
The seminal work of Wu, Shamai, and Verd\'u showed that the DoF of a scalar interference channel can be characterized through differences of information dimensions: the contribution of each desired signal is measured by the information dimension of the signal-plus-interference observation minus that of the interference alone \cite{WuShamaiVerdu2015}. This structure suggests an analogous decomposition for the problem considered in this paper. Here, the subtracted term is instead the output dimension generated by the unknown channel state when the input is held fixed. The remainder of this subsection formalizes this principle for the NULA-SV model. We first give a slightly specialized version of the classical pre-log result as follows.

\begin{lemma}[High-SNR pre-log]
\label{lem:small-noise}
Let $\RV{x}\in\R^n$ be independent of $\RV{n}\sim\mathcal N(\V{0},\M{I}_n)$, and suppose $H(\lfloor\RV{x}\rfloor)<\infty$. If $d_I(\RV{x})$ exists, then
\begin{equation}
 I(\RV{x};\sqrt\rho\RV{x}+\RV{n})
 =\frac{d_I(\RV{x})}2\log\rho+o(\log\rho).
 \label{eq:small-noise}
\end{equation}
Any fixed positive-definite noise covariance yields the same pre-log. 
\begin{IEEEproof}
Using \cite[Theorem~6]{StotzBolcskei2016} and the existence of $d_I(\RV{x})$, we have
\[
 \limsup_{\rho\to\infty}
 \frac{I(\RV{x};\sqrt\rho\,\RV{x}+\RV{n})}
 {\frac12\log\rho}
 =d_I(\RV{x}).
\]
To obtain the corresponding lower limit, set $m=\lfloor\sqrt\rho\rfloor$. By monotonicity of Gaussian channel mutual information in SNR and the quantization lower bound proved in \cref{lem:quantization-lower},
\[
 \begin{aligned}
 I(\RV{x};\sqrt\rho\,\RV{x}+\RV{n})
 &\ge I(\RV{x};m\RV{x}+\RV{n}) \ge H(\quant{\RV{x}}{m})-C_n,
 \end{aligned}
\]
where $C_n$ only depends on $n$. Since $H(\quant{\RV{x}}{m})/\log m\to d_I(\RV{x})$ and $\log m/(\frac12\log\rho)\to1$, it follows that
\[
 \liminf_{\rho\to\infty}
 \frac{I(\RV{x};\sqrt\rho\,\RV{x}+\RV{n})}
 {\frac12\log\rho}
 \ge d_I(\RV{x}).
\]
Combining the two bounds proves \eqref{eq:small-noise}. For any fixed positive-definite noise covariance, the same pre-log follows by whitening, since invertible linear transformations preserve information dimension and the entropy finiteness of the quantized variable~\cite[Lemma~2]{StotzBolcskei2016}.
\end{IEEEproof}
\end{lemma}

With \cref{lem:small-noise}, we obtain the following result.
\begin{proposition}[High-SNR quotient-rank formula with nuisance parameters]
\label{prop:quotient-rank}
Let $\RV{z}=f(\RV{\theta},\RV{u})$, where $\RV{\theta}$ is the data-bearing variable and $\RV{u}$ is the nuisance variable, and let $\RV{n}\sim\mathcal N(\V{0},\M{\Sigma})$ be independent of $(\RV{\theta},\RV{u})$, where $\M{\Sigma}\succ0$ is deterministic and does not depend on $\rho$. Set $\RV{y}_\rho=\sqrt\rho\RV{z}+\RV{n}$. Under \cref{ass:regular},
\begin{equation}
 I(\RV{\theta};\RV{y}_\rho)
 =\frac{r-r_u}{2}\log\rho+o(\log\rho).
 \label{eq:nuisance}
\end{equation}
\begin{IEEEproof}
See Appendix~\ref{app:proof-cor-quotient-rank}.
\end{IEEEproof}
\end{proposition}

The rank difference $r-r_u$ is the quotient dimension between the joint noiseless image and the image that remains when the data-bearing variable is fixed, which determines the achievable pre-log under the stated regularity conditions. The remainder of the paper computes this quotient for the NULA-SV channel and then proves that no input distribution can exceed it.

\subsection{The Jacobian Rank of the Gaussian Input}
The physical state in \eqref{eq:unordered-ray-state} is unchanged under simultaneous permutations of the indexed path parameters. Outside the path-collision set, this produces only a finite permutation ambiguity in the indexed Euclidean representation. Such a discrete ambiguity does not change local Jacobian ranks. Moreover, revealing or suppressing the corresponding permutation costs at most $\log K!$ nats and hence does not affect an information dimension or the corresponding \ac{dof}. We therefore compute all Jacobian ranks directly in the indexed coordinates specified in \cref{ass:ula}.

\begin{lemma}[Almost-everywhere full rank condition]
\label{lem:confluent}
Let $\lambda_1,\ldots,\lambda_N$ be fixed distinct real numbers, and denote $\V{a}(u)=N^{-1/2}[e^{\jmathu\lambda_1u},\ldots, e^{\jmathu\lambda_Nu}]^T$ and $\M{A}(\V{u})=[\V{a}(u_1),\ldots,\V{a}(u_K)]$. On any fixed nondegenerate compact interval of directional cosines, $\M{A}(\V{u})$ has rank $K$ for almost every $\V{u}$ when $N\ge K$. When $N\ge K+1$, the matrices $[\M{A}(\V{u}),\dot{\V{a}}(u_k)]$, $k=1,\ldots,K$, have rank $K+1$ for almost every $\V{u}$ simultaneously. 
\begin{IEEEproof}
See Appendix~\ref{app:proof-lem-confluent}.
\end{IEEEproof}
\end{lemma}

The rank condition of the matrix $[\M{A}(\V{u}),\dot{\V{a}}(u_k)]$ is the minimal condition on the array manifold that will be used below. Note that distinction of directions does not ensure deterministically the rank condition for an arbitrary \ac{nula}. For example, positions $(0,1,3)$ in wavelengths and directional cosines $(-1/3,0,1/3)$ give, up to normalization,
\[
 \begin{bmatrix}
 1&1&1\\
 \omega^*&1&\omega\\
 1&1&1
 \end{bmatrix},
 \qquad \omega=e^{\jmathu2\pi/3},
\]
which has rank two although every pair of columns is independent. Fortunately, according to \cref{lem:confluent}, such events have probability zero under (U1). This enables us to obtain the following channel state rank result.

\begin{lemma}[Local rank of the SV channel parameterization]
\label{lem:channel-rank}
Under \cref{ass:ula}, exclude zero gains and the null sets where $\rank\M{A}_t<K$ or $\rank[\M{A}_r,\dot{\V{a}}_r(u_{r,k})]<K+1$ for some $k$. Then the real Jacobian rank of $(\V{u}_r,\V{u}_t,\V{\alpha})
 \longmapsto \M{H}(\V{u}_r,\V{u}_t,\V{\alpha})$ is $4K$.
\begin{IEEEproof}
See Appendix~\ref{app:proof-lem-channel-rank}.
\end{IEEEproof}
\end{lemma}

In the next two lemmata, we extend this channel state rank result to noiseless channel outputs. To this end, we choose explicitly the candidate input distribution for showing the achievability. In particular, for $K\geq 2$, define the semi-unitary antenna selection matrix $\M{B}_K=[\V{e}_1,\ldots,\V{e}_K]\in\C^{N_t\times K}$. The input is fixed to the Gaussian matrix
\begin{equation}
  \begin{aligned}
   \RM{X}&=\M{B}_K\RM{S},
   \quad\RM{S}=[\rv{s}_{ij}]\in\C^{K\times T},~~\rv{s}_{ij}\stackrel{\rm iid}{\sim}\CN(0,1/K),
   \quad\RM{S}\perp(\RV{u}_r,\RV{u}_t,\RV{\alpha},\RM{N}).
  \end{aligned}
  \label{eq:gaussian-ach-input}
\end{equation}
Consequently,
 \[
  \E{\|\RM{X}\|_F^2}
  =\E{\|\RM{S}\|_F^2}
  =\sum_{i=1}^{K}\sum_{t=1}^{T}
  \E{|\rv{s}_{it}|^2}=T.
 \]

\begin{lemma}[Local rank of the unconditional output]
\label{lem:joint-output-rank}
For $K\ge2$, denote $\M{X}=\M{B}_K\M{S}$ and $\M{Z}=\M{H}(\V{u}_r,\V{u}_t,\V{\alpha})\M{B}_K\M{S}$, where $\M{S}\in\mathbb{C}^{K\times T}$. At every point with $\rank\M{S}=K$, nonzero gains, $\rank(\M{B}_K^H\M{A}_t)=K$, and $\rank[\M{A}_r,\dot{\V{a}}_r(u_{r,k})]=K+1$ for all $k$, we have
\[
 \rank_\R
 D_{(\V{u}_r,\V{u}_t,\V{\alpha},\M{S})}\M{Z}
 =2KT+K.
\]
\begin{IEEEproof}
See Appendix~\ref{app:proof-lem-joint-output-rank}.
\end{IEEEproof}
\end{lemma}

As detailed in Appendix~\ref{app:proof-lem-joint-output-rank}, the core idea of Lemma~\ref{lem:joint-output-rank} is that when $\M{S}$ is unknown, the triple $(\V{u}_t,\V{\alpha},\M{S})$ is equivalent to a (real) $2KT$-dimensional object under the available observations. Since the triple itself is $(2KT+3K)$-dimensional, we see that there is a $3K$-dimensional fiber that is unidentifiable by the observations, which corresponds locally to the AoDs and the complex gains. In the next lemma, we show that fixing an input $\M{S}$ with full row rank eliminates this local ambiguity on the regular set.

\begin{lemma}[Local rank of the conditional output]
\label{lem:fixed-input-rank}
For $K\ge2$ and every full-row-rank $\M{S}\in\C^{K\times T}$,
\[
 \rank_\R D_{(\V{u}_r,\V{u}_t,\V{\alpha})}
 [\M{H}(\V{u}_r,\V{u}_t,\V{\alpha})\M{B}_K\M{S}]
 =4K
\]
whenever all gains are nonzero, $\rank(\M{B}_K^H\M{A}_t)=K$, and $\rank[\M{A}_r,\dot{\V{a}}_r(u_{r,k})]=K+1$ for every $k$. These conditions hold with probability one under \cref{ass:ula}.
\begin{IEEEproof}
See Appendix~\ref{app:proof-lem-fixed-input-rank}.
\end{IEEEproof}
\end{lemma}

Next, we resolve the remaining regularity conditions.

\begin{corollary}[probability-1 regular-chart cover]
\label{cor:regular-chart-cover}
Let $K\ge2$. Define the regular set of channel parameters
\[
 \Set{O}_{\rm ch}
 =
 \left\{
 \begin{array}{l}
 u_{\nu,k}\in\operatorname{int}\Set{U}_\nu
 \quad(\nu\in\{r,t\},\ k=1,\ldots,K),\\
 u_{r,i}\ne u_{r,j},\
 u_{t,i}\ne u_{t,j}\quad(i\ne j),\\
 \alpha_k\ne0\quad(k=1,\ldots,K),\\
 \rank(\M{B}_K^H\M{A}_t)=K,\\
 \rank[\M{A}_r,\dot{\V{a}}_r(u_{r,k})]=K+1,~(k=1,\ldots,K)
 \end{array}
 \right\}.
\]
Also define the joint regular set $\Set{O}_{\rm joint}
 =\Set{O}_{\rm ch}
 \times
 \{\M{S}\in\C^{K\times T}:\rank\M{S}=K\}$. Under (U1)--(U3) of \cref{ass:ula}, we have $\Prob{(\RV{u}_r,\RV{u}_t,\RV{\alpha})\in\Set{O}_{\rm ch}}=1$. If $\RM{S}$ is independent of the channel parameters and has an absolutely continuous distribution on $\C^{K\times T}$, then $\Prob{(\RV{u}_r,\RV{u}_t,\RV{\alpha},\RM{S})
 \in\Set{O}_{\rm joint}}=1$. On $\Set{O}_{\rm joint}$, the joint map $(\V{u}_r,\V{u}_t,\V{\alpha},\M{S})
\longmapsto \M{H}(\V{u}_r,\V{u}_t,\V{\alpha})\M{B}_K\M{S}$ is $C^1$ and has constant real Jacobian rank $2KT+K$. For every fixed $\M{S}$ with full row rank, the conditional map $(\V{u}_r,\V{u}_t,\V{\alpha})
\longmapsto \M{H}(\V{u}_r,\V{u}_t,\V{\alpha})\M{B}_K\M{S}$ on $\Set{O}_{\rm ch}$ is $C^1$ and has constant real Jacobian rank $4K$. Consequently, the joint and conditional maps satisfy condition (A2) of \cref{ass:regular}.

\begin{IEEEproof}
See Appendix~\ref{app:proof-cor-regular-chart-cover}.
\end{IEEEproof}
\end{corollary}

\begin{lemma}[Regularity of the Gaussian input]
\label{lem:small-ball-ui}
Let $K\ge2$ and let $(\RM{X},\RM{S})$ be given by \eqref{eq:gaussian-ach-input}. Under \cref{ass:ula}, the map $(\V{u}_r,\V{u}_t,\V{\alpha},\M{S})
 \longmapsto \M{H}(\V{u}_r,\V{u}_t,\V{\alpha})\M{B}_K\M{S}$ satisfies (A1)--(A3) of \cref{ass:regular}, with the data variable $\RV{\theta}=\realify(\RM{S})$ and nuisance variable $\RV{u}=(\RV{u}_r,\RV{u}_t,\realify(\RV{\alpha}))$.
\begin{IEEEproof}
See Appendix~\ref{app:proof-lem-small-ball-ui}.
\end{IEEEproof}
\end{lemma}

Now, by collecting all the aforementioned results, we arrive at the following $K\geq 2$ achievability result.
\begin{proposition}[Achievable pre-log of the Gaussian input]
\label{prop:full-ach}
For $K\ge2$, under \cref{ass:ula}, the Gaussian input in \eqref{eq:gaussian-ach-input} achieves
\begin{equation}
 d_{\rm Gauss}
 =K\left(1-\frac{3}{2T}\right).
 \label{eq:full-ach}
\end{equation}
\begin{IEEEproof}
See Appendix~\ref{app:proof-prop-full-ach}.
\end{IEEEproof}
\end{proposition}

\cref{prop:full-ach} is not yet a capacity statement, because an optimizing input may be supported (essentially) on a lower-rank subset, or have a varying support as the SNR grows. 

Next, we consider the $K=1$ case by investigating the single-antenna transmit strategy. We would like to highlight that this strategy can in fact be extended to every $K\ge1$. In particular, it recovers the classical single-input block-fading pre-log $1-1/T$ \cite{ZhengTse2002} and is sufficient for the $K=1$ lower bound.

\begin{proposition}[Single-stream achievability]
\label{prop:single-beam}
Under \cref{ass:ula}, for every $K\ge1$, the fixed input $\RM{X}=\V{e}_1\RV{s}^{T}$ with $\RV{s}\sim\CN(\V{0},\M{I}_T)$ independent of the channel and noise achieves a per-channel-use pre-log of at least $1-1/T$. Consequently,
\[
 d_{\rm blind}^{\rm cap}\ge1-\frac1T.
\]
\begin{IEEEproof}
See Appendix~\ref{app:proof-prop-single-beam}.
\end{IEEEproof}
\end{proposition}

Intuitively, selecting the first transmit and receive antennas removes both AoA and AoD from the scalar observation (since an antenna at a fixed position carry a constant phase), leaving only one complex fading coefficient per block.

\section{Converse via Genie-aided Bounds}\label{sec:converse}
The preceding section establishes achievability for fixed input distributions using information dimension. However, a capacity converse must also cover input distributions that vary with SNR. In this section, we obtain the required uniform bounds by introducing genie-aided observations, treating $K=1$ separately from $K\ge2$.

Define the random pathwise channel matrix
\begin{equation}\label{random_pathwise}
 \RM{G}_{\rm path}:=\M{D}(\RV{\alpha})\M{A}_t(\RV{u}_t)^H.
\end{equation}
The pathwise genie provides the following physically unobservable observation:
\begin{equation}
 \begin{aligned}
  \RM{Z}_\rho
  &=
   \sqrt\rho\,\RM{G}_{\rm path}\RM{X}+\RM{N}_0,~~\RM{N}_0
  \sim\CN(\M{0},K^{-1}\M{I}_K\otimes \M{I}_T),
 \end{aligned}
 \label{eq:pathwise-genie}
\end{equation}
where $\RM{N}_0$ is independent of the input and channel parameters. Next we show that the pathwise genie indeed constitutes an upper bound for the achievable mutual information.

\begin{proposition}[Pathwise-genie reduction]
\label{prop:pathwise-genie}
For the original observation $\RM{Y}_\rho$ in \eqref{eq:channel} and the pathwise observation in \eqref{eq:pathwise-genie}, there exists a stochastic kernel $Q_\rho$ such that $P_{\RM{Y}_\rho\mid\RM{X}} = P_{\RM{Z}_\rho\mid\RM{X}}Q_\rho$. Equivalently, one can construct an observation $\widetilde{\RM{Y}}$ with $P_{\widetilde{\RM{Y}}\mid\RM{X}}=P_{\RM{Y}_\rho\mid\RM{X}}$ such that $\RM{X}\longrightarrow\RM{Z}_\rho\longrightarrow\widetilde{\RM{Y}}$ is a Markov chain. Consequently, every admissible input distribution satisfies~\cite[Ch.~3]{PolyanskiyWu2025}
\begin{equation}
 I(\RM{X};\RM{Y}_\rho)
 \le
 I(\RM{X};\RM{Z}_\rho).
 \label{eq:pathwise-data-processing}
\end{equation}
\begin{IEEEproof}
See Appendix~\ref{app:proof-prop-pathwise-genie}.
\end{IEEEproof}
\end{proposition}

\subsection{The \texorpdfstring{$K=1$}{} Case}
In the $K=1$ case, we use an additional geometric genie that reveals the complete geometry
$\RS{V}=(\rv{u}_r,\rv{u}_t)$ to the receiver.
\begin{proposition}[Single-path geometric genie converse]
\label{prop:geom-genie}
For $K=1$, under \cref{ass:ula},
\begin{equation}
 d_{\rm blind}^{\rm cap}
 \le 1-\frac1T.
 \label{eq:geom-genie}
\end{equation}
\begin{IEEEproof}
See Appendix~\ref{app:proof-thm-geom-genie}.
\end{IEEEproof}
\end{proposition}

Given the revealed geometry, projection onto the receive steering vector is a sufficient statistic and reduces the channel to the classical scalar noncoherent block fading model. Together with \cref{prop:single-beam}, this gives the exact $K=1$ \ac{dof}.

\begin{corollary}[Exact single-path DoF]
\label{cor:single-path-capacity}
\[
 d_{\rm blind}^{\rm cap}=1-\frac1T,\qquad K=1.
\]
\end{corollary}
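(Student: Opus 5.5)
The corollary follows by sandwiching $d_{\rm blind}^{\rm cap}$ between two bounds already available for $K=1$. For the lower bound I would invoke \cref{prop:single-beam}, which holds for every $K\ge1$ and in particular for $K=1$. It gives
\[
 \liminf_{\rho\to\infty}\frac{C_{\rm blind}(\rho)}{\log\rho}\;\ge\;1-\frac1T .
\]
The fixed input $\RM{X}=\V{e}_1\RV{s}^T$ satisfies the power constraint, since $\E{\|\RV{s}\|_2^2}=T$. Because it is admissible, its mutual-information pre-log lower-bounds $C_{\rm blind}(\rho)/\log\rho$ at every $\rho$. I would make sure the lower bound is stated as a $\liminf$, not only as a $\limsup$. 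This holds because \cref{lem:small-noise} (through \cref{prop:quotient-rank}) gives a genuine limit $I=\frac{d}{2}\log\rho+o(\log\rho)$ for the fixed input.

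For the upper bound I would invoke \cref{prop:geom-genie}. It bounds $\limsup_{\rho\to\infty}C_{\rm blind}(\rho)/\log\rho$ by $1-1/T$, uniformly over all admissible, possibly SNR-dependent, input distributions. This bound is driven by the genie that reveals $(\rv{u}_r,\rv{u}_t)$. Given the geometry, the channel reduces to a scalar noncoherent Rayleigh block-fading channel with effective input $\V{a}_t(u_t)^H\RM{X}$. Its pre-log is at most $1-1/T$ by the classical result of Zheng and Tse.

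Combining the two bounds gives
\[
 1-\frac1T\;\le\;\liminf_{\rho\to\infty}\frac{C_{\rm blind}(\rho)}{\log\rho}
 \;\le\;\limsup_{\rho\to\infty}\frac{C_{\rm blind}(\rho)}{\log\rho}\;\le\;1-\frac1T .
\]
Hence the limit exists and $d_{\rm blind}^{\rm cap}=1-1/T$ for $K=1$.

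I see no real obstacle here. The only care point is the liminf-versus-limsup distinction in the achievability direction, which the existence of the limit in \cref{lem:small-noise} handles. All the substantive work lies in \cref{prop:geom-genie}, and there the uniformity over SNR-dependent inputs is already built into the classical scalar noncoherent converse.
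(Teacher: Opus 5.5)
Your proposal is correct and follows the paper's own argument: the paper obtains the corollary directly from \cref{prop:single-beam} and \cref{prop:geom-genie}, and in the proof of \cref{thm:capacity} it writes out the same $\liminf$/$\limsup$ sandwich for $K=1$. Your handling of the $\liminf$ direction also matches the paper, since \cref{prop:single-beam} obtains a genuine asymptotic $(T-1)\log\rho+o(\log\rho)$ through \cref{prop:quotient-rank}.
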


\begin{IEEEproof}
This is a straightforward corollary of \cref{prop:single-beam} and \cref{prop:geom-genie}.
\end{IEEEproof}

\subsection{The \texorpdfstring{$K\geq 2$}{} Converse: Target and Proof Architecture}
For the remainder of the converse, assume $K\ge2$. The geometric genie used for the $K=1$ case turns out to be too strong to recover the desired multiple-path pre-log, thus we work directly with the pathwise observation $\RM{Z}_\rho$. Define the peak- and average-power pathwise mutual information envelopes
\begin{subequations}
\begin{align}
 J_{\rm pk}^{Z}(\rho;A)
 &:=
 \sup_{P_{\RM{X}}:\Prob{\|\RM{X}\|_F\le A}=1}
 I(\RM{X};\RM{Z}_\rho),\\
 J_{\rm av}^{Z}(\rho)
 &:=
 \sup_{P_{\RM{X}}:\E{\|\RM{X}\|_F^2}\le T}
 I(\RM{X};\RM{Z}_\rho).
 \label{eq:genie-envelopes}
\end{align}
\end{subequations}
Let $D_\star :=KT-\frac{3K}{2}$. By \eqref{eq:pathwise-data-processing}, the DoF converse reduces to proving
\[
 \limsup_{\rho\to\infty}
 \frac{J_{\rm av}^{Z}(\rho)}{\log\rho}
 \le D_\star.
\]

The central obstacle is that the optimal input distribution may depend on $\rho$. To tackle this issue, we first focus on a fixed-amplitude shell, where $\sigma_1(\RM{X})\asymp1$, and fix a single scale for the second largest singular value. Suppose that, for some $a\ge0$,
\begin{equation}
 \sigma_2(\RM{X})\asymp\rho^{-a/2},
 \qquad b=(1-a)^+\in[0,1]
 \label{eq:single-sigma2-scale}
\end{equation}
hold uniformly over the support. Then $b$ is the fraction of the second spatial mode that remains above the noise floor, because $\log(1+\rho\sigma_2(\RM{X})^2)
 =b\log\rho+O(1)$.

The proof is organized around
\[
 I(\RM{X};\RM{Z}_\rho)
 =h(\RM{Z}_\rho)-h(\RM{Z}_\rho\mid\RM{X}).
\]
For the same input distribution satisfying \eqref{eq:single-sigma2-scale}, we establish the two complementary estimates
\begin{subequations}
\begin{align}
 h(\RM{Z}_\rho)
 &\!\le\! [(1\!-\!b)(K\!+\!T\!-\!1)\!+\!bKT]\log\rho\!+\!o(\log\rho),
 \label{eq:roadmap-joint-entropy}\\
 h(\RM{Z}_\rho\!\mid\!\RM{X})
 &\!\ge\!\left(K+\frac{bK}{2}\right)\log\rho-o(\log\rho).
 \label{eq:roadmap-conditional-entropy}
\end{align}
\end{subequations}
The conditional output lower bound strengthens with $b$ because more \ac{aod} variation is resolvable, whereas the rank-1 tube output upper bound tightens as $b$ decreases. Subtracting them gives
\begin{equation}
 D(a)=(1-b)(T-1)+bD_\star,
 \qquad b=(1-a)^+.
 \label{eq:roadmap-affine-envelope}
\end{equation}
Thus at every single $\sigma_2$-scale, the \ac{dof} is bounded by $D_\star$, since $D_\star\ge T-1$ for $T\ge K\ge2$. An arbitrary input distribution with bounded peak power is handled later by jointly binning $\log(\rho\sigma_1^2)$ and $\log(1+\rho\sigma_2^2)$, and is finally lifted to the original case with the average power constraint. We first show why the same second largest singular value controls both entropy estimates on a fixed-amplitude shell.

\subsection{The Second Singular Value as a Common Scale for Rank-1 Proximity and AoD Resolvability}
Let $\Set{R}_1^{\rm in}=\{\V{b}\V{s}^T:\V{b}\in\C^{N_t},\ \V{s}\in\C^T\}$ be the rank-1 input set. The next lemma shows that $\sigma_2(\M{X})$ is related to the absolute distance to this set. It thereby converts an upper bound of the second-mode effective SNR into the radius of the rank-1 tube, which is useful later.
\begin{lemma}[Distance to the rank-1 set via the second singular value]
\label{lem:sigma2-rankone-distance}
For any $\M{X}\in\C^{N_t\times T}$, let $r_{\M{X}}=\min\{N_t,T\}$. Then
\begin{equation}
 \sigma_2(\M{X})
 \le
 \dist_F(\M{X},\Set{R}_1^{\rm in})
 \le
 \sqrt{r_{\M{X}}-1}\,\sigma_2(\M{X}).
 \label{eq:sigma2-rankone-distance}
\end{equation}
\begin{IEEEproof}
The Eckart--Young identity gives~\cite[Sec.~7.4.2]{HornJohnson2013}
\[
 \dist_F(\M{X},\Set{R}_1^{\rm in})
 =
 \left(\sum_{j=2}^{r_{\M{X}}}\sigma_j(\M{X})^2\right)^{1/2}.
\]
Using also
\[
 \sigma_j(\M{X})\le\sigma_2(\M{X}),
 \qquad j\ge2,
\]
yields \eqref{eq:sigma2-rankone-distance}.
\end{IEEEproof}
\end{lemma}

In particular, the lemma also shows that tracking $\sigma_2$ controls every higher singular value and hence every approach to the rank-1 set. This is convenient since we do not need to use other singular values to construct more sophisticated bounds for the rank-1 tube.

Next, we quantify how $\sigma_2(\M{X})$ lower bounds the the \ac{aod} sensitivity given the observations at the Rx. For a deterministic input matrix $\M{X}$, let us define the \ac{aod}-projected temporal signal as
\begin{equation}
\V{v}_{\M{X}}(u) = \V{a}_t(u)^H\M{X}.
\end{equation}
For the fixed NULA, $\V{v}_{\M{X}}(u)$ is a vector-valued exponential polynomial with frequencies in the fixed set $\Set{\Lambda}_t=\{-2\pi x_{t,n}:n=1,\ldots,N_t\}$. Define the vector of pairwise Wronskians as
\begin{equation}
 \V{w}_{\M{X}}(u)
 =\V{v}_{\M{X}}(u)\wedge\dot{\V{v}}_{\M{X}}(u),
\end{equation}
whose frequencies belong to $\Set{\Lambda}_t+\Set{\Lambda}_t$. Let $\|\V{w}_{\M{X}}\|_{\rm coef}$ denote the Euclidean norm of this finite coefficient vector. In particular, for a scalar-valued exponential polynomial \(f(u)=\sum_{\ell=1}^{L}c_\ell e^{\jmathu\lambda_\ell u}\) with distinct real frequencies and $c_\ell\in\mathbb C$, define
\begin{equation}
\|f\|_{\rm coef}=\Big[\sum_{\ell=1}^L |c_\ell|^2\Big]^{\frac{1}{2}}.
\end{equation}
For a vector-valued exponential polynomial $\V{v}(u)=\sum_{\ell=1}^{L}\V{c}_\ell e^{\jmathu\lambda_\ell u}=[v_1,\dotsc,v_T]^T$ where $\V{c}_\ell\in\mathbb C^T$, define
\begin{equation}
\|\V{v}\|_{\rm coef}=\Big[\sum_{t=1}^T \|v_t\|_{\rm coef}^2\Big]^{\frac{1}{2}}
\end{equation}
With the aid of this construction, we may use the standard Wronskian arguments to show the following result
\cite{EremenkoGabrielov2002,KravvarritisMitrouli2009}.

\begin{proposition}[Uniform AoD sensitivity from the second exterior power]
\label{thm:wronskian}
There exists $c>0$, depending only on the fixed transmit positions and $(N_t,T)$, such that
\begin{equation}
 \|\V{w}_{\M{X}}\|_{\rm coef}
 \ge c\,\sigma_1(\M{X})\sigma_2(\M{X}).
 \label{eq:w-sigma}
\end{equation}
\begin{IEEEproof}
See Appendix~\ref{app:proof-thm-wronskian}.
\end{IEEEproof}
\end{proposition}

Together, \cref{lem:sigma2-rankone-distance,thm:wronskian} give the
same quantity $\sigma_2(\M{X})$ two complementary meanings on a fixed-amplitude shell, in the sense that it controls proximity to the rank-1 input set and lower bounds the \ac{aod} sensitivity. These two meanings facilitate the output entropy upper bound and the conditional output entropy lower bound, respectively.

\subsection{Conditional-Output Entropy from AoD Resolvability}
\begin{figure}[t]
\centering
\resizebox{0.95\textwidth}{!}{\input{preimage_entropy_tikz.tex}}
\caption{Graphical illustration of \cref{lem:preimage-volume}: Conditioned on a fixed input, smaller parameter resolution cells yield more distinguishable outputs, and hence lead to larger output entropy.}
\label{fig:preimage-entropy}
\end{figure}
We now convert the projective sensitivity in \cref{thm:wronskian} into a uniform output entropy lower bound for a fixed input. Thanks to the pathwise genie, we may now handle each path separately. Let us first consider the noiseless case. In the following lemma, we control the volume of the parameter preimage, under a fixed input and a $\varepsilon$-ball around the corresponding single-path output.
\begin{lemma}[Uniform projective preimage-volume bound]
\label{lem:preimage-volume}
Fix a finite set \(\Set{\Lambda}=\{\lambda_1,\ldots,\lambda_{|\Set{\Lambda}|}\}\) of distinct real frequency parameters for the exponential functions \(e^{\mathrm j\lambda_\ell u}\), and a nondegenerate compact interval $\Set{I}$. All constants in this lemma depend only on $(\Set{\Lambda},\Set{I},T)$. Let $\V{v}:\Set{I}\to\C^T$ have exponential polynomial components with frequencies in $\Set{\Lambda}$ and satisfy
\[
 \|\V{v}\|_{\rm coef}=1,\qquad
 \|\V{v}\wedge\dot{\V{v}}\|_{\rm coef}\ge\delta>0.
\]
There exists $\kappa=\kappa(\Set{\Lambda},\Set{I},T)>0$ such that, for every $0<\tau<1$, there is a measurable set (``good set'') $\Set{G}_\tau\subseteq\Set{I}$ satisfying $|\Set{I}\setminus\Set{G}_\tau|=O(\tau^\kappa)$, such that for every $u\in\Set{G}_\tau$, $\alpha\ne0$, and $\varepsilon>0$, it holds that
\begin{equation}
 {\rm Vol}(\Set{P}_\varepsilon(\alpha,u))
 =O\!\left(
 \tau^{-4}\varepsilon^2
 \min\left\{1,\frac{\varepsilon}{|\alpha|\delta}\right\}
 \right),
 \label{eq:preimage-volume}
\end{equation}
where the parameter resolution cell is defined by
\[
 \Set{P}_\varepsilon(\alpha,u)
 :=\left\{(\alpha',u')\in\C\times\Set{G}_\tau:
 \|\alpha'\V{v}(u')-\alpha\V{v}(u)\|_2\le\varepsilon
 \right\}.
\]
\begin{IEEEproof}
See \cref{app:proof-lem-preimage-volume}.
\end{IEEEproof}
\end{lemma}

For a fixed output resolution $\varepsilon$, \eqref{eq:preimage-volume} bounds the volume of parameter pairs mapped into an $\epsilon$-ball around a given noiseless output. The factor $\varepsilon^2$ comes from the complex-gain cross section, while $\min\{1,\varepsilon/(|\alpha|\delta)\}$ describes the scaling of the total admissible \ac{aod} length. To elaborate further, \cref{lem:preimage-volume} gives rise to an upper bound on the probability within each $\varepsilon$-ball in the $\RV{z}$-coordinate, via the volume calculation of its preimage in the $(\rv{\alpha},\rv{u})$ coordinate, as portrayed in Fig.~\ref{fig:preimage-entropy}. This naturally gives a lower bound on the conditional entropy of the single-path output $\RV{z}$. After accounting for the exceptional event and applying the Gaussian quantization bound in \cref{lem:quantization-lower}, we obtain the following noisy single-path output entropy lower bound.

\begin{proposition}[Uniform single-path fixed-input output entropy lower bound]
\label{thm:one-path-entropy}
Suppose $\rv{u}$ is supported on the fixed interval $\Set{U}_t$, with a density bounded both from above and away from zero, and let $\rv{\alpha}\sim\CN(0,\sigma_\alpha^2)$ be independent of $\rv{u}$. Let $\RV{n}\sim\CN(\V{0},\M{\Sigma}_n)$ be independent of $(\rv{u},\rv{\alpha})$, where $\M{\Sigma}_n\succ0$ is fixed. For every deterministic matrix sequence in the amplitude shell $1\le\sigma_1(\M{X}_\rho)\le C_1$ as $\rho\rightarrow \infty$,
\begin{equation}
 h\!\left(
 \sqrt\rho\,\rv{\alpha}\V{a}_t(\rv{u})^H
 \M{X}_\rho+\RV{n}
 \right)
 \ge
 \log\rho
 +\frac12\log\!\left(1+\rho\sigma_2(\M{X}_\rho)^2\right)
 -o(\log\rho).
 \label{eq:one-path-entropy}
\end{equation}
For all sufficiently large $\rho$, the $o(\log\rho)$ remainder in \eqref{eq:one-path-entropy} can be replaced by the deterministic function
\[
 \mathcal R_{\rm sp}(\rho)
 =B_1\sqrt{\log\rho}
 +B_2(\log\rho)e^{-b\sqrt{\log\rho}}+B_3,
\]
where $B_1,B_2,B_3,b>0$ depend only on the fixed array positions, dimensions, shell bound, gain variance, noise covariance, and the distribution of the \acp{aod}. In particular, $\mathcal R_{\rm sp}(\rho)=o(\log\rho)$ uniformly over the entire shell, including $\sigma_2(\M{X}_\rho)=0$.
\begin{IEEEproof}
See Appendix~\ref{app:proof-thm-one-path-entropy}.
\end{IEEEproof}
\end{proposition}

Now, we lift the single-path result given by \cref{thm:one-path-entropy} to a shell-wise conditional output entropy bound that allows an arbitrary distribution of $\sigma_2(\RM{X})$ within a fixed-amplitude shell. Let $\RV{z}_{\rho,k}$ denote the $k$th row of $\RM{Z}_\rho$. Conditioned on $\RM{X}=\M{X}$, we see that
\begin{equation}
\RV{z}_{\rho,k}=\sqrt{\rho}\,\rv{\alpha}_k \V{a}_t(\rv{u}_{t,k})^H \M{X}+\RV{n}_k,
\end{equation}
which takes the same form as the quantity in \eqref{eq:one-path-entropy}. In addition, under the condition $\RM{X}=\M{X}$, the path gains, \acp{aod}, and noise rows are independent across $k$, and hence
\[
 h(\RM{Z}_\rho\mid\RM{X}=\M{X})
 =\sum_{k=1}^K
 h(\RV{z}_{\rho,k}\mid\RM{X}=\M{X}).
\]
For each fixed realization $\M{X}$ in the fixed-amplitude shell, we may normalize with respect to $\sigma_1(\M{X})$ yielding $\sigma_1(\M{X}/\sigma_1(\M{X}))=1$ and $\rho \sigma_1(\M{X})^2\sigma_2(\M{X}/\sigma_1(\M{X}))^2 =\rho\sigma_2(\M{X})^2$. Moreover, $\sigma_1(\M{X})=\Theta(1)$ uniformly on the shell, so $\rho \sigma_1(\M{X})^2=\Theta(\rho)$ and $\log(\rho \sigma_1(\M{X})^2)=\log\rho+O(1)$. Applying \cref{thm:one-path-entropy} with $\M{\Sigma}_n=K^{-1}\M{I}_T$ and SNR $\rho \sigma_1(\M{X})^2$ to each conditional row therefore gives
\[
 h(\RV{z}_{\rho,k}\mid\RM{X}=\M{X})
 \ge
 \log\rho
 +\frac12\log\!\left(1+\rho\sigma_2(\M{X})^2\right)
 -o(\log\rho),
\]
uniformly over $\M{X}$ in the shell. Summing over $k$ and averaging over $\RM{X}$ yields 
\begin{equation}
 h(\RM{Z}_\rho\mid\RM{X})
 \ge K\log\rho
 +\frac K2\E{\log(1+\rho\sigma_2(\RM{X})^2)}
 -o(\log\rho).
 \label{eq:pathwise-conditional-entropy}
\end{equation}
Note that this expectation bound is valid without the assumption of a single $\sigma_2(\RM{X})$-scale. Now, if \eqref{eq:single-sigma2-scale} holds uniformly on the support, then
\[
 \E{\log(1+\rho\sigma_2(\RM{X})^2)}
 =b\log\rho+O(1),
 \qquad b=(1-a)^+.
\]
Substitution into \eqref{eq:pathwise-conditional-entropy} yields
\begin{equation}\label{eq:single_scale_conditional}
 h(\RM{Z}_\rho\mid\RM{X})
 \ge\left(K+\frac{bK}{2}\right)\log\rho-o(\log\rho),
\end{equation}
which is exactly \eqref{eq:roadmap-conditional-entropy}. The result \eqref{eq:single_scale_conditional} indeed requires a single $\sigma_2$ scale, and will be further lifted to the generic multiscale case using the expectation bound \eqref{eq:pathwise-conditional-entropy}.

\subsection{Output Entropy from Rank-1 Tube Concentration}
\begin{figure}[t]
\centering
\resizebox{0.75\columnwidth}{!}{\input{rank1_tube_cover_tikz.tex}}
\caption{Graphical illustration of a rank-1 tube covering using radius-$\varepsilon$ ambient balls. The cross section radius $r$ is controlled by the second largest singular value $\sigma_2(\RM{X})$.}
\label{fig:rank1-tube-cover}
\end{figure}

We next bound the unconditional output entropy $h(\RM{Z}_\rho)$. When $\sigma_2(\RM{X})$ is small, the noiseless output lies near the rank-1 output set. This concentration limits its covering number and hence its noisy entropy, as portrayed in Fig.~\ref{fig:rank1-tube-cover}. Define the truncated input rank-1 tube
\begin{equation}
 \begin{aligned}
 \Set{T}_1^{\rm in}(C_1,\delta)
 :=\{\M{X}\in\C^{N_t\times T}:\;&\|\M{X}\|_F\le C_1,~\dist_F(\M{X},\Set{R}_1^{\rm in})\le\delta\}.
 \end{aligned}
 \label{eq:input-rankone-tube}
\end{equation}
We may represent this tube in terms of $\sigma_2(\M{X})$ by using \cref{lem:sigma2-rankone-distance}
\begin{equation}
 \begin{aligned}
 \Set{T}_1^{\rm in}(C_1,\delta)
 &\subseteq
 \{\M{X}:\|\M{X}\|_F\le C_1,\ \sigma_2(\M{X})\le\delta\}\\
 &\subseteq
 \Set{T}_1^{\rm in}(C_1,\max\{1,\sqrt{\min\{N_t,T\}-1}\}\delta).
 \end{aligned}
 \label{eq:spectral-geometric-tube}
\end{equation}
Thus a constraint on $\sigma_2$ and a rank-1 tube in the Frobenius norm differ only by a change of radius depending on the dimensionality. Under the pathwise genie channel map, this input tube is turned into a corresponding tube around the rank-1 output subset. 

To bound the output entropy, we use the standard technique of covering numbers. In particular, the KL-covering bound \cite[Theorem~32.4]{PolyanskiyWu2025}, has the following Gaussian specialization. Let $\Set{A}\subset\R^m$ be bounded, possibly depending on $\rho$, and let $N_\epsilon(\Set{A})$ be its Euclidean covering number.  If $\RV{s}$ is supported on $\Set{A}$ and is independent of $\RV{n}\sim\mathcal N(\V{0},\M{\Sigma}_n)$, where $\M{\Sigma}_n\succ0$ is fixed, then
\begin{equation}\label{cover_entropy}
 h(\sqrt\rho\RV{s}+\RV{n})
 \le
 \log N_{\rho^{-1/2}}(\Set{A})+C_{m,\M{\Sigma}_n}.
\end{equation}
Indeed, if $\V{c}_j$ is a covering center for $\V{s}$ and $P_{\V{s}}=\mathcal N(\sqrt\rho\V{s},\M{\Sigma}_n)$, then
\[
 D(P_{\V{s}}\Vert P_{\V{c}_j})
 \le \frac{1}{2\lambda_{\min}(\M{\Sigma}_n)}
 \quad\text{whenever}\quad
 \|\V{s}-\V{c}_j\|_2\le\rho^{-1/2}.
\]
Thus \eqref{cover_entropy} follows from the cited bound and $h(\sqrt\rho\RV{s}+\RV{n})=h(\RV{n})+ I(\RV{s};\sqrt\rho\RV{s}+\RV{n})$. The result can be extended to complex vectors and matrices by vectorization and realification.

Naturally, our next task is to bound the covering number of the rank-1 output tube. Define the rank-1 output set as
\[
 \Set{R}_1^{\rm out}(R)
 =
 \{\M{Q}\in\C^{K\times T}:\rank\M{Q}\le1,\ \|\M{Q}\|_F\le R\},
\]
and define its ambient tube of radius $r$ by
\[
 \Set{T}_1(R,r)
 =
 \{\M{S}:\dist_F(\M{S},\Set{R}_1^{\rm out}(R))\le r\}.
\]
The following lemma bounds the covering number of the output rank-1 tube $\Set{T}_1(R,r)$.

\begin{lemma}[Rank-1 tube covering]
\label{lem:standard-rankone-cover}
There exist constants $C,c>0$ depending only on $(K,T)$ such that, for $0<\epsilon\le R$ and $0\le r\le R$,
\begin{align}
 N_\epsilon(\Set{T}_1(R,r))
 &\le
 C\left(\frac{cR}{\epsilon}\right)^{2(K+T-1)}
 \left(1+\frac{cr}{\epsilon}\right)^{2(KT-K-T+1)}.
 \label{eq:standard-rankone-cover}
\end{align}
\begin{IEEEproof}
See \cref{app:proof-lem-standard-rankone-cover}.
\end{IEEEproof}
\end{lemma}

Now, by handling the probability tails outside the rank-1 output tube, we obtain the following result.
\begin{lemma}[Output-entropy upper bound for a rank-1 tube]
\label{lem:tube-entropy}
Let $\RM{Z}_\rho$ be the pathwise observation in \eqref{eq:pathwise-genie}. Suppose
\[
 \Prob{\RM{X}\in\Set{T}_1^{\rm in}(C_1,\delta)}=1,
 \qquad
 0\le\delta\le\delta_{\max}<\infty,
\]
where $C_1$ and $\delta_{\max}$ are fixed independently of $\rho$, while the deterministic radius $\delta$ may vary with $\rho$. Then, uniformly over all such distributions and radius sequences, we have
\begin{align}
 h(\RM{Z}_\rho)
 &\le
 (K+T-1)\log\rho+(KT\!-\!K\!-\!T\!+\!1)\log(1\!+\!\rho\delta^2)
 +o(\log\rho).
 \label{eq:tube-entropy}
\end{align}
More precisely, for $\rho\ge e^2$ the $o(\log\rho)$ remainder can be bounded by $\mathcal R_{\rm tube}(\rho)=C(1+\log\log\rho)$, with $C$ independent of the input distribution and of $\delta\in[0,\delta_{\max}]$.
\begin{IEEEproof}
See Appendix~\ref{app:proof-lem-tube-entropy}.
\end{IEEEproof}
\end{lemma}

Under the fixed-amplitude shell and the single-scale condition in \eqref{eq:single-sigma2-scale}, \cref{lem:sigma2-rankone-distance} places the input support in a rank-1 input tube of radius $O(\rho^{-a/2})$. Hence \cref{lem:tube-entropy} gives \eqref{eq:roadmap-joint-entropy}, whereas \eqref{eq:single_scale_conditional} gives \eqref{eq:roadmap-conditional-entropy}. Subtracting the latter from the former therefore yields the single-scale DoF upper bound
\[
 \limsup_{\rho\to\infty}
 \frac{I(\RM{X}_\rho;\RM{Z}_\rho)}{\log\rho}
 \le D(a)\le D_\star,
\]
where $D(a)$ is defined in \eqref{eq:roadmap-affine-envelope}. In the next subsection, we derive a multiscale version by applying the two component bounds within every cell of the two-dimensional effective-SNR grid.

\subsection{Full Multiscale Converses}
\begin{figure}[t]
\centering
\resizebox{0.85\textwidth}{!}{\input{theorem2_grid_tikz.tex}}
\caption{Graphical illustration of the two-dimensional effective-\ac{snr} grid in \cref{thm:peak}.}
\label{fig:theorem2-grid}
\end{figure}

In this subsection, we collect all preceding results to yield the final multiscale converses. We start under a peak power constraint and later lift the result to average power. Fix $\Delta>0$ and put $w=\Delta\log\rho$. Inputs with $\rho\sigma_1^2\le A^2\rho^\Delta$ form a single low-SNR cell whose mutual information is only $O(\Delta\log\rho)$. On its complement, we use the logarithmic coordinates
\[
 s=\log(\rho\sigma_1^2),
 \qquad
 q=\log(1+\rho\sigma_2^2),
\]
and partition their feasible region into rectangular cells of side lengths at most $w$, as portrayed in Fig.~\ref{fig:theorem2-grid}. This two-dimensional grid helps us obtain the following result.
\begin{theorem}[Multiscale converse under peak power constraint]
\label{thm:peak}
Let $K\ge2$ and $T\ge K$. For every fixed $A\ge1$ independent of $\rho$,
\begin{equation}
 \limsup_{\rho\to\infty}
 \frac{J_{\rm pk}^{Z}(\rho;A)}{\log\rho}
 \le D_\star.
 \label{eq:peak-envelope-converse}
\end{equation}
\begin{IEEEproof}
See Appendix~\ref{app:proof-thm-peak}.
\end{IEEEproof}
\end{theorem}

To generalize the result to the average power constraint, we split the input at the growing threshold $\|\RM{X}\|_F^2=\rho^\epsilon$. On the typical event, rescaling would put the conditional input in the unit peak-power ball and changes the effective SNR to $\rho^{1+\epsilon}$. Markov's inequality can then be used to control the weighted high-power tail, as shown in the following lemma.

\begin{lemma}[Converse under average power constraint]
\label{lem:power-tail}
Under the conditions of \cref{thm:peak}, the bound \eqref{eq:peak-envelope-converse} for $A=1$ implies
\begin{equation}
 \limsup_{\rho\to\infty}
 \frac{J_{\rm av}^{Z}(\rho)}{\log\rho}
 \le D_\star.
 \label{eq:average-envelope-converse}
\end{equation}
\begin{IEEEproof}
See Appendix~\ref{app:proof-lem-power-tail}.
\end{IEEEproof}
\end{lemma}

The main theorem thus follows by combining the achievability and converse arguments.

\section{Discussions}
\label{sec:projected-geometry}
The aforementioned capacity result is specific to the blockwise memoryless NULA-SV channel. Nevertheless, its achievability argument can be applied to more general array structures as long as the input distribution is fixed, as will be detailed in this section. We will also discuss the implications of our results in the context of \acf{isac}.

\subsection{Generic Achievability}
Consider the separable finite-ray model
\begin{equation}
 \M{H}(\V{\xi},\V{\eta},\V{\alpha})
 =\sum_{k=1}^K
 \alpha_k\V{a}_r(\V{\xi}_k)\V{a}_t(\V{\eta}_k)^H,
 \label{eq:general-separable-channel}
\end{equation}
where \(\V{\xi}_k\in\mathbb R^{p_r}\) and \(\V{\eta}_k\in\mathbb R^{p_t}\) collect the real parameters describing the receive and transmit array responses of the \(k\)th path, respectively (e.g., \acp{aoa} and \acp{aod}). Let
\[
 \RM{X}=\M{B}\RM{S},\qquad
 \M{B}\in\C^{N_t\times L},\qquad K\le L\le T,
\]
where $\M{B}$ is a fixed precoder with full column rank. The entries of $\RM{S}$ are independent $\CN(0,\|\M{B}\|_F^{-2})$ variables, independent of the channel state and noise. Thus $\E{\|\M{B}\RM{S}\|_F^2}=T$, and $\RM{S}$ has a nondegenerate density on $\C^{L\times T}$ and full row rank almost surely. This distributional requirement ensures that data variations can explore all $L\times T$ complex directions.

In order to exclude a nonzero complex scaling factor that can be absorbed into the path gain, for a vector $\V{v}$, let us use $[\V{v}]_{\rm line}$ to denote its equivalence class under multiplication by a nonzero complex scalar. Next, we define the generic real projective rank of the projected steering vector by
\begin{equation}
 q_{\M{B}}
 =\rank_{\R}D_{\V{\eta}}[\M{B}^H\V{a}_t(\V{\eta})]_{\rm line}.
 \label{eq:projected-steering-rank}
\end{equation}
We further define the steering matrices
\[
 \begin{aligned}
 \M{A}_r(\V{\xi})
 &=(\V{a}_r(\V{\xi}_1),\ldots,\V{a}_r(\V{\xi}_K)),~~\M{A}_t(\V{\eta})
=(\V{a}_t(\V{\eta}_1),\ldots,\V{a}_t(\V{\eta}_K)),
 \end{aligned}
\]
and denote
\begin{subequations}
\begin{align}
 d_R
 &=\rank_{\R}D_{\V{\xi}}\operatorname{col}\M{A}_r(\V{\xi}),
 \label{eq:receive-grassmann-rank}\\
 d_H(\M{B})
 &=\rank_{\R}D_{(\V{\xi},\V{\eta},\Real\V{\alpha},\Imag\V{\alpha})}
 \bigl(\M{H}(\V{\xi},\V{\eta},\V{\alpha})\M{B}\bigr).
 \label{eq:effective-channel-rank}
\end{align}
\end{subequations}
Here $\operatorname{col}\M{A}_r$ is a point on the Grassmann manifold~\cite{ZhengTse2002}, whereas $d_H(\M{B})$ is the ordinary real Jacobian rank of the effective channel matrix.

\begin{proposition}[Conditional pre-log formula for a fixed precoder]
\label{prop:fixed-precoder-conditional}
Consider \eqref{eq:general-separable-channel} with the independent Gaussian input above and the independent additive Gaussian noise in \eqref{eq:channel}. Suppose that the following conditions hold.
\begin{enumerate}[label=(G\arabic*)]
 \item The joint distribution of
 $\RV{\xi},\RV{\eta},\Real\RV{\alpha},\Imag\RV{\alpha}$
 is absolutely continuous with respect to real Lebesgue measure.
 \item On a probability-1 open set of state coordinates, the steering
 maps are $C^1$, the gains are nonzero, $\M{A}_r$ has column rank $K$,
 and the matrix  $\M{A}_t(\V{\eta})^H\M{B}$ has row rank $K$. The two ranks $d_R$ and $d_H(\M{B})$ in \eqref{eq:receive-grassmann-rank}--\eqref{eq:effective-channel-rank}
 are constant on that set.
 \item The noiseless output
 $\RM{Z}=\M{H}(\RV{\xi},\RV{\eta},\RV{\alpha})\M{B}\RM{S}$
 satisfies $\E{\|\RM{Z}\|_F^2}<\infty$.
\end{enumerate}

The joint noiseless output map $(\V{\xi},\V{\eta},\V{\alpha},\M{S})\mapsto\M{H}(\V{\xi},\V{\eta},\V{\alpha})\M{B}\M{S}$ has real rank $r=2KT+d_R$ on a probability-1 open set. For almost every fixed $\M{S}$, the conditional map $(\V{\xi},\V{\eta},\V{\alpha})\mapsto\M{H}(\V{\xi},\V{\eta},\V{\alpha})\M{B}\M{S}$ has the constant real rank $r_u=d_H(\M{B})$ on a probability-1 open set of states.

Consequently, the pre-log achieved by this fixed input is
\begin{equation}
 \begin{aligned}
 d_{\M{B}}
 &:=\lim_{\rho\to\infty}
 \frac{I(\RM{S};\RM{Y}_\rho)}{T\log\rho}=\frac{2KT+d_R-d_H(\M{B})}{2T}.
 \end{aligned}
 \label{eq:fixed-precoder-prelog}
\end{equation}
\end{proposition}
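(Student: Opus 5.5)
The plan has three parts. We compute the two Jacobian ranks, check the transfer conditions (A1)--(A3) of \cref{ass:regular} with data variable $\RV{\theta}=\realify(\RM{S})$ and nuisance $\RV{u}=(\RV{\xi},\RV{\eta},\realify(\RV{\alpha}))$, and then invoke \cref{prop:quotient-rank} with $\M{\Sigma}=\M{I}$. That proposition gives $I(\RM{S};\RM{Y}_\rho)=\frac{r-r_u}{2}\log\rho+o(\log\rho)$, and dividing by $T\log\rho$ yields \eqref{eq:fixed-precoder-prelog}.

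\textbf{Conditions (A1) and (A3).} (A1) is immediate from (G1), from the Gaussian density of $\RM{S}$ on $\C^{L\times T}$, and from independence. (A3) is (G3). For (A2) we take the open set to be $\Set{O}=\Set{O}_{\rm st}\times\{\M{S}:\rank\M{S}=L\}$, where $\Set{O}_{\rm st}$ is the probability-1 open state set of (G2). This set is open, and it has probability one because $L\le T$ and $\RM{S}$ has a density.

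\textbf{Conditional rank.} For fixed full-row-rank $\M{S}$, the map $\M{H}\M{B}\mapsto\M{H}\M{B}\M{S}$ is an injective real-linear map, since right multiplication by a full-row-rank matrix is injective. Hence $\rank D(\M{H}\M{B}\M{S})=\rank D(\M{H}\M{B})=d_H(\M{B})$, which is constant on $\Set{O}_{\rm st}$ by (G2). This gives $r_u=d_H(\M{B})$ on $\Set{O}_{\M{S}}=\Set{O}_{\rm st}$, and that set has conditional probability one by independence.

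\textbf{Joint rank.} Write $\M{Z}=\M{A}_r\M{W}$ with $\M{W}=\M{D}(\V{\alpha})\M{A}_t^H\M{B}\M{S}\in\C^{K\times T}$. On $\Set{O}$, $\M{A}_t^H\M{B}$ has row rank $K$ and $\M{S}$ has rank $L$, so the real-linear map $\M{S}\mapsto \M{D}\M{A}_t^H\M{B}\M{S}$ is surjective onto $\C^{K\times T}$. This surjectivity holds because $\M{D}\M{A}_t^H\M{B}$ is $K\times L$ of rank $K$. Consequently the image of the tangent map in the $\M{W}$-coordinates is all of $\C^{K\times T}$, with $2KT$ real dimensions, and the $\V{\eta},\V{\alpha}$ directions add nothing beyond it. The joint tangent image is therefore
\[
\{\M{A}_r\dot{\M{W}}+\dot{\M{A}}_r\M{W}:\dot{\M{W}}\in\C^{K\times T},\ \dot{\M{A}}_r\in D_{\V{\xi}}\M{A}_r\cdot\R^{Kp_r}\}.
\]
Because $\M{W}$ has full row rank $K$, this image has the same dimension as the tangent space of the map $(\V{\xi},\M{W})\mapsto\M{A}_r\M{W}$. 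The map factors through the pair $(\operatorname{col}\M{A}_r,\M{A}_r\M{W})$, a point on the fiber bundle of $K\times T$ matrices with column space in $\operatorname{col}\M{A}_r$. The component of $\dot{\M{A}}_r\M{W}$ modulo $\operatorname{col}\M{A}_r$ equals $P^\perp\dot{\M{A}}_r\M{W}$. Since $\M{W}$ has full row rank, this component vanishes iff $P^\perp\dot{\M{A}}_r=0$, which is exactly the kernel of the Grassmann differential. Hence $r=2KT+d_R$, and it is constant on $\Set{O}$ because $d_R$ is constant there.

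\textbf{Main obstacle.} The hardest step is writing the joint-rank quotient argument cleanly. The image is a sum of the complex subspace $\{\M{A}_r\dot{\M{W}}\}$, of real dimension $2KT$, and the real span of $\dot{\M{A}}_r\M{W}$. We must show that the second piece modulo the first has real dimension exactly $d_R$, and that this quotient is independent of the particular full-row-rank $\M{W}$. The identification runs through the map $\M{M}\mapsto P^\perp\M{M}\M{W}$, which is injective on matrices with columns in $(\operatorname{col}\M{A}_r)^\perp$ because $\M{W}\M{W}^H\succ0$. We must also make sure that $d_R$, defined via the Grassmann chart, coincides with the rank of $\V{\xi}\mapsto P^\perp D_{\V{\xi}}\M{A}_r$. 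This is standard: the tangent space of the Grassmannian at $\operatorname{col}\M{A}_r$ is $\mathrm{Hom}(\operatorname{col}\M{A}_r,(\operatorname{col}\M{A}_r)^\perp)$, and its differential is $\dot{\M{A}}_r\mapsto P^\perp\dot{\M{A}}_r\M{A}_r^\dagger$. This map has the same rank because $\M{A}_r$ has full column rank.
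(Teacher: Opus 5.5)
Your proposal is correct and follows essentially the same route as the paper's proof. The conditional rank comes from injectivity of right multiplication by a full-row-rank $\M{S}$. The joint rank comes from letting data variations fill all of $\{\M{A}_r\delta\M{C}\}$ and then projecting with $\M{Q}_r=\M{I}-\M{A}_r\M{A}_r^\dagger$ to isolate the $d_R$ receive-subspace directions, after which (A1)--(A3) and \cref{prop:quotient-rank} finish the argument. The only nit is that the realified white noise has covariance $\tfrac12\M{I}$ rather than $\M{I}$, which is immaterial because \cref{prop:quotient-rank} allows any fixed positive-definite covariance.
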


\begin{IEEEproof}
See Appendix~\ref{app:proof-fixed-precoder-conditional}.
\end{IEEEproof}

The preceding proposition gives an achievable pre-log for a specified Gaussian input. The next proposition replaces the geometric condition (G2) by sufficient
rank conditions and computes the resulting ranks explicitly

\begin{proposition}[Sufficient conditions for the explicit pre-log]
\label{prop:explicit-precoder-conditions}
Consider the model and Gaussian input of \cref{prop:fixed-precoder-conditional}, and assume (G1) and (G3), without assuming (G2). Define
\[
 \V{b}(\V{\eta})=\M{B}^H\V{a}_t(\V{\eta}),\qquad
 \M{Q}_r=\M{I}_{N_r}-\M{A}_r\M{A}_r^\dagger.
\]
Suppose that the following conditions hold on a probability-1 open set of state coordinates.
\begin{enumerate}[label=(R\arabic*)]
 \item The steering maps are $C^1$, all gains are nonzero, and
 \[
  \rank_{\C}\M{A}_r=K,\qquad
  \rank_{\C}(\M{A}_t^H\M{B})=K.
 \]
 \item For every $k=1,\ldots,K$, the receive response derivatives
 retain all $p_r$ real directions outside the receive column space:
 \[
  \rank_{\R}\!\left(
   \M{Q}_rD_{\V{\xi}_k}\V{a}_r(\V{\xi}_k)
  \right)=p_r.
 \]
 \item For every $k=1,\ldots,K$, the projected transmit response
 has the same constant real projective rank $q_{\M{B}}$:
 \[
  \rank_{\R}\!\left[
   \left(\M{I}_L-
    \frac{\V{b}(\V{\eta}_k)\V{b}(\V{\eta}_k)^H}
         {\|\V{b}(\V{\eta}_k)\|_2^2}\right)
   D_{\V{\eta}_k}\V{b}(\V{\eta}_k)
  \right]=q_{\M{B}}.
 \]
\end{enumerate}
In (R2)--(R3), the parameter variations are real and the output is realified before taking rank. Then
\[
 d_R=Kp_r,\qquad
 d_H(\M{B})=K(p_r+q_{\M{B}}+2),
\]
and the Gaussian input achieves
\begin{equation}
 d_{\M{B}}
 =K\left(1-\frac{q_{\M{B}}+2}{2T}\right).
 \label{eq:projected-geometry-prelog}
\end{equation}
\end{proposition}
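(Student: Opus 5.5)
The plan is to reduce to \cref{prop:fixed-precoder-conditional}. Under (R1) the conditions of (G2) hold except possibly constancy of $d_R$ and $d_H(\M{B})$, so it suffices to show that, on the given probability-1 open set, $d_R=Kp_r$ and $d_H(\M{B})=K(p_r+q_{\M{B}}+2)$. Both values are constant, which gives (G2). Substituting into \eqref{eq:fixed-precoder-prelog} then yields $(2KT+Kp_r-Kp_r-Kq_{\M{B}}-2K)/(2T)$, which is \eqref{eq:projected-geometry-prelog}.

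\textbf{Receive Grassmann rank.} The tangent space of the Grassmannian at $\operatorname{col}\M{A}_r$ is identified with $\operatorname{Hom}(\operatorname{col}\M{A}_r,\operatorname{col}\M{A}_r^\perp)$. A real variation $\delta\V{\xi}$ maps to the homomorphism sending $\M{A}_r\V{e}_k\mapsto\M{Q}_r D_{\V{\xi}_k}\V{a}_r\,\delta\V{\xi}_k$. Because $\M{A}_r$ has full column rank, these images for different $k$ land in independent coordinates of the homomorphism (one column each). Hence the rank is $\sum_k\rank_\R(\M{Q}_rD_{\V{\xi}_k}\V{a}_r)=Kp_r$ by (R2).

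\textbf{Effective channel rank.} Write $\M{H}\M{B}=\M{A}_r\M{D}(\V{\alpha})\M{B}_t^H$ with $\M{B}_t=[\V{b}(\V{\eta}_1),\ldots,\V{b}(\V{\eta}_K)]$, which has rank $K$ by (R1). The differential is
\[
\sum_k\Bigl[\delta\alpha_k\V{a}_r\V{b}_k^H+\alpha_k(D\V{a}_r\,\delta\V{\xi}_k)\V{b}_k^H+\alpha_k\V{a}_r(D\V{b}_k\,\delta\V{\eta}_k)^H\Bigr].
\]
Set this to zero and split the receive side by $\M{Q}_r$ and $\M{I}-\M{Q}_r$.

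First apply $\M{Q}_r$ on the left. This kills all terms except $\sum_k\alpha_k\M{Q}_rD\V{a}_r\delta\V{\xi}_k\,\V{b}_k^H=0$. Since the $\V{b}_k$ are linearly independent, multiply on the right by the dual vectors of $\{\V{b}_k\}$ to get $\M{Q}_rD\V{a}_r\delta\V{\xi}_k=0$ for each $k$. By (R2) this forces $\delta\V{\xi}_k=0$.

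The remaining equation is $\M{A}_r\M{M}=0$ for some $K\times L$ matrix $\M{M}$. Full column rank of $\M{A}_r$ gives $\M{M}=0$, row by row, so for each $k$
\[
\delta\alpha_k\V{b}_k^*+\alpha_k^*\bigl(D\V{b}_k\,\delta\V{\eta}_k\bigr)^* \;\text{-type term}=0,
\]
that is, $\delta\alpha_k^*\V{b}_k+\alpha_k D\V{b}_k\delta\V{\eta}_k=0$ after conjugation. Project this with $\M{I}-\V{b}_k\V{b}_k^H/\|\V{b}_k\|^2$. Using $\alpha_k\neq0$, the kernel in $\delta\V{\eta}_k$ has dimension $p_t-q_{\M{B}}$ by (R3). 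On that kernel, $\delta\alpha_k$ is uniquely determined, since $\V{b}_k\neq0$.

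The kernel dimension per path is therefore $p_t-q_{\M{B}}$. The total rank is $K(p_r+p_t+2)-K(p_t-q_{\M{B}})=K(p_r+q_{\M{B}}+2)$.

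\textbf{Main obstacle.} The delicate point is the decoupling across paths in the $\M{Q}_r$-component. It needs the dual basis of the $\V{b}_k$, hence the row rank condition in (R1). Care is also needed with complex versus real linearity: the $\delta\V{\xi},\delta\V{\eta}$ variations are real while $\delta\alpha$ is complex, so the projective-rank identification in (R3) must be carried out after realification, as the statement stipulates.

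Finally, since the ranks are constant on the probability-1 open set, (G2) holds. The conclusion follows from \cref{prop:fixed-precoder-conditional}, together with (G1) and (G3).
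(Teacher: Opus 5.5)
Your proposal is correct and follows the paper's proof essentially step for step. The $\M{Q}_r$-projection combined with a right inverse of $\diag(\V{\alpha})\M{A}_t^H\M{B}$ (your dual basis of the $\V{b}_k$) eliminates $\delta\V{\xi}$ via (R2); full column rank of $\M{A}_r$ then separates the path rows, and (R3) gives a $(p_t-q_{\M{B}})$-dimensional real kernel per path, so $d_H(\M{B})=K(p_r+q_{\M{B}}+2)$ and, together with $d_R=Kp_r$, \cref{prop:fixed-precoder-conditional} yields the stated pre-log. One small slip, which does not affect the kernel count: conjugating the row equation gives $\delta\alpha_k^*\V{b}_k+\alpha_k^*D_{\V{\eta}_k}\V{b}_k\,\delta\V{\eta}_k=\V{0}$, with $\alpha_k^*$ rather than $\alpha_k$.
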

\begin{IEEEproof}
See Appendix~\ref{app:proof-explicit-precoder-conditions}.
\end{IEEEproof}

To elaborate, Condition (R1) separates the path coefficients at both Tx and Rx. Condition (R2) ensures that every receiver-side parameter (e.g., \ac{aoa}) variation changes the receive subspace, while (R3) counts transmit variations after removing the complex scaling absorbed by the gain. Thus the $Kp_r$ receiver-side parameter dimensions occur in both (conditional and unconditional) output dimensions and cancel in their difference. For the \ac{nula} subarray precoder in \eqref{eq:gaussian-ach-input}, \cref{lem:confluent} verifies the required receive ranks and $q_{\M{B}_K}=1$ for $K\ge2$.

\subsection{Examples of Other Array Structures}\label{ssec:examples_arrays}
We next give two concrete array structures for which $q_{\M{B}}$ can be evaluated explicitly. For both examples, we assume that the joint state distribution is absolutely continuous as in (G1). The responses in both examples are bounded uniformly over the directions. Using in addition the fact that $\E{\|\RV{\alpha}\|^2}<\infty$, for a fixed constant $C$,
$$
\E{\|\RM{H}\M{B}\RM{S}\|_F^2}\leq C\E{\|\RV{\alpha}\|^2}\E{\|\RM{S}\|_F^2}<\infty,
$$
which implies (G3). The remaining conditions (R1)--(R2) are verified in Appendix~\ref{app:proof-array-examples}, which in turn ensure (G2) via the computed $q_{\M{B}}$ and \cref{prop:explicit-precoder-conditions}. Next, we specify the receiver constructions and compute $q_{\RM{B}}$ explicitly.

\subsubsection{Uniform planar array with antenna selection}
\label{sec:upa-selection-example}
For a \ac{upa}, let us denote its response at index $(m,n)$ as $a_{m,n}(u,v)\propto e^{\jmathu(mu+nv)}$. Here $u,v$ are the two directional cosines multiplied by the fixed nonzero spacing factors. Let $\M{B}$ be an antenna selection matrix that selects $L$ distinct elements with indices $(m_\ell,n_\ell)$, $\ell=1,\ldots,L$. Their relative responses satisfy
\[
 \frac{b_\ell(u,v)}{b_1(u,v)}
 =e^{\jmathu[(m_\ell-m_1)u+(n_\ell-n_1)v]}.
\]
Consequently, we obtain
\begin{equation}
 q_{\M{B}}
 =\rank_{\R}
 \begin{bmatrix}
  m_2-m_1&n_2-n_1\\
  \vdots&\vdots\\
  m_L-m_1&n_L-n_1
 \end{bmatrix},
 \label{eq:upa-selection-projective-rank}
\end{equation}
which equals zero when $L=1$. Selecting merely collinear elements with $L\geq 2$ gives $q_{\M{B}}=1$, while selecting at least three noncollinear elements gives $q_{\M{B}}=2$. For example, the elements $(0,0),(1,0),(0,1)$ retain both directional variations. 

\subsubsection{Rectangular aperture with Fourier-mode selection}
\label{sec:fourier-mode-example}
For a continuous rectangular aperture, a Fourier representation provides a convenient finite-mode signal model~\cite{Pizzo2020}. Here we apply this representation to the finite-ray \ac{sv} channel. Let us normalize each aperture coordinate to $[-1/2,1/2]$. The orthonormal mode with integer index $(m,n)$ is given by
\[
 \phi_{m,n}(x,y)=e^{\jmathu2\pi(mx+ny)}.
\]
Its inner product with a far-field plane wave of normalized spatial frequencies $(u,v)$ is
\begin{equation}
 b_{m,n}(u,v) =\operatorname{sinc}(u-m)\operatorname{sinc}(v-n),
 \label{eq:aperture-mode-response}
\end{equation}
where $\operatorname{sinc}(x)=\frac{\sin(\pi x)}{\pi x}$ with $\operatorname{sinc}(0)=1$. Select $L$ modes from a fixed finite mode set. The finite input and output mode counts take the roles of $N_t,N_r$, and the transmitted modal coefficients obey the same average power constraint.

It is clear that a single mode gives $q_{\M{B}}=0$. If all selected modes have a common second index and $L\geq 2$, their distinct first indices give $q_{\M{B}}=1$ almost everywhere. If the selection contains $(0,0),(1,0),(0,1)$, then $q_{\M{B}}=2$ generically, since on the probability-1 event of $b_{0,0}\neq 0$, we have
\begin{equation}\label{eq:ratios_fourier}
 \frac{b_{1,0}}{b_{0,0}}=\frac{u}{1-u},
 \qquad
 \frac{b_{0,1}}{b_{0,0}}=\frac{v}{1-v}.
\end{equation}

We may now conclude that the method of quotient rank yield useful achievable \ac{dof} lower estimates for generic array structures. Nevertheless, it is in general difficult to obtain useful converse bounds, or prove the optimality of the obtained \ac{dof} estimates.

\subsection{Application to \ac{isac}: \ac{dof} Tradeoff}
Being one of the major usage scenarios in \ac{6g} networks, \ac{isac} aims to improve resource efficiency and facilitate mutual benefits between sensing and communication, by integrating the two functionalities in a unified platform \cite{yuanhaoNW}. In its simplistic form, an \ac{isac} system may be modelled as the following pair of channels\footnote{Here we consider a bistatic system so that the \acp{aoa} can be independent of \acp{aod}. The $\RM{X}$ is assumed to be known at the sensing Rx, which can be achieved by, for example, cable connections.}
\begin{equation}
\RM{Y}_{\rm c}=\RM{H}_{\rm c}\RM{X}+\RM{N}_{\rm c},~~\RM{Y}_{\rm s}=\RM{H}_{\rm s}\RM{X}+\RM{N}_{\rm s},
\end{equation}
where $\RM{Y}_{\rm c}\in\mathbb{C}^{N_{\rm c}\times T}$, $\RM{Y}_{\rm s}\in\mathbb{C}^{N_{\rm s}\times T}$, $\RM{H}_{\rm c}\in\mathbb{C}^{N_{\rm c}\times N_{\rm t}}$, and $\RM{H}_{\rm s}\in\mathbb{C}^{N_{\rm s}\times N_{\rm t}}$ denote the received communication and sensing signals, the communication and sensing channels, respectively. The sensing channel $\RM{H}_{\rm s}$ is also known as the target response matrix. The sensing and communication noises, $\RM{N}_{\rm s}$ and $\RM{N}_{\rm c}$, are assumed to be entrywise zero-mean i.i.d. proper complex Gaussian noises and are independent of each other, with variances being $\sigma_{\rm s}^2$ and $\sigma_{\rm c}^2$, respectively. The transmitted signal $\RM{X}$ is shared between sensing and communication functionalities, and is assumed known to the sensing receiver. An injection $\RV{\eta}\mapsto\RM{H}_{\rm s}(\RV{\eta})$ determines the sensing channel, and the sensing task is to estimate $\RV{\eta}$.

Under the aforementioned model, when $\RM{H}_{\rm c}$ is known to the communication receiver, the sensing-communication tradeoff has been investigated \cite{XiongTIT,mmse_isac}. In particular, when the sensing performance is optimal, there is typically a communication \ac{dof} loss. By contrast, when $\RM{H}_{\rm c}$ is not known, which is more relevant to the topic of this paper, \cite{noncoherent_isac} shows that no sensing-induced \ac{dof} loss exists, if the entries of $\RM{H}_{\rm c}$ are i.i.d. and follow $\mathcal{CN}(0,1)$ (also known as independent Rayleigh block fading). This distinction originates from the fact that the communication-\ac{dof}-achieving input distribution in the independent Rayleigh block fading case is unitary-invariant, which is even more restrictive than the sensing-induced orthogonal constraints. To elaborate, the communication-optimal information carriers live on the $2K(T-K)$-dimensional Grassmann manifold, while the sensing-optimal constraint only confines the input to the $2K(T-K/2)$-dimensional Stiefel manifold.

In this paper, we have shown that in the \ac{nula}-\ac{sv} model, the \ac{dof} loss due to unknown \ac{csi} is at most $3K/2T$, which is smaller than $K^2/2T$ \ac{dof} loss induced by sensing constraints when $K\geq 4$, as long as the sensing-optimal sample covariance matrix is unique (see \cite{XiongTIT} for a detailed discussion on the uniqueness issue). This implies that the sensing-communication \ac{dof} tradeoff may exist in the \ac{nula}-\ac{sv} model, even if the communication Rx does not have \ac{csi}.

Next, let us give a concrete example. We consider the case where $N_t=K\ge4$, $N_c,N_s\ge K+1$, and $T\ge K$. Both communication and sensing channels satisfy the finite-ray model
$$
\RM{H}_i =\sum_{k=1}^{K}\rv{\alpha}_{i,k} \V{a}_{i,r}(\rv{u}_{i,r,k})\V{a}_t(\rv{u}_{i,t,k})^H,
~~i\in\{{\rm c,s}\},
$$
where all directional cosines follow $\mathcal{U}(-1,1)$, while the complex gains follow independent proper complex Gaussian distributions. Both the communication Tx and Rx know the distributions of these parameters but not their realizations. We assume that the antenna elements are located at integer multiples of half-wavelength points, namely $x_n=\frac{m_n}{2}$, with $m_n$'s being mutually distinct integers. Thereby the \textit{a priori} covariance matrix is given by ${\rm Cov}({\rm vec}~ \RM{H}_{\rm s})=\beta\M{I}$, where $\beta>0$ is a constant.

Let us now consider the sensing task of estimating the full target response matrix $\RM{H}_{\rm s}$, under the constraint of using linear estimators, with the performance metric being the \ac{mse}. The optimal estimator in this case is known to be the \ac{lmmse} estimator, whose \ac{mse} is given by
\begin{equation}
J(\RM{R}_{\RM{X}})=N_{\rm s}\tr\left(\beta^{-1}\M{I}+\frac{T}{\sigma_{\rm s}^2}\RM{R}_{\RM{X}}\right)^{-1},
\end{equation}
where $\RM{R}_{\RM{X}}=(1/T)\RM{X}\RM{X}^H$. Under the average power constraint $\mathbb{E}[\tr\RM{R}_{\RM{X}}]\leq P$, Jensen's inequality yields
\begin{align}
\mathbb{E}\{J(\RM{R}_{\RM{X}})\}&\geq N_{\rm s}K\mathbb{E}\left[\beta^{-1}+\frac{T}{K\sigma_{\rm s}^2}\tr\RM{R}_{\RM{X}}\right]^{-1}\nonumber \\
&\geq \frac{N_{\rm s}K}{\beta^{-1}+\frac{T}{K\sigma_{\rm s}^2}P},
\end{align}
implying that the sensing-optimal $\RM{R}_{\RM{X}}$ is given by $\RM{R}_{\RM{X}}^\star=\frac{P}{K}\M{I}$ almost surely. Therefore, the sensing-optimal sample covariance matrix is unique and full-rank, and hence this example indeed exhibits a sensing-communication \ac{dof} tradeoff. 
\begin{remark}
Intuitively, the sensing task wish to learn every unknown parameter with high accuracy, and hence assign equal power to every direction. By contrast, the communication task does not need to assign additional resources to learn \acp{aoa}, since the accuracy ensured by blind estimation is already sufficient to achieve the \ac{dof}.
\end{remark}

We note that our results are restricted to this specific example, and especially under the assumption of using linear estimators. Our hope is that the proposed analytical framework can be extended and applied to reveal more general sensing-communication \ac{dof} tradeoffs.

\section{Conclusion}
\label{sec:conclusion}
In this paper, we have determined the capacity pre-log of the blockwise memoryless NULA-SV channel without prior \ac{csi}, for arbitrary fixed distinct array positions, including \acp{ula} as an important special case. For \(K\ge2\), the quotient rank argument gives the net loss of \(3K\) real dimensions, and a fixed \(K\)-antenna Gaussian input attains the resulting pre-log. For $K=1$, a single-antenna transmission strategy attains the \ac{dof} by reducing the observation to a scalar block-fading channel. The uniform converse shows that input distributions that depend on the \ac{snr} cannot achieve higher \acp{dof}.

In general, the quotient rank argument suggests that the achievability analysis also apply to other finite-ray array models, as illustrated by the \ac{upa} and rectangular aperture examples. However, extending the capacity theorem requires a matching converse for their additional degeneracy sets, which deserves future investigations.

\appendices
\section{Information-Dimension Results}
\subsection{\texorpdfstring{Proof of \cref{lem:smooth-image}}{Proof of Lemma \getrefnumber{lem:smooth-image}}}
\label{app:proof-lem-smooth-image}
\label{app:rectifiable-transfer-proof}
\begin{IEEEproof}
Let $\RV{z}=f(\RV{\theta},\RV{u})\in\R^n$. By (A1)--(A2), the constant-rank theorem and second countability~\cite[Thm.~4.12]{Lee2013} give a countable open cover $\{\Set{G}_j\}_{j\ge1}$ of the probability-1 regular set. After shrinking and refining the charts, each has the normal form
\[
 \psi_j\circ f\circ\phi_j^{-1}(\V{s},\V{w})
 =(\V{s},\V{0}_{n-r}),
 \qquad \V{s}\in\Set{D}_j\subset\R^r,
\]
where $\Set{D}_j$ is bounded and $\iota_j(\V{s})=\psi_j^{-1}(\V{s},\V{0}_{n-r})$ is bi-Lipschitz on $\Set{D}_j$. Such a refinement is possible because the coordinate maps are $C^1$ diffeomorphisms and are locally bi-Lipschitz.
Disjointify the cover by
\[
 \Set{E}_1=\Set{G}_1,\qquad
 \Set{E}_j=\Set{G}_j\setminus\bigcup_{i<j}\Set{G}_i,
\]
and let $\rv{J}=j$ on $\{(\RV{\theta},\RV{u})\in\Set{E}_j\}$, discarding null pieces. On every nonnull piece, $\phi_j(\RV{\theta},\RV{u})$ has an absolutely continuous conditional distribution by (A1). Its first $r$ coordinates $\RV{s}_j$ are therefore absolutely continuous and bounded, and $\RV{z}=\iota_j(\RV{s}_j)$ on that piece. Absolute continuity and bi-Lipschitz invariance of information dimension \cite{Renyi1959,StotzBolcskei2016} give $d_I(\RV{z}\mid\rv{J}=j)=r$. If $r=0$, the map is constant on each normal-form chart and the same conclusion holds with dimension zero.

To control the total information dimension over countably many charts, fix $M\ge1$, and define the finite label
\[
 \rv{J}_M=
 \begin{cases}
 \rv{J},&\rv{J}\le M,\\
 0,&\rv{J}>M.
 \end{cases}
\]
By (A3), the unconditional and every nonnull conditional distribution of $\RV{z}$ have finite second moments and hence finite entropy in their integer parts~\cite[Remark~2 and Lemma~11]{StotzBolcskei2016}. Thus we have
\[
 H(\quant{\RV{z}}{m}\mid\rv{J}_M)
 \le H(\quant{\RV{z}}{m})
 \le H(\rv{J}_M)+H(\quant{\RV{z}}{m}\mid\rv{J}_M).
\]
For a nonnull tail event, we have
\[
 \begin{aligned}
 H(\quant{\RV{z}}{m}\mid\rv{J}>M)
 &\le H(\lfloor\RV{z}\rfloor\mid\rv{J}>M)+n\log(m+1).
 \end{aligned}
\]
Consequently, it holds that $0\leq \underline{d}_I(\RV{z}|\rv{J}>M)\leq \overline{d}_I(\RV{z}|\rv{J}>M)\leq n$. Expanding $H(\quant{\RV{z}}{m}\mid\rv{J}_M)$ and dividing by $\log m$, we obtain
$$
\begin{aligned}
\frac{H(\quant{\RV{z}}{m}\mid \rv{J}_M)}{\log m}
={}&\sum_{j=1}^{M}\Prob{\rv{J}=j}
\frac{H(\quant{\RV{z}}{m}\mid \rv{J}=j)}{\log m}+\Prob{\rv{J}>M}\frac{H(\quant{\RV{z}}{m}\mid \rv{J}>M)}{\log m},
\end{aligned}
$$
which implies that
$$
(1-\Prob{\rv{J}>M})r
 \le\underline d_I(\RV{z}|\rv{J}_M)
 \le\overline d_I(\RV{z}|\rv{J}_M)\le(1-\Prob{\rv{J}>M})r+\Prob{\rv{J}>M}n.
$$
But removing the condition on $\rv{J}_M$ only leads to a negligible entropy increase in the limit of $m\rightarrow \infty$, and hence
\[
 \begin{aligned}
 (1-\Prob{\rv{J}>M})r
 &\le\underline d_I(\RV{z})
 \le\overline d_I(\RV{z})\le(1-\Prob{\rv{J}>M})r+\Prob{\rv{J}>M}n.
 \end{aligned}
\]
Since $\Prob{\rv{J}>M}\to0$ as $M\to \infty$, this proves $d_I(\RV{z})=r$.

For $P_{\RV{\theta}}$-almost every $\V{\theta}$, (A1) gives absolute continuity of $P_{\RV{u}\mid\RV{\theta}=\V{\theta}}$, (A2) gives constant rank $r_u$ on its probability-1 open section, and (A3) gives a finite conditional second moment. Applying the same chart argument to $\V{u}\mapsto f(\V{\theta},\V{u})$ yields
\[
 d_I(P_{\RV{z}\mid\RV{\theta}=\V{\theta}})=r_u
 \quad\text{for }P_{\RV{\theta}}\text{-a.e. }\V{\theta}.
\]
Finally, (A3) implies $H(\lfloor\RV{z}\rfloor)<\infty$, and hence the conditional dimension identity \cite[Lemma~2]{GeigerKoch2019} gives
\[
 d_I(\RV{z}\mid\RV{\theta})
 =\int d_I(P_{\RV{z}\mid\RV{\theta}=\V{\theta}})
 \,P_{\RV{\theta}}(d\V{\theta})=r_u.
\]
Thus the proof is completed.
\end{IEEEproof}

\subsection{\texorpdfstring{Proof of \cref{prop:quotient-rank}}{Proof of Proposition \getrefnumber{prop:quotient-rank}}}
\label{app:proof-cor-quotient-rank}
\begin{IEEEproof}
Independence of the noise gives the Markov chain $(\RV{\theta},\RV{u})\to\RV{z}\to\RV{y}_\rho$ and hence
\[
 I(\RV{\theta};\RV{y}_\rho)
 =I(\RV{z};\RV{y}_\rho)
  -I(\RV{z};\RV{y}_\rho\mid\RV{\theta}).
\]
By \cref{lem:smooth-image,lem:small-noise},
\[
 \frac{I(\RV{z};\RV{y}_\rho)}{\log\rho}\longrightarrow\frac r2,
\]
and, for $P_{\RV{\theta}}$-almost every $\V{\theta}$,
\[
 \frac{I(\RV{z};\RV{y}_\rho\mid\RV{\theta}=\V{\theta})}
 {\log\rho}\longrightarrow\frac{r_u}{2}.
\]
It remains to justify averaging the latter limit. Put $M(\V{\theta})=\E{\|\RV{z}\|^2\mid
\RV{\theta}=\V{\theta}}$ and $c_{\M{\Sigma}}=1/[n\lambda_{\min}(\M{\Sigma})]$. Whitening the noise and applying the Gaussian maximum-entropy bound~\cite[Thm.~2.8]{PolyanskiyWu2025} give, for $\rho\ge e$,
\begin{equation}
 \begin{aligned}
 0&\le\frac{I(\RV{z};\RV{y}_\rho\mid
 \RV{\theta}=\V{\theta})}{\log\rho}\\
 &\le\frac n2\,
 \frac{\log[1+\rho c_{\M{\Sigma}}M(\V{\theta})]}{\log\rho}\\
 &\le\frac n2\bigl\{1+
 \log[1+c_{\M{\Sigma}}M(\V{\theta})]\bigr\}.
 \end{aligned}
 \label{eq:conditional-mi-dominator}
\end{equation}
The last bound is integrable: Jensen's inequality and (A3) imply
\[
 \begin{aligned}
 \E{\log[1+c_{\M{\Sigma}}M(\RV{\theta})]}
 &\le\log[1+c_{\M{\Sigma}}
 \E{\|\RV{z}\|^2}]\\
 &<\infty.
 \end{aligned}
\]
Dominated convergence therefore gives
\[
 \frac{I(\RV{z};\RV{y}_\rho\mid\RV{\theta})}{\log\rho}
 \longrightarrow\frac{r_u}{2}.
\]
Substitution into the mutual-information difference yields
\[
 \lim_{\rho\to\infty}
 \frac{I(\RV{\theta};\RV{y}_\rho)}{\log\rho}
 =\frac{r-r_u}{2},
\]
which is equivalent to \eqref{eq:nuisance}.
\end{IEEEproof}

\section{Jacobian Rank and Regularity}
\subsection{\texorpdfstring{Proof of \cref{lem:confluent}}{Proof of Lemma \getrefnumber{lem:confluent}}}
\label{app:proof-lem-confluent}
\begin{IEEEproof}
It suffices to show that the determinant of a square submatrix formed by selecting \(K+1\) rows of \([\M{A}(\V{u}),\dot{\V{a}}(u_k)]\) is a complex-valued real-analytic function of the directions that is not identically zero. For \(m\) selected distinct frequencies, choose distinct real numbers \(t_1,\ldots,t_m\) and an interior point \(u_0\) of the direction interval \(\Set{I}\). Taylor expansion gives
\begin{multline}
 \det[e^{\jmathu\lambda_n(u_0+\epsilon t_j)}]_{n,j=1}^m
 =e^{\jmathu u_0\sum_n\lambda_n}
 \frac{(\jmathu\epsilon)^{m(m-1)/2}}{\prod_{r=0}^{m-1}r!}\prod_{n<\ell}(\lambda_\ell-\lambda_n)
 \prod_{i<j}(t_j-t_i)
 +O(\epsilon^{m(m-1)/2+1}).
 \label{eq:ordinary-expansion}
\end{multline}
Taking \(m=K\) proves that an ordinary minor is not identically zero. Since this determinant is real-analytic in the directions, it is nonzero almost everywhere, and hence \(\M{A}(\V u)\) has full column rank almost everywhere.

Now, take \(m=K+1\), evaluate the first \(K\) columns at \(u_0+\epsilon t_j\), and use the derivative at \(u_0+\epsilon t_k\) as the last column. Its leading term is
\begin{equation}
 e^{\jmathu u_0\sum_n\lambda_n}
 \frac{\det[(\jmathu\lambda_n)^r]_{n=1,\ r=0}^{m,\ m-1}}
      {\prod_{r=0}^{m-1}r!}
 \det\M{V}_{\rm H}\,
 \epsilon^{m(m-1)/2-1},
 \label{eq:confluent-expansion}
\end{equation}
with a remainder of one higher order. The first \(K\) columns of \(\M{V}_{\rm H}\) are \((t_j^r)_{r=0}^{K}\), and its last column is \((r t_k^{r-1})_{r=0}^{K}\), whose first entry is zero. This Hermite evaluation matrix is nonsingular. Indeed, a polynomial of degree at most \(K\) that vanishes at all \(K\) distinct nodes and has zero derivative at \(t_k\) has at least \(K+1\) roots, and hence is identically zero. Both determinants in \eqref{eq:confluent-expansion} are therefore nonzero.

For sufficiently small nonzero \(\epsilon\), all chosen directions lie in \(\Set{I}\). The real-analytic squared modulus of each minor is consequently nontrivial, thus its zero set has Lebesgue measure zero~\cite{KrantzParks2002}. There are finitely many required minors, for which the union of their zero sets still has Lebesgue measure zero. Thus we are done.
\end{IEEEproof}

\subsection{\texorpdfstring{Proof of \cref{lem:channel-rank}}{Proof of Lemma \getrefnumber{lem:channel-rank}}}
\label{app:proof-lem-channel-rank}
\begin{IEEEproof}
Write
\[
 \M{H}=\M{A}_r\M{D}(\V{\alpha}) \M{A}_t^H .
\]
Suppose that a first-order variation satisfies $\delta\M{H}=\M{0}$. Write the AoA variation as
\[
 \delta \M{A}_r
 =\dot{\M{A}}_r{\rm diag}(\delta \V{u}_r).
\]
Then
\begin{align}
 \delta\M{H}
 &=\dot{\M{A}}_r{\rm diag}(\delta \V{u}_r)
 \M{D}(\V{\alpha}) \M{A}_t^H+\M{A}_r{\rm diag}(\delta\V{\alpha})\M{A}_t^H+\M{A}_r\M{D}(\V{\alpha})
 {\rm diag}(\delta \V{u}_t)\dot{\M{A}}_t^H\nonumber\\
 &=\M{0}.
 \label{eq:channel-differential}
\end{align}

Since $\M{A}_r$ has full column rank, define the orthogonal projector
\[
 \M{Q}_r=\M{I}_{N_r}-\M{A}_r\M{A}_r^\dagger.
\]
Left-multiplying \eqref{eq:channel-differential} by $\M{Q}_r$ gives
\[
 \M{Q}_r\dot{\M{A}}_r{\rm diag}(\delta \V{u}_r)
 \M{D}(\V{\alpha})\M{A}_t^H=\M{0},
\]
since $\M{Q}_r\M{A}_r=\M{0}$. The matrix $\M{D}(\V{\alpha})\M{A}_t^H$ has full row rank and hence a right inverse. Right-multiplication by that inverse yields $\M{Q}_r\dot{\M{A}}_r{\rm diag}(\delta \V{u}_r)=\M{0}$, with the $k$th column being $\delta u_{r,k}\M{Q}_r\dot{\V{a}}_{N_r}(u_{r,k})=\V{0}$. By the one-node conclusion of \cref{lem:confluent}, $\dot{\V{a}}_{N_r}(u_{r,k}) \notin\operatorname{col}\M{A}_r$, so $\M{Q}_r\dot{\V{a}}_{N_r}(u_{r,k})\ne\V{0}$ and therefore
\[
 \delta u_{r,k}=0,\qquad k=1,\ldots,K.
\]

Left-multiplying \eqref{eq:channel-differential} by $\M{A}_r^\dagger$ and substituting $\delta \V{u}_r=\V{0}$ yields
\[
 \M{0}=\M{A}_r^\dagger\delta\M{H}
 ={\rm diag}(\delta\V{\alpha})\M{A}_t^H
 +\M{D}(\V{\alpha}) {\rm diag}(\delta \V{u}_t)\dot{\M{A}}_t^H.
\]
The $k$th row is
\[
 \delta\alpha_k \V{a}_t(u_{t,k})^H
 +\alpha_k\delta u_{t,k}
 \dot{\V{a}}_{N_t}(u_{t,k})^H=\V{0}^H.
\]
The steering vector $\V{a}_t(u)$ and its derivative are linearly independent in the complex field. Otherwise, division of their nonzero entries would force all $\jmathu2\pi x_{t,n}$ to be equal, contradicting the distinct positions and $N_t\ge2$. Since $\alpha_k\ne0$, the row equation therefore gives $\delta\alpha_k=0$ and $\delta u_{t,k}=0$. We may now conclude that $\ker_\R D_{(\V{u}_r,\V{u}_t,\V{\alpha})}\M{H}=\{\V{0}\}$, which implies that
\[
 \dim_\R(\V{u}_r,\V{u}_t,\V{\alpha})
 =K+K+2K=4K,
\]
\[
 \rank_\R D_{(\V{u}_r,\V{u}_t,\V{\alpha})}\M{H}
 =4K-0=4K,
\]
which completes the proof.
\end{IEEEproof}

\subsection{\texorpdfstring{Proof of \cref{lem:joint-output-rank}}{Proof of Lemma \getrefnumber{lem:joint-output-rank}}}
\label{app:proof-lem-joint-output-rank}
\begin{IEEEproof}
Define the selected transmit steering matrix
\[
 \widetilde{\M{A}}_t
 =\M{B}_K^H\M{A}_t.
\]
It is invertible by the stated rank condition. This condition holds almost surely by \cref{lem:confluent}, applied to the first $K$ distinct transmit positions. Since the gains are nonzero, $\M{D}(\V{\alpha})\widetilde{\M{A}}_t^H$ is also invertible. Set
\[
 \M{C}=\M{D}(\V{\alpha})\widetilde{\M{A}}_t^H\M{S},
 \qquad
 \M{Z}=\M{A}_r(\V{u}_r)\M{C}.
\]
Then $\M{C}$ has full row rank. Holding $\V{u}_t$ and $\V{\alpha}$ fixed, every variation $\delta\M{C}$ is obtained by choosing
\[
 \delta\M{S}
 =\bigl[\M{D}(\V{\alpha})\widetilde{\M{A}}_t^H\bigr]^{-1}
 \delta\M{C}.
\]
Thus the map from the original parameters to $(\V{u}_r,\M{C})$ is a local submersion. The original noiseless output map has the same local image and Jacobian rank as $(\V{u}_r,\M{C})\mapsto\M{A}_r(\V{u}_r)\M{C}$.

Now suppose $\dot{\M{A}}_r{\rm diag}(\delta \V{u}_r)\M{C} +\M{A}_r\delta\M{C}=\M{0}$. Left-multiplication by $\M{Q}_r=\M{I}_{N_r}-\M{A}_r\M{A}_r^\dagger$ and right-multiplication by a right inverse of $\M{C}$ give
\[
 \M{Q}_r\dot{\M{A}}_r{\rm diag}(\delta \V{u}_r)=\M{0}.
\]
As in the proof of \cref{lem:channel-rank}, the one-node conclusion of
\cref{lem:confluent} gives
\[
 \delta u_{r,k}=0,\qquad k=1,\ldots,K.
\]
The original differential then reduces to $\M{A}_r\delta\M{C}=\M{0}$, and the full column rank of $\M{A}_r$ gives $\delta\M{C}=\M{0}$. Thus,
\[
 \ker_\R D_{(\V{u}_r,\M{C})}(\M{A}_r\M{C})=\{\V{0}\},
\]
and
\[
 \rank_\R D_{(\V{u}_r,\M{C})}(\M{A}_r\M{C})
 =K+2KT.
\]
The original joint map has the same local image, we thus have
\[
 \rank_\R
 D_{(\V{u}_r,\V{u}_t,\V{\alpha},\M{S})}
 [\M{H}(\V{u}_r,\V{u}_t,\V{\alpha})\M{B}_K\M{S}]
 =K+2KT,
\]
and we are done.
\end{IEEEproof}

\subsection{\texorpdfstring{Proof of \cref{lem:fixed-input-rank}}{Proof of Lemma \getrefnumber{lem:fixed-input-rank}}}
\label{app:proof-lem-fixed-input-rank}
\begin{IEEEproof}
Since $\M{S}$ has full row rank, there exists $\M{S}^\dagger\in\C^{T\times K}$ such that $\M{S}\M{S}^\dagger=\M{I}_K$. Suppose
\[
 \begin{aligned}
 D_{(\V{u}_r,\V{u}_t,\V{\alpha})}[\M{H}\M{B}_K\M{S}][\delta\V{u}_r,\delta\V{u}_t,\delta\V{\alpha}]=\delta\M{H}\,\M{B}_K\M{S}=\M{0}.
 \end{aligned}
\]
Right-multiplication by $\M{S}^\dagger$ gives $\delta\M{H}\M{B}_K=\M{0}$. The effective channel is
\[
 \M{H}\M{B}_K
 =\M{A}_r\M{D}(\V{\alpha})\widetilde{\M{A}}_t^H,
 \qquad
 \widetilde{\M{A}}_t=\M{B}_K^H\M{A}_t.
\]
The matrix $\widetilde{\M{A}}_t$ is invertible, and for every path $k$, $\widetilde{\V{a}}_t(u_{t,k})=\M{B}_K^H\V{a}_t(u_{t,k})$ and its derivative $\dot{\widetilde{\V{a}}}_t(u_{t,k})$ are linearly independent, because the selected positions are distinct and $K\ge2$. Repeating the projector proof of \cref{lem:channel-rank} with $\widetilde{\M{A}}_t$ in place of $\M{A}_t$ therefore gives $\delta\V{u}_r=\V{0}$, $\delta\V{u}_t=\V{0}$, and $\delta\V{\alpha}=\V{0}$. The conditional Jacobian has trivial real kernel on a $4K$-dimensional parameter domain and hence has rank $4K$.
\end{IEEEproof}

\subsection{Proof of \cref{cor:regular-chart-cover}}
\label{app:proof-cor-regular-chart-cover}
\begin{IEEEproof}
The sets $\Set{O}_{\rm ch}$ and $\Set{O}_{\rm joint}$ are open in their respective realified Euclidean parameter spaces. Every NULA steering entry is real analytic in its directional cosine. Therefore the joint output map
\[
 (\V{u}_r,\V{u}_t,\V{\alpha},\M{S})
 \longmapsto\M{H}(\V{u}_r,\V{u}_t,\V{\alpha})\M{B}_K\M{S}
\]
and its restriction to each fixed $\M{S}$ are real analytic, and hence $C^1$, on the corresponding regular sets.

By \cref{lem:joint-output-rank}, the joint Jacobian has rank $2KT+K$ at every point of $\Set{O}_{\rm joint}$.  By \cref{lem:fixed-input-rank}, for every fixed $\M{S}$ having full row rank, the conditional Jacobian has rank $4K$ at every point of $\Set{O}_{\rm ch}$.  Since both sets are open and the maps are real analytic there, condition (A2) follows.

It remains to verify that the regular sets have probability one. By \cref{lem:confluent}, each required receive confluent minor and the selected transmit minor is a nontrivial analytic function of the directions. The union of their zero sets has Lebesgue measure zero and hence probability zero under (U1). By the absolute continuity of the directional variables, the boundary events and the finitely many collision events
\[
 \rv{u}_{\nu,i}=\rv{u}_{\nu,j},
 \qquad \nu\in\{r,t\},\quad i\ne j,
\]
have probability zero. The nondegenerate complex Gaussian gains satisfy
\[
 \Prob{\rv{\alpha}_k=0}=0,
 \qquad k=1,\ldots,K.
\]
Consequently, we have $\Prob{(\RV{u}_r,\RV{u}_t,\RV{\alpha})\in\Set{O}_{\rm ch}}=1$.

Finally, because $T\ge K$, the set $\{\M{S}\in\C^{K\times T}:\rank\M{S}<K\}$ is a proper algebraic subset of $\C^{K\times T}$ and therefore has Lebesgue measure zero. Hence every absolutely continuous input distribution satisfies $\Prob{\rank\RM{S}=K}=1$. Combining the two probability-1 events gives $\Prob{(\RV{u}_r,\RV{u}_t,\RV{\alpha},\RM{S})\in\Set{O}_{\rm joint}}=1$. Thus the joint and conditional maps satisfy (A2) of \cref{ass:regular}.
\end{IEEEproof}

\subsection{\texorpdfstring{Proof of \cref{lem:small-ball-ui}}{Proof of Lemma \getrefnumber{lem:small-ball-ui}}}
\label{app:proof-lem-small-ball-ui}
\begin{IEEEproof}
The Gaussian input satisfies
\begin{equation}
 \E{\|\RM{X}\|_F^2}
 =\E{\|\RM{S}\|_F^2}=T.
 \label{eq:gaussian-input-power}
\end{equation}
The realified density of $\RM{S}$ is
\begin{equation}
 p_{\RM{S}}(\M{S})
 =\left(\frac K\pi\right)^{KT}
 \exp(-K\|\M{S}\|_F^2),
 \qquad\M{S}\in\C^{K\times T}.
 \label{eq:gaussian-input-density}
\end{equation}
Together with U1--U2 and the independence in \eqref{eq:gaussian-ach-input}, this proves (A1). By \cref{cor:regular-chart-cover},
\begin{equation}
 \Prob{\rank\RM{S}=K}=1,
 \label{eq:gaussian-full-row-rank}
\end{equation}
and the joint and conditional ranks satisfy (A2). Since steering vectors have unit norms, the semi-unitarity of $\M{B}_K$ give
\[
 \|\RM{H}\M{B}_K\RM{S}\|_F
 \le\|\RV{\alpha}\|_1\|\RM{S}\|_F
 \le\sqrt K\,\|\RV{\alpha}\|_2\|\RM{S}\|_F.
\]
Using independence of $\RV{\alpha}$ and $\RM{S}$ yields
\[
 \begin{aligned}
 \E{\|\RM{H}\M{B}_K\RM{S}\|_F^2}
 &\le K\E{\|\RV{\alpha}\|_2^2}
          \E{\|\RM{S}\|_F^2}\\
 &=K^2\sigma_\alpha^2T<\infty.
 \end{aligned}
\]
This verifies (A3) and completes the proof.
\end{IEEEproof}

\section{Achievability}
\subsection{\texorpdfstring{Proof of \cref{prop:full-ach}}{Proof of Proposition \getrefnumber{prop:full-ach}}}
\label{app:proof-prop-full-ach}
\begin{IEEEproof}
Set $\RM{Z} =\M{H}(\RV{u}_r,\RV{u}_t,\RV{\alpha})\M{B}_K\RM{S}$. To apply \cref{prop:quotient-rank}, substitute its data-bearing variable $\RV{\theta}$ by $\realify(\RM{S})$, its nuisance variable $\RV{u}$ by $\bigl(\RV{u}_r,\RV{u}_t,\realify(\RV{\alpha})\bigr)$ and its noiseless output $\RV{z}$ by $\realify(\RM{Z})$. The realification of the white complex Gaussian noise has a fixed positive-definite covariance, as required by \cref{prop:quotient-rank}.

The power identity \eqref{eq:gaussian-input-power} makes the input admissible. By \cref{lem:small-ball-ui}, conditions (A1)--(A3) hold for the preceding choice of data and nuisance variables. On the probability-1 regular set, \cref{lem:joint-output-rank} gives $r=2KT+K$, while \cref{lem:fixed-input-rank} gives $r_u=4K$ for every fixed full-row-rank $\M{S}$. Hence \cref{prop:quotient-rank} and invariance of mutual information under realification yield
\begin{align*}
 I(\RM{S};\RM{Y}_\rho)
 &=\frac{r-r_u}{2}\log\rho+o(\log\rho)\\
 &=\frac{(2KT+K)-4K}{2}\log\rho
 +o(\log\rho)\\
 &=\left(KT-\frac{3K}{2}\right)\log\rho
 +o(\log\rho).
\end{align*}
Finally, $\RM{X}=\M{B}_K\RM{S}$ is a deterministic one-to-one reparameterization because $\M{B}_K^H\M{B}_K=\M{I}_K$. Hence
\[
 I(\RM{X};\RM{Y}_\rho)=I(\RM{S};\RM{Y}_\rho),
\]
and the per-channel-use pre-log is
\begin{align*}
 d_{\rm Gauss}
 &=\lim_{\rho\to\infty}
 \frac{I(\RM{X};\RM{Y}_\rho)}{T\log\rho}=K\left(1-\frac{3}{2T}\right).
\end{align*}
Thus the proof is completed.
\end{IEEEproof}

\subsection{\texorpdfstring{Proof of \cref{prop:single-beam}}{Proof of Proposition \getrefnumber{prop:single-beam}}}
\label{app:proof-prop-single-beam}
\begin{IEEEproof}
Choose $\RM{X}=\V{e}_1\RV{s}^T$, where $\RV{s}\sim\CN(\V{0},\M{I}_T)$ is independent of the channel and noise. Then $\E{\|\RM{X}\|_F^2}=T$. Since $x_{r,1}=x_{t,1}=0$ in the chosen array coordinates, keeping only the first receive antenna gives the column observation
\begin{align*}
 \RV{y}_{1,\rho}&=\sqrt\rho\,\rv{g}\RV{s}+\RV{n}_1,\\
 \rv{g}&:=\frac1{\sqrt{N_rN_t}}\sum_{k=1}^K\rv{\alpha}_k,
 \qquad
 \rv{g}\sim\CN\!\left(0,\frac{K\sigma_\alpha^2}{N_rN_t}\right).
\end{align*}
This is a scalar Rayleigh block-fading channel. For completeness, the map $(g,\V{s})\mapsto g\V{s}$ has joint real rank $2T$ and, for fixed $\V{s}\ne\V{0}$, nuisance rank $2$ on the probability-1 open set $\{g\ne0,\V{s}\ne\V{0}\}$. The Gaussian parameters are absolutely continuous and independent, and
$\E{\|\rv{g}\RV{s}\|_2^2}=
TK\sigma_\alpha^2/(N_rN_t)<\infty$.
Thus \cref{prop:quotient-rank}, with data $\realify(\RV{s})$
and nuisance $\realify(\rv{g})$, gives
\[
 I(\RV{s};\RV{y}_{1,\rho})
 =(T-1)\log\rho+o(\log\rho).
\]
Data processing and the one-to-one mapping $\RV{s}\mapsto\V{e}_1\RV{s}^T$ imply
\[
 I(\RM{X};\RM{Y}_\rho)
 \ge I(\RV{s};\RV{y}_{1,\rho}).
\]
Dividing by $T\log\rho$ proves the bound for every $K\ge1$,
including $K=1$ and $T=1$.
\end{IEEEproof}

\section{Genie Bounds}
\subsection{\texorpdfstring{Proof of \cref{prop:pathwise-genie}}{Proof of Proposition \getrefnumber{prop:pathwise-genie}}}
\label{app:proof-prop-pathwise-genie}
\begin{IEEEproof}
Given $\RM{Z}_\rho=\M{Z}$, draw mutually independent random variables $\widetilde{\rv{u}}_{r,1},\ldots,\widetilde{\rv{u}}_{r,K}$, independently of $\RM{Z}_\rho$, where $\widetilde{\rv{u}}_{r,k}$ has density $f_{\rv{u}_{r,k}}$ for every $k$. Set $\widetilde{\RV{u}}_r = (\widetilde{\rv{u}}_{r,1},\ldots,\widetilde{\rv{u}}_{r,K})$ and $\widetilde{\RM{A}}_r=\M{A}_r(\widetilde{\RV{u}}_r)$. Every column of $\widetilde{\RM{A}}_r$ has unit norm, implying that
\[
 \|\widetilde{\RM{A}}_r\|_2^2
 \le
 \|\widetilde{\RM{A}}_r\|_F^2
 =K,
\]
and hence
\[
 \RM{\Sigma}_{\widetilde{\RM{N}}}
 :=
 \M{I}_{N_r}
 -K^{-1}\widetilde{\RM{A}}_r\widetilde{\RM{A}}_r^H
 \succeq0.
\]
Conditioned on $\widetilde{\RV{u}}_r$, draw the $T$ columns of $\widetilde{\RM{N}}$ independently from $\CN(\V{0},\RM{\Sigma}_{\widetilde{\RM{N}}})$, independently of $\M{Z}$, and define
\[
 \widetilde{\RM{Y}}
 =
 \widetilde{\RM{A}}_r\M{Z}
 +\widetilde{\RM{N}}.
\]
Substituting \eqref{eq:pathwise-genie},
\[
 \widetilde{\RM{Y}}
 =
 \sqrt\rho\,
 \widetilde{\RM{A}}_r\M{D}(\RV{\alpha})
 \M{A}_t(\RV{u}_t)^H\RM{X}
 +\RM{N}',
\]
where $\RM{N}' = \widetilde{\RM{A}}_r\RM{N}_0 +\widetilde{\RM{N}}$. Its conditional covariance is
\[
 K^{-1}\widetilde{\RM{A}}_r\widetilde{\RM{A}}_r^H
 +\RM{\Sigma}_{\widetilde{\RM{N}}}
 =\M{I}_{N_r},
\]
so $\RM{N}'$ is white complex Gaussian noise. Moreover, we see that $\widetilde{\RV{u}}_r$ is independent of $(\RV{u}_t,\RV{\alpha},\RM{X})$. By (U1)--(U2), $(\widetilde{\RV{u}}_r,\RV{u}_t,\RV{\alpha},\RM{X})$ is identically distributed as $(\RV{u}_r,\RV{u}_t,\RV{\alpha},\RM{X})$, under the indexed coordinates specified in \cref{ass:ula}. Therefore,
\[
 P_{\widetilde{\RM{Y}}\mid\RM{X}}
 =
 P_{\RM{Y}_\rho\mid\RM{X}}.
\]
The preceding randomization depends on $\RM{X}$ only through $\RM{Z}_\rho$, and therefore defines the kernel $Q_\rho$ and the Markov chain $\RM{X}\longrightarrow\RM{Z}_\rho\longrightarrow\widetilde{\RM{Y}}$. The equality of the conditional output laws gives
\[
 I(\RM{X};\RM{Y}_\rho)
 =I(\RM{X};\widetilde{\RM{Y}})
 \le I(\RM{X};\RM{Z}_\rho),
\]
which proves \eqref{eq:pathwise-data-processing}. 
\end{IEEEproof}

\subsection{\texorpdfstring{Proof of \cref{prop:geom-genie}}{Proof of Proposition \getrefnumber{prop:geom-genie}}}
\label{app:proof-thm-geom-genie}
\begin{IEEEproof}
For $K=1$, write $\RV{a}_r=\V{a}_r(\rv{u}_r)$ and $\RV{a}_t=\V{a}_t(\rv{u}_t)$. According to the geometric genie, we reveal $\RS{V}=(\rv{u}_r,\rv{u}_t)$ to the receiver. Since the transmitter does not know the geometry, $\RM{X}\perp\RS{V}$, and hence
\begin{align}
 I(\RM{X};\RM{Y}_\rho)
 &\le I(\RM{X};\RM{Y}_\rho,\RS{V})=I(\RM{X};\RM{Y}_\rho\mid\RS{V}).
 \label{eq:single-geometric-genie}
\end{align}
Define the effective input and output by
\begin{align*}
 \RV{s}_{\rm eff}&:=\RV{a}_t^H\RM{X},\\
 \RV{z}_{\rm eff}&:=\RV{a}_r^H\RM{Y}_\rho=\sqrt\rho\,\rv{\alpha}\RV{s}_{\rm eff}+\RV{w},
 \qquad \RV{w}\sim\CN(\V{0},\M{I}_T).
\end{align*}
Conditioned on $\RS{V}$, the component of $\RM{Y}_\rho$ orthogonal to $\RV{a}_r$ is independent white Gaussian noise. Thus $\RV{z}_{\rm eff}$ is a sufficient statistic. Since $\RV{s}_{\rm eff}$ is a function of $(\RM{X},\RS{V})$, the conditional Markov chain $\RM{X}\longrightarrow\RV{s}_{\rm eff}\longrightarrow\RV{z}_{\rm eff}$ gives
\[
 I(\RM{X};\RM{Y}_\rho\mid\RS{V})
 =I(\RV{s}_{\rm eff};\RV{z}_{\rm eff}\mid\RS{V}).
\]
Moreover, $\|\RV{a}_t\|_2=1$ and $\RM{X}\perp\RS{V}$ imply
\[
 \E{\|\RV{s}_{\rm eff}\|_2^2\mid\RS{V}}
 \le\E{\|\RM{X}\|_F^2}\le T.
\]
For each revealed geometry, this is the scalar constant-block noncoherent channel. Its capacity upper bound is uniform over all input distributions under average power constraints, including \ac{snr}-dependent ones, and gives \cite{ZhengTse2002,Morgenshtern2013}
\[
 I(\RV{s}_{\rm eff};\RV{z}_{\rm eff}\mid\RS{V})
 \le(T-1)\log\rho+o(\log\rho).
\]
Combining the preceding bounds yields
\begin{align*}
 I(\RM{X};\RM{Y}_\rho)
 &\le(T-1)\log\rho+o(\log\rho),\\
 d_{\rm blind}^{\rm cap}
 &\le1-\frac1T.
\end{align*}
Thus we are done.
\end{IEEEproof}

\section{AoD Resolvability and the Conditional-Output Entropy Bound}
\emph{Convention for the appendices regarding converse:}
Unless stated otherwise, constants may depend on the fixed array positions, dimensions, directional supports and density bounds, gain variance, positive-definite noise covariance, and fixed shell or radius bounds. They never depend on $\rho$, the input matrix or its distribution, $\sigma_2$, the variable tube radius $\delta$, a quantization cell, or a shell index. In \cref{lem:preimage-volume} and its sublemmas the constants depend only on $(\Set{\Lambda},\Set{I},T)$. The constants in the multiscale proof may additionally depend on fixed $(A,\Delta)$.

\subsection{\texorpdfstring{Proof of \cref{thm:wronskian}}{Proof of Proposition \getrefnumber{thm:wronskian}}}
\label{app:proof-thm-wronskian}
\begin{IEEEproof}
If $\rank\M{X}\le1$, then $\V{w}_{\M{X}}\equiv0$ and \eqref{eq:w-sigma} holds. We next consider the case where $\rank\M{X}\ge2$. Take a singular value decomposition
\[
 \M{X}=\sum_{i=1}^{\rank \M{X}}\sigma_i \V{u}_i\V{v}_i^H,
\]
and define the scalar exponential polynomials
\[
 p_i(u)=\V{a}_t(u)^H\V{u}_i.
\]
Then
\begin{align*}
 \V{v}_{\M{X}}(u)
 &=\sum_{i=1}^{\rank \M{X}}\sigma_ip_i(u)\V{v}_i^H,~~\dot{\V{v}}_{\M{X}}(u)
 =\sum_{i=1}^{\rank \M{X}}\sigma_i\dot p_i(u)\V{v}_i^H.
\end{align*}
By the bilinearity and antisymmetry of the exterior product, we obtain
\begin{equation}
 \begin{aligned}
 \V{w}_{\M{X}}(u)
 &=
 \sum_{1\le r<s\le\rank\M{X}}\sigma_r\sigma_s[p_r(u)\dot p_s(u)-p_s(u)\dot p_r(u)]
 (\V{v}_r^H\wedge\V{v}_s^H).
 \end{aligned}
 \label{eq:exterior-svd}
\end{equation}
Since $\{\V{v}_r^H\wedge \V{v}_s^H:r<s\}$ is an orthonormal family in $\bigwedge^2\C^T$, the identity holds separately at each distinct frequency. Summing these coefficient identities gives
\begin{equation}
 \|\V{w}_{\M{X}}\|_{\rm coef}^2
 =
 \sum_{r<s}\sigma_r^2\sigma_s^2
 \|p_r(u)\dot p_s(u)-p_s(u)\dot p_r(u)\|_{\rm coef}^2.
 \label{eq:exterior-orthogonal-sum}
\end{equation}

Consider the compact complex Stiefel manifold
\[
 \operatorname{St}(2,N_t)
 =
 \{\M{B}\in\mathbb{C}^{N_t\times 2}:\M{B}^H\M{B}=\M{I}\}.
\]
For $\M{B}=[\V{b}_1,\V{b}_2]\in\operatorname{St}(2,N_t)$, define
\[
 p_i(u)=\V{a}_t(u)^H\V{b}_i,~~
 \kappa(\V{b}_1,\V{b}_2)
 =
 \|p_1(u)\dot p_2(u)-p_2(u)\dot p_1(u)\|_{\rm coef}.
\]
If $\kappa(\V{b}_1,\V{b}_2)=0$, then on every open interval where $p_1\ne0$, we have $\frac{d}{du}\left(\frac{p_2(u)}{p_1(u)}\right)=0$. Analytic continuation gives $p_2(u)=cp_1(u)$. The distinct exponentials are linearly independent, because their derivatives of orders $0,\ldots,N_t-1$ give a nonsingular frequency Vandermonde matrix. Comparing their coefficients therefore yields $\V{b}_2=c\V{b}_1$, contradicting $\V{b}_1^H\V{b}_2=0$. Hence the continuous function $\kappa$ attains a positive minimum on the compact set $\operatorname{St}(2,N_t)$
\[
 c
 :=
 \min_{\operatorname{St}(2,N_t)}
 \kappa(\V{b}_1,\V{b}_2)>0.
\]
Finally, by keeping only the $(r,s)=(1,2)$ term in \eqref{eq:exterior-orthogonal-sum}, we obtain
\[
 \|\V{w}_{\M{X}}\|_{\rm coef}
 \ge
 c\sigma_1(\M{X})\sigma_2(\M{X}),
\]
which proves \eqref{eq:w-sigma}.
\end{IEEEproof}

\subsection{\texorpdfstring{Proof of \cref{lem:preimage-volume}}{Proof of Lemma \getrefnumber{lem:preimage-volume}}}
\label{app:proof-lem-preimage-volume}
For the fixed NULA frequencies, we first prove a uniform bound on the number of zeros, and obtain a sublevel-set estimate from the Tur\'an--Nazarov inequality.
\begin{sublemma}[Uniform control of zeros and sublevel sets]\label{lem:exp-uniform}
Let \(\Set{\Lambda}=\{\lambda_1,\ldots,\lambda_L\}\) be a fixed set of distinct real frequencies and let \(\Set{I}\) be a fixed compact interval of positive length. For $f(u)=\sum_{\ell=1}^{L}c_\ell e^{\jmathu\lambda_\ell u}$, there exist \(Z,C<\infty\), depending only on \((\Set{\Lambda},\Set{I})\), such that every nonzero \(f\) has at most \(Z\) distinct real zeros on \(\Set{I}\), and
\begin{equation}
 \bigl|\{u\in\Set{I}:|f(u)|\le\eta\|f\|_{\rm coef}\}\bigr|
 \le C\eta^{1/q},
 \label{eq:sublevel}
\end{equation}
where $0<\eta<1$ and $q=\max\{1,L-1\}$.
\end{sublemma}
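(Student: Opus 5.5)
By homogeneity we normalize $\|f\|_{\rm coef}=1$. Both claims then concern the compact set $\Set{K}=\{\V{c}\in\C^L:\|\V{c}\|_2=1\}$ of coefficient vectors, and the plan is to make every constant uniform over $\Set{K}$ by compactness.

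\emph{Zero count.} We will use the classical bound of P\'olya--Szeg\H{o} type for exponential sums. If $f=\sum_{\ell}c_\ell e^{\jmathu\lambda_\ell u}\not\equiv0$, then the number of zeros of $f$, counted with multiplicity, in an interval of length $|\Set{I}|$ is at most
\[
 (L-1)+\frac{|\Set{I}|(\max_\ell\lambda_\ell-\min_\ell\lambda_\ell)}{2\pi}.
\]
The proof goes by induction on $L$ with the operator $g\mapsto e^{\jmathu\lambda_L u}\frac{d}{du}\bigl(e^{-\jmathu\lambda_L u}g\bigr)$, which deletes one frequency. In the complex-valued case the Rolle step has to be replaced by an argument-principle count, and we will quote the standard result in the form given by Nazarov instead of reproving it. The bound defines $Z$, and it depends only on $(\Set{\Lambda},\Set{I})$. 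Nonvanishing of $f$ follows from the Vandermonde independence already used in the proof of \cref{thm:wronskian}.

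\emph{Sublevel estimate.} Here we apply the Tur\'an--Nazarov inequality. For $f$ as above and any measurable $\Set{E}\subseteq\Set{I}$ of positive measure,
\[
 \sup_{\Set{I}}|f|\le\left(\frac{A|\Set{I}|}{|\Set{E}|}\right)^{L-1}\sup_{\Set{E}}|f|,
\]
where $A$ depends only on $L$ and $|\Set{I}|\cdot\max_\ell|\lambda_\ell|$. Take $\Set{E}=\{u\in\Set{I}:|f(u)|\le\eta\}$. This gives $m_{\Set{I}}:=\sup_{\Set{I}}|f|\le(A|\Set{I}|/|\Set{E}|)^{L-1}\eta$, so
\[
 |\Set{E}|\le A|\Set{I}|\,(\eta/m_{\Set{I}})^{1/(L-1)}\qquad\text{for }L\ge2.
\]
It remains to bound $m_{\Set{I}}$ below uniformly. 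The map $\V{c}\mapsto\sup_{\Set{I}}|f|$ is continuous on $\Set{K}$. It is positive there, because $f\equiv0$ on an interval forces $\V{c}=\V{0}$ by analyticity and linear independence of distinct exponentials. Compactness therefore gives $m_{\Set{I}}\ge m_0(\Set{\Lambda},\Set{I})>0$, which yields \eqref{eq:sublevel} with $q=L-1$. For $L=1$ we have $|f|\equiv|c_1|=1$, so the set is empty when $\eta<1$, and $q=1$ holds trivially. When $L\ge2$ and $\eta<1$, the exponent $1/(L-1)$ is at most $1$, and we may also cap the right-hand side by $|\Set{I}|$.

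\emph{Main obstacle.} The difficult part is the exact form of the zero-count theorem for complex coefficients. If quoting Nazarov's theorem is not acceptable, we fall back on a compactness argument. On $\Set{K}$ the functions extend to entire functions of exponential type, and Jensen's formula on a disc containing $\Set{I}$ bounds the number of zeros by $\log(\sup_{\rm disc}|f|/|f(u^*)|)$ at a suitable center $u^*$. Both quantities can be bounded uniformly over $\Set{K}$ by choosing $u^*$ among finitely many points, which is possible since $\max_j|f(u_j)|$ is continuous and positive on $\Set{K}$.
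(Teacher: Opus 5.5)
Your proof is correct. For the sublevel estimate it is the paper's argument. You apply Tur\'an--Nazarov to $\{u\in\Set{I}:|f(u)|\le\eta\}$ and combine it with a uniform lower bound $\sup_{\Set{I}}|f|\ge c_0>0$ on the normalized coefficient sphere. The paper states that $A$ is absolute and that there is no exponential prefactor because the exponents $\jmathu\lambda_\ell$ are purely imaginary; your weaker form also suffices. The paper calls the lower bound norm equivalence on the finite-dimensional span, and you get it by compactness.

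The real difference is the zero count. The paper neither cites the classical strip bound nor uses Jensen. It uses the jet identity $(f,f',\ldots,f^{(L-1)})^T(u)=\M{V}_{\Set{\Lambda}}\diag(e^{\jmathu\lambda_1u},\ldots,e^{\jmathu\lambda_Lu})\V{c}$, with $\M{V}_{\Set{\Lambda}}$ the nonsingular frequency Vandermonde. This shows that at every point some derivative of order at most $L-1$ of $\Real f$ or $\Imag f$ has absolute value at least $a=\sigma_{\min}(\M{V}_{\Set{\Lambda}})/\sqrt{2L}$. All derivatives of orders $1,\ldots,L$ are bounded by a constant $B$. Hence on each subinterval of length $a/(2B)$ that derivative keeps a fixed sign. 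Rolle's theorem applied to the real function $g\in\{\Real f,\Imag f\}$ then gives at most $L-1$ zeros of $g$, hence of $f$, per subinterval.

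This removes exactly the obstacle you flag, namely Rolle failing for complex-valued $f$, by passing locally to a real or imaginary part. It is fully self-contained, at the price of a crude constant $Z\le(L-1)(1+\lceil2B|\Set{I}|/a\rceil)$ that depends on $\sigma_{\min}(\M{V}_{\Set{\Lambda}})$. Your cited bound $(L-1)+|\Set{I}|(\max_\ell\lambda_\ell-\min_\ell\lambda_\ell)/(2\pi)$ is sharper and explicit. After $z=\jmathu u$ it is the classical count of zeros of a real-frequency exponential sum in a horizontal strip of height $|\Set{I}|$. However, it rests on an external argument-principle theorem.

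Your Jensen fallback is also self-contained, but you need to justify that $\max_j|f(u_j)|$ is positive on the unit sphere. Choose $L$ points $u_j\in\Set{I}$ at which $[e^{\jmathu\lambda_\ell u_j}]_{j,\ell}$ is nonsingular. Such points exist because this determinant is a nontrivial real-analytic function of the points, as in the proof of \cref{lem:confluent}. Then take $R$ with $\Set{I}$ contained in the disc of radius $R/e$ about every $u_j$. This gives $Z\le\log\bigl(\sqrt{L}\,e^{R\max_\ell|\lambda_\ell|}/m_0\bigr)$, where $m_0:=\min_{\|\V{c}\|_2=1}\max_j|f(u_j)|>0$.
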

\begin{IEEEproof}
Normalize \(\|f\|_{\rm coef}=1\). The case \(L=1\) is immediate, thus we
assume \(L\ge2\) hereafter. Define the fixed derivative matrix
\[
 \M{V}_\Set{\Lambda}=[(\jmathu\lambda_\ell)^r]_{r=0,\ \ell=1}^{L-1,\ L}.
\]
It is nonsingular by the Vandermonde determinant. At every real \(u\),
\begin{equation}
 \begin{bmatrix}f(u)\\ f'(u)\\ \vdots\\ f^{(L-1)}(u)\end{bmatrix}
 =\M{V}_\Set{\Lambda}
   \diag(e^{\jmathu\lambda_1u},\ldots,e^{\jmathu\lambda_Lu})\V{c}.
 \label{eq:jet}
\end{equation}
Since the diagonal matrix is unitary, at least one derivative has modulus at least \(\sigma_{\min}(\M{V}_\Set{\Lambda})/\sqrt L\). One of its real or imaginary parts consequently has absolute value at least \(a:=\sigma_{\min}(\M{V}_\Set{\Lambda})/\sqrt{2L}>0\). Every derivative of orders \(1,\ldots,L\) is bounded in modulus by
\[
 B:=1+\max_{1\le j\le L}
       \left(\sum_\ell|\lambda_\ell|^{2j}\right)^{1/2}.
\]
Partition \(\Set{I}\) into at most \(1+\lceil2B|\Set{I}|/a\rceil\) intervals of
length at most \(a/(2B)\). At the left endpoint of each interval choose \(g=\Real f\) or \(g=\Imag f\) and an order \(0\le j\le L-1\) that achieve the bound \(a\). The derivative bound implies \( |g^{(j)}|\ge a/2\) throughout that interval. In particular, its sign is fixed. If \(j=0\), there are no zeros of \(g\) on the interval. If \(j\ge1\), by repeating Rolle's theorem we see that \(g\) has at most \(j\) distinct zeros. Zeros of \(f\) are also zeros of \(g\), proving a bound uniform in all nonzero coefficients.

The sublevel estimate is a direct consequence of the Tur\'an--Nazarov inequality~\cite[Theorem~I]{Nazarov1993}. Since the frequencies are fixed and distinct, norm equivalence on their finite-dimensional span gives $\sup_{\Set{I}}|f|\ge c_0>0$ under the normalization $\|f\|_{\rm coef}=1$, where $c_0$ depends only on $(\Set{\Lambda},\Set{I})$. Let $\Set{E}_\eta=\{u\in\Set{I}:|f(u)|\le\eta\}$. If $|\Set{E}_\eta|>0$, the cited inequality gives
\[
 c_0\le\sup_{\Set{I}}|f|
 \le\left(\frac{A|\Set{I}|}{|\Set{E}_\eta|}\right)^{L-1}
       \sup_{\Set{E}_\eta}|f|
 \le\left(\frac{A|\Set{I}|}{|\Set{E}_\eta|}\right)^{L-1}\eta,
\]
where $A$ is an absolute constant. The exponential prefactor in the general inequality is one because the exponents $\jmathu\lambda_\ell$ are purely imaginary. Rearranging yields
\[
 |\Set{E}_\eta|
 \le A|\Set{I}|c_0^{-1/(L-1)}\eta^{1/(L-1)},
\]
which also holds when $|\Set{E}_\eta|=0$. This proves \eqref{eq:sublevel} uniformly over the coefficients.
\end{IEEEproof}

Next we show the main lemma. 
\begin{IEEEproof}
Define the set
$$
\Set{G}_\tau=\{t\in\Set{I}:\|\V{v}(t)\|_2\geq \tau,~\|\V{w}(t)\|_2\geq \delta\tau\},
$$
where $\V{w}(t)=\V{v}(t)\wedge \dot{\V{v}}(t)$. By applying \cref{lem:exp-uniform} to components of $\V{v}$ and $\V{w}$, whose corresponding frequency sets are $\Set{\Lambda}$ and $\Set{\Lambda}+\Set{\Lambda}$, respectively, we see that $|\Set{I}\backslash\Set{G}_\tau|\leq C\tau^{\kappa}$. We also have $\sup_{\Set{I}}\|\V{v}\|_2=O(1)$ since $\|\V{v}\|_{\rm coef}=1$. Now, fix $u\in\Set{G}_\tau$, $\alpha\neq 0$, and $\varepsilon>0$. Let $\Set{E}_u$ be the set of directional cosines $t\in\Set{G}_\tau$, for where there exists some $\alpha^\prime\in\mathbb{C}$ such that 
\begin{equation}\label{preimage_condition_9}
\|\alpha^\prime\V{v}(t)-\alpha\V{v}(u)\|_2\leq \varepsilon.
\end{equation}
For $\|\V{v}(t)\|_2^2> 0$, define the projector $\M{\Pi}(t)=\V{v}(t)\V{v}(t)^H\|\V{v}(t)\|_2^{-2}$. An orthogonal projection gives
\begin{subequations}
\begin{align}
\|(\M{I}-\M{\Pi}(t))\V{v}(u)\|_2&=\|(\M{I}-\M{\Pi}(\V{t}))((\alpha^\prime/\alpha)\V{v}(t)-\V{v}(u))\|_2\leq\frac{\varepsilon}{|\alpha|},\\
\|\M{\Pi}(u)-\M{\Pi}(t)\|_F&=\frac{\sqrt{2}\|(\M{I}-\M{\Pi}(t))\V{v}(u)\|_2}{\|\V{v}(u)\|_2}\leq \frac{\sqrt{2}\varepsilon}{|\alpha|\tau}.\label{proj_proximity}
\end{align}
\end{subequations}
Let us denote the $j$-th entry of the realified version of $\M{\Pi}(t)$ as $p_j(t)$. From \eqref{proj_proximity} we see that $p_j(t)\in[p_j(u)-\sqrt{2}\varepsilon/(|\alpha|\tau),p_j(u)+\sqrt{2}\varepsilon/(|\alpha|\tau)]$, for all $t\in\Set{E}_u$. Using the one-dimensinal area formula, we obtain
\begin{equation}\label{area1d}
\int_{\Set{E}_u} |\dot{p}_j(t)|{\rm d}t=\int_{\mathbb{R}}\#\{t\in\Set{E}_u:p_j(t)=y\}{\rm d}y=\int_{p_j(u)-\sqrt{2}\varepsilon/(|\alpha|\tau)}^{p_j(u)+\sqrt{2}\varepsilon/(|\alpha|\tau)}\#\{t\in\Set{E}_u:p_j(t)=y\}{\rm d}y.
\end{equation}
These derivatives can be computed as
\begin{equation}\label{dotpi}
\dot{\M{\Pi}}(t)=\frac{(\M{I}-\M{\Pi}(t))\dot{\V{v}}(t)\V{v}(t)^H+\V{v}(t)\dot{\V{v}}(t)^H(\M{I}-\M{\Pi}(t))}{\|\V{v}(t)\|_2^2},
\end{equation}
and hence the entries take the form of $p_j(t)=a_j(t)/\|\V{v}(t)\|_2^2$, where both the numerator and the denominator are real-valued exponential polynomials with frequencies in $\Set{\Lambda}-\Set{\Lambda}$. For a non-constant $p_j$, we see that $a_j(t)-a\|\V{v}(t)\|_2^2$ is never identically zero for any $a\in\mathbb{R}$. According to \cref{lem:exp-uniform}, we have a uniform bound for all non-constant $p_j$, namely $\#\{t\in\Set{E}_u:p_j(t)=y\}\leq C_0$, implying that $\int_{\Set{E}_u} |\dot{p}_j(t)|{\rm d}t\leq 2\sqrt{2}C_0\varepsilon/(|\alpha|\tau)$. For the entire $\M{\Pi}(t)$, we have
\begin{align}
\int_{\Set{E}_u}\|\dot{\M{\Pi}}(t)\|_F~{\rm d} t&\leq \sum_j\int_{\Set{E}_u}|\dot{p}_j(t)|{\rm d}t\leq \frac{C_1\varepsilon}{|\alpha|\tau},
\end{align}
since there are finite number of entries in $\M{\Pi}(t)$. Now, from \eqref{dotpi} we obtain
\begin{equation}
\|\dot{\M{\Pi}}(t)\|_F=\left[\frac{2\|(\M{I}-\M{\Pi}(t))\dot{\V{v}}(t)\|_2^2}{\|\V{v}(t)\|_2^2}\right]^{\frac{1}{2}}=\frac{\sqrt{2}\|\V{w}(t)\|_2}{\|\V{v}(t)\|_2^2}\geq c\tau \delta,~~\forall t\in\Set{G}_\tau,
\end{equation}
for some positive constant $c$. Therefore
\begin{equation}
|\Set{E}_u|\leq \frac{1}{c\tau\delta}\int_{\Set{E}_u}\|\dot{\M{\Pi}}(t)\|_F~{\rm d}t\leq \frac{C_2\varepsilon}{|\alpha|\tau^2\delta}.
\end{equation}
Since $|\Set{E}_u|\leq |\Set{I}|$, we further have 
\begin{equation}
|\Set{E}_u|\leq C\min\left\{1,\frac{\epsilon}{|\alpha|\tau^2\delta}\right\}.
\end{equation}
For the complex gain $\alpha^\prime$, from \eqref{preimage_condition_9} we obtain
\begin{align}
\left|\alpha^\prime-\frac{\alpha\V{v}(t)^H\V{v}(u)}{\|\V{v}(t)\|_2^2}\right|&=\frac{|\V{v}(t)^H(\alpha^\prime\V{v}(t)-\alpha\V{v}(u))|}{\|\V{v}(t)\|_2^2}\nonumber \\
&\leq \frac{\varepsilon}{\|\V{v}(t)\|_2}\leq \frac{\varepsilon}{\tau}.
\end{align}
Fubini’s theorem therefore gives
\begin{align}
\int_{\Set{P}_\varepsilon(\alpha,u)}{\rm d}^2\alpha^\prime {\rm d}t&\leq \pi\varepsilon^2\tau^{-2}|\Set{E}_u|\nonumber \\
&\leq C\varepsilon^2\tau^{-2}\min\left\{1,\frac{\epsilon}{|\alpha|\tau^2\delta}\right\}\nonumber \\
&\leq C\varepsilon^2\tau^{-4}\min\left\{1,\frac{\epsilon}{|\alpha|\delta}\right\},
\end{align}
which completes the proof.
\end{IEEEproof}

\subsection{Gaussian quantization lower bound}
\begin{lemma}[Gaussian quantization lower bound]
\label{lem:quantization-lower}
Let $\RV{x}\in\R^n$ be independent of $\RV{n}\sim\mathcal N(\V{0},\M{I}_n)$, and suppose $H(\lfloor\RV{x}\rfloor)<\infty$. There exists a constant $C_n<\infty$ that depends only on $n$ such that, for every real $s\ge1$,
\begin{equation}
 I(\RV{x};s\RV{x}+\RV{n})
 \ge H(\lfloor s\RV{x}\rfloor)-C_n.
 \label{eq:quantization-lower}
\end{equation}
The statement remains valid for Gaussian noise with any fixed positive-definite covariance, in which case the constant may also depend on that covariance.
This is essentially the lattice-quantization counterpart of Wu's nonasymptotic small-ball sandwich \cite[Lemma 32]{WuThesis2011}.
\end{lemma}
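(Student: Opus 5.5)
The plan is to compare $\RV{x}$ with its quantization. Let $\RV{q}=\lfloor s\RV{x}\rfloor\in\mathbb Z^n$ and $\RV{e}=s\RV{x}-\RV{q}\in[0,1)^n$, so that the observation is $\RV{y}=s\RV{x}+\RV{n}=\RV{q}+\RV{e}+\RV{n}$. Since $\RV{q}$ is a function of $\RV{x}$, data processing gives $I(\RV{x};\RV{y})\ge I(\RV{q};\RV{y})=H(\RV{q})-H(\RV{q}\mid\RV{y})$. The finiteness assumption matters here: $H(\lfloor\RV{x}\rfloor)<\infty$ implies $H(\lfloor s\RV{x}\rfloor)\le H(\lfloor\RV{x}\rfloor)+n\log(\lceil s\rceil+1)<\infty$, because each integer cell of $\RV{x}$ splits into at most $(\lceil s\rceil+1)^n$ cells at scale $s$. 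Hence the subtraction is legitimate. It therefore suffices to bound the equivocation $H(\RV{q}\mid\RV{y})$ by a constant $C_n$ that does not depend on $s$ or on $P_{\RV{x}}$.

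For the equivocation I would use a fixed decoder: $\widehat{\RV{q}}=\lfloor\RV{y}\rfloor$, so that $\RV{q}-\widehat{\RV{q}}=-\lfloor\RV{e}+\RV{n}\rfloor=:\RV{d}$. Then $H(\RV{q}\mid\RV{y})\le H(\RV{q}\mid\widehat{\RV{q}})=H(\RV{d}\mid\widehat{\RV{q}})\le H(\RV{d})$. Componentwise, $d_i$ is an integer with $\Prob{|d_i|\ge k+1}\le\Prob{|n_i|\ge k}\le 2e^{-k^2/2}$. This holds uniformly over the law of $e_i\in[0,1)$, although $\RV{d}$ is not independent of $\RV{e}$. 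An integer variable with such a Gaussian tail has entropy bounded by an absolute constant $c_1$, for instance via the maximum-entropy bound for integer variables with a given bound on $\E{|d_i|}$ (geometric law). Subadditivity then gives $H(\RV{d})\le nc_1=:C_n$, and the bound \eqref{eq:quantization-lower} follows.

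For a fixed covariance $\M{\Sigma}\succ0$, I would write $\RV{n}=\M{\Sigma}^{1/2}\RV{g}$ with $\RV{g}$ standard. The same decoder $\lfloor\RV{y}\rfloor$ still works, since each $n_i$ is Gaussian with variance $\Sigma_{ii}$, so the tail bound only rescales by $\max_i\Sigma_{ii}$. The constant then depends on $\M{\Sigma}$ through that quantity. No whitening is needed, which avoids distorting the lattice.

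The main obstacle is the equivocation step. Every bound there must be uniform in $s$ and in the input law, including laws concentrated on cell boundaries. The key point is that $\RV{e}$ is bounded in $[0,1)^n$ and only shifts the rounding error by at most one, so the tail bound is distribution-free. The technical care will go into checking measurability and the finiteness needed to write $I(\RV{q};\RV{y})=H(\RV{q})-H(\RV{q}\mid\RV{y})$.
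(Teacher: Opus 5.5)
Your proposal is correct and follows essentially the same route as the paper: quantize to $\lfloor s\RV{x}\rfloor$, decode with $\lfloor\RV{y}\rfloor$, and bound the entropy of the resulting integer error uniformly in $s$ and $P_{\RV{x}}$. The only difference is in the last step: the paper writes the error as $-\lfloor\RV{n}\rfloor-\RV{b}$ with $\RV{b}\in\{0,1\}^n$ to obtain the bound $H(\lfloor\RV{n}\rfloor)+n\log 2$, while your componentwise Gaussian-tail and maximum-entropy argument works equally well.
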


\begin{IEEEproof}
Let
\[
 \RV{q}_s=\lfloor s\RV{x}\rfloor,\qquad
 \RV{r}_s=s\RV{x}-\RV{q}_s\in[0,1)^n,\qquad
 \RV{y}_s=s\RV{x}+\RV{n}.
\]
Since each coordinate of $\lfloor s\RV{x}\rfloor$ has at
most $\lceil s\rceil+1$ possible values given
$\lfloor\RV{x}\rfloor$,
\[
 H(\RV{q}_s)
 \le H(\lfloor\RV{x}\rfloor)
     +n\log(\lceil s\rceil+1)<\infty.
\]
Use $\widehat{\RV{q}}_s=\lfloor\RV{y}_s\rfloor$ to
estimate $\RV{q}_s$. The integer error satisfies
\[
 \RV{d}_s
 :=\RV{q}_s-\widehat{\RV{q}}_s
 =-\lfloor\RV{r}_s+\RV{n}\rfloor
 =-\lfloor\RV{n}\rfloor-\RV{b}_s,
\]
where $\RV{b}_s\in\{0,1\}^n$, because
$\RV{r}_s\in[0,1)^n$. Consequently,
\[
 \begin{aligned}
 H(\RV{d}_s)
 &\le H(\lfloor\RV{n}\rfloor,\RV{b}_s)\\
 &\le H(\lfloor\RV{n}\rfloor)+n\log2
 =:C_{n,\M{\Sigma}}<\infty.
 \end{aligned}
\]
The constant is finite since the fixed Gaussian noise has
a finite second moment. Since $\widehat{\RV{q}}_s$ is a
function of $\RV{y}_s$,
\[
 \begin{aligned}
 I(\RV{x};\RV{y}_s)
 &\ge I(\RV{q}_s;\RV{y}_s)\\
 &=H(\RV{q}_s)-H(\RV{q}_s\mid\RV{y}_s)\\
 &=H(\RV{q}_s)-H(\RV{d}_s\mid\RV{y}_s)\\
 &\ge H(\lfloor s\RV{x}\rfloor)-C_{n,\M{\Sigma}}.
 \end{aligned}
\]
For $\M{\Sigma}=\M{I}_n$, the constant depends only on $n$. Thus the proof is completed.
\end{IEEEproof}

\subsection{\texorpdfstring{Proof of \cref{thm:one-path-entropy}}{Proof of Proposition \getrefnumber{thm:one-path-entropy}}}
\label{app:proof-thm-one-path-entropy}
\begin{IEEEproof}
All constants below are uniform over $1\le\sigma_1(\M{X}_\rho)\le C_1$. Write $\sigma_2=\sigma_2(\M{X}_\rho)$ and
\[
 \RV{z}_\rho=\rv{\alpha}\V{v}_{\M{X}_\rho}(\rv{u}),\qquad
 \RV{y}_\rho=\sqrt\rho\,\RV{z}_\rho+\RV{n}.
\]
First establish the complex-gain contribution for the entire shell. Note that there is a column $\V{x}_{\rho,t_\rho}$ of $\M{X}_\rho$ with norm no less than the column-average, namely $T^{-1/2}$. Thus $p_\rho(u)=\V{a}_t(u)^H\V{x}_{\rho,t_\rho}$ has frequencies in $\Set{\Lambda}_t$ and coefficient norm $\|\V{x}_{\rho,t_\rho}\|_2/\sqrt{N_t}$, bounded above and away from zero. \Cref{lem:exp-uniform} with $L=N_t$ and the directional-density upper
bound give, uniformly for $0<t<1$,
\[
 \Prob{|p_\rho(\rv{u})|<t}\le Ct^{\frac{1}{N_t-1}}.
\]
The constants are uniform over the entire coefficient family by \cref{lem:exp-uniform} and depend on the fixed positions and interval. In particular,
\[
 \E{\log^-|p_\rho(\rv{u})|}
 \!=\!\int_0^\infty\Prob{\log^-|p_\rho(\rv{u})|>t}{\rm d}t=\int_0^\infty\Prob{|p_\rho(\rv{u})|\!<\!e^{-t}}{\rm d}t
 \!\le\! C(N_t\!-\!1).
\]
The positive logarithmic part is bounded by the uniform coefficient upper bound. Given $\rv{u}=u$, the output is Gaussian with covariance $\M{\Sigma}_n+\rho\sigma_\alpha^2
 \V{v}_{\M{X}_\rho}(u)^H\V{v}_{\M{X}_\rho}(u)$. The determinant lemma~\cite[Sec.~0.8.5]{HornJohnson2013}, $\M{\Sigma}_n\succ0$, and $\|\V{v}_{\M{X}_\rho}(u)\|_2\ge|p_\rho(u)|$ imply
\begin{align}
 h(\RV{y}_\rho)
 &\ge h(\RV{y}_\rho\mid\rv{u})\nonumber\\
 &\ge h(\RV{n})+
       \E{\log(1+c\rho|p_\rho(\rv{u})|^2)}\nonumber\\
 &\ge h(\RV{n})+\log\rho+\log c
-2\mathbb E[\log^-|p_\rho(u)|]\nonumber \\
 &\ge\log\rho-C.
 \label{eq:one-path-gain-baseline}
\end{align}
If $\sigma_2=0$, this already proves \eqref{eq:one-path-entropy} with a uniform $O(1)$ loss. If $0<\rho\sigma_2^2\le1$, its corresponding term $\frac{1}{2}\log(1+\rho\sigma_2^2)$ is at most $\frac12\log2$, thus the same bound applies.

It remains to consider $\rho\sigma_2^2>1$. Set
\[
 \varepsilon=\rho^{-1/2},~~
 \mathcal Q_\varepsilon(\V{z})
 =\left\lfloor\frac{\realify(\V{z})}{\varepsilon}\right\rfloor,
 ~~ \RV{q}_\varepsilon=\mathcal Q_\varepsilon(\RV{z}_\rho).
\]
The shell bound and Gaussian gain give $\E{\|\RV{z}_\rho\|_2^2}<\infty$, hence $H(\lfloor\realify(\RV{z}_\rho)\rfloor)<\infty$. The fixed-covariance version of \cref{lem:quantization-lower} gives
\begin{equation}
 I(\RV{z}_\rho;\RV{y}_\rho)
 \ge H(\RV{q}_\varepsilon)-C.
 \label{eq:quantization-lower-onepath}
\end{equation}
Normalize $\widehat{\V{v}}_\rho=\V{v}_{\M{X}_\rho}/\|\V{v}_{\M{X}_\rho}\|_{\rm coef}$. By the shell bounds $1 \leq \sigma_1(\M{X}_{\rho}) \leq  C_1$ and \cref{thm:wronskian},
\[
 c\le \|\V{v}_{\M{X}_\rho}\|_{\rm coef}\le C,\qquad
 \|\widehat{\V{v}}_\rho\wedge\dot{\widehat{\V{v}}}_\rho\|_{\rm coef}
 \ge c_*\sigma_2.
\]
Choose such a constant $c_*>0$ and apply \cref{lem:preimage-volume} with $\Set{\Lambda}=\Set{\Lambda}_t$, $\Set{I}=\Set{U}_t$, $\delta=c_*\sigma_2>0$, and $\tau_\rho=e^{-\sqrt{\log\rho}}$. Let $\Set{G}_\rho^{\rm ang}$ be the corresponding good set, and define 
\[
 b_\rho(\alpha,u)
 =\mathbb I\{u\in \Set{G}_\rho^{\rm ang},\ |\alpha|\ge\tau_\rho\},
 \qquad \rv{b}_\rho=b_\rho(\rv{\alpha},\rv{u}).
\]
The angular estimate and the Gaussian small-ball bound yield
\begin{align}
 \Prob{\rv{b}_\rho=0}
 &\le\Prob{u\notin\Set{G}_\rho^{\rm ang}}+\Prob{|\alpha|<\tau_\rho} \nonumber \\
 &\le Ce^{-\kappa\sqrt{\log \rho}}+\frac{1}{\sigma_{\alpha}^2}e^{-2\sqrt{\log \rho}}\nonumber \\
 &\le\overline q(\rho):=C_qe^{-b\sqrt{\log\rho}},
 \label{eq:good-event-probability}
\end{align}
where $b=\min\{\kappa,2\}$ and $C_q=C+\sigma_{\alpha}^{-2}$. Although $\Prob{\rv{b}_\rho=0}$ may depend on $\M{X}_\rho$, its upper bound $\overline{q}(\rho)$ does not. For all sufficiently large $\rho$, $\Prob{\rv{b}_\rho=0}\le1/2$.

For every sample $(\alpha,u)$ satisfying $b_\rho(\alpha,u)=1$, define $\V{q}(\alpha,u)=\mathcal Q_\varepsilon (\alpha\V{v}_{\M{X}_\rho}(u))$. A lattice cell has diameter at most $\varepsilon\sqrt{2T}$. For the normalized curve, let $\beta=\alpha\|\V{v}_{\M{X}_\rho}\|_{\rm coef}$ and $\beta'=\alpha'\|\V{v}_{\M{X}_\rho}\|_{\rm coef}$. The cell event satisfies
\begin{align*}
 &\{(\alpha',u'):
   \mathcal Q_\varepsilon(\alpha'\V{v}_{\M{X}_\rho}(u'))
       =\V{q}(\alpha,u),\ b_\rho(\alpha',u')=1\}\subseteq
 \{(\alpha',u'):(\beta',u')
   \in\Set{P}_{\varepsilon\sqrt{2T}}(\beta,u)\}.
\end{align*}
Upon defining the joint mass function $p_{\RV{q}_\varepsilon,\rv{b}_\rho}(\V{q},1) =\Prob{\RV{q}_\varepsilon=\V{q},\rv{b}_\rho=1}$, Lemma~\ref{lem:preimage-volume} gives the following pointwise statement for every sample in the resolution cell:
\begin{equation}
  p_{\RV{q}_\varepsilon,\rv{b}_\rho}(\V{q}(\alpha,u),1)\le C\tau_\rho^{-4}\varepsilon^2
 \min\left\{1,\frac{\varepsilon}{|\alpha|\sigma_2}\right\}.
 \label{eq:cell-mass-onepath}
\end{equation}
The conditional probability is given by
\[
 p_{\RV{q}_\varepsilon\mid\rv{b}_\rho=1}(\V{q})
 =\frac{p_{\RV{q}_\varepsilon,\rv{b}_\rho}(\V{q},1)}{\Prob{\rv{b}_\rho=1}}.
\]
Substitute the random sample in \eqref{eq:cell-mass-onepath}, take negative logarithms, and average conditionally obtain
\begin{align*}
 H(\RV{q}_\varepsilon\mid\rv{b}_\rho=1)
 &\ge\log\rho-4\sqrt{\log\rho}+\log\Prob{\rv{b}_\rho=1}-C+\E{\log^+\frac{|\rv{\alpha}|\sigma_2}{\varepsilon}
 \,\middle|\,\rv{b}_\rho=1},
\end{align*}
where we have used $\varepsilon=\rho^{-1/2}$. Since $\log^+(ab)\ge\log^+b-|\log a|$ and
\[
 \E{|\log|\rv{\alpha}||\mid\rv{b}_\rho=1}
 \le\frac{\E{|\log|\rv{\alpha}||}}{\Prob{\rv{b}_\rho=1}}\le C,
\]
the expectation $\E{\log^+\frac{|\rv{\alpha}|\sigma_2}{\varepsilon} \,\middle|\,\rv{b}_\rho=1}$ is at least $\log^+(\sigma_2/\varepsilon)-C$. Discrete conditional entropy is nonnegative, therefore
\[
 H(\RV{q}_\varepsilon)
 \ge\Prob{\rv{b}_\rho=1}H(\RV{q}_\varepsilon\mid\rv{b}_\rho=1).
\]
The shell bound $1\le\sigma_1(\M{X}_\rho)\le C_1$ implies $\log\rho+\log^+(\sqrt\rho\sigma_2)\le\frac32\log\rho+C$. Also $\log\Prob{\rv{b}_\rho=1}\ge-2\Prob{\rv{b}_\rho=0}$. Thus
\begin{equation}
 H(\RV{q}_\varepsilon)
 \ge \log\rho+\log^+(\sqrt\rho\sigma_2)-4\sqrt{\log\rho}
       -C\overline q(\rho)\log\rho-C.
 \label{eq:quantized-entropy-onepath}
\end{equation}
Using $0\le\log(1+x)-\log^+x\le\log2$ for $x>0$ and $h(\RV{y}_\rho)=h(\RV{n})+I(\RV{z}_\rho;\RV{y}_\rho)$ gives
\[
 h(\RV{y}_\rho)
 \ge\log\rho+\tfrac12\log(1+\rho\sigma_2^2)
      -\mathcal R_{\rm sp}(\rho),
\]
where, with constants $B_1$, $B_2$ and $B_3$, we have
\[
 \mathcal R_{\rm sp}(\rho)
 =B_1\sqrt{\log\rho}
 +B_2(\log\rho)e^{-b\sqrt{\log\rho}}+B_3.
\]
This single deterministic remainder obeys $\mathcal R_{\rm sp}(\rho)/\log\rho\to0$ and is independent of the entire matrix sequence. This proves \eqref{eq:one-path-entropy}.
\end{IEEEproof}

\section{Rank-1-Tube Output Entropy Bounds}
In what follows, we denote $m=2KT$, $d_1=2(K+T-1)$.
\subsection{\texorpdfstring{Proof of \cref{lem:standard-rankone-cover}}{Proof of Lemma \getrefnumber{lem:standard-rankone-cover}}}
\label{app:proof-lem-standard-rankone-cover}
\begin{IEEEproof}
Let us first denote the set of rank-1 matrices with unit norm as
\begin{equation}
\Set{L} = \{\M{Q}\in\mathbb{C}^{K\times T}:~\rank(\M{Q})=1,~\|\M{Q}\|_F=1\},
\end{equation}
which is a compact smooth manifold. We may then cover the manifold with finitely many local coordinate charts. On each of the charts, we choose a smooth parameterization
\begin{equation}
\M{Q}_\ell(\V{t})=\V{a}_\ell(\V{t})\V{b}_\ell(\V{t})^H,~~\|\V{a}_\ell\|=\|\V{b}_\ell\|=1.
\end{equation}
The dimensionality of $\Set{L}$ can then be computed as
\begin{equation}
\dim_{\mathbb{R}}\Set{L}=(2K-1)+(2T-1)-1=d_1-1,
\end{equation}
where $2K-1$ is the real dimensionality of $\V{a}_\ell$ while $2T-1$ is the real dimensionality of $\V{b}_\ell$. The additional rank loss of 1 originates from the common phase ambiguity $(e^{\jmathu \theta}\V{a}_\ell)(e^{\jmathu \theta}\V{b}_\ell)^H=\V{a}_\ell\V{b}_\ell^H$. Therefore, $\V{t}$ has real dimensionality of $d_1-1$. We resolve the common phase ambiguity of $\V{a}_\ell$ and $\V{b}_\ell$ by letting one of the nonzero elements in $\V{b}_\ell$, say $b_j$, be real and positive. The specific choice of this element can be distinct for different local coordinate charts. On each coordinate neighborhood, choose smooth orthonormal complements
\begin{equation}
\M{A}_{\ell,\perp}\in\mathbb{C}^{K\times (K-1)},~~\M{B}_{\ell,\perp}\in\mathbb{C}^{T\times(T-1)},
\end{equation}
on each of the charts, which are semi-unitary matrices and satisfy $\M{A}_{\ell,\perp}^H\V{a}_\ell=\V{0}$ and $\M{B}_{\ell,\perp}^H\V{b}_\ell=\V{0}$ respectively, and restrict the parameterizations to bounded coordinate boxes whose closures are contained in their coordinate domains. Within these bounded boxes, the mappings $\V{a}_\ell(\V{t})$ and $\V{b}_\ell(\V{t})$, as well as their orthogonal complements, have uniform Lipschitz constants across the finitely many charts.

Next, let us define the tubular parameterization as
\begin{equation}\label{tubular_coordinate}
\M{\Phi}_\ell(s,\V{t},\M{W}) = s\V{a}_\ell(\V{t})\V{b}_\ell(\V{t})^H+\M{A}_{\ell,\perp}(\V{t})\M{W}\M{B}_{\ell,\perp}^H(\V{t}),
\end{equation}
where $0\leq s\leq 2R$, and $\M{W}\in\mathbb{C}^{(K-1)\times(T-1)}$ satisfies $\|\M{W}\|_F\leq r$. Next we show that $\M{\Phi}_\ell(s,\V{t},\M{W})$ covers the entire $\Set{T}_1(R,r)$. For a nonzero $\M{Z}\in\Set{T}_1(R,r)$, consider its best approximation in the set of matrices with rank at most 1, given by $\M{Z}_1=s\V{a}\V{b}^H$. We then have $\|\M{Z}-\M{Z}_1\|_F\leq r$, and $s\leq \|\M{Z}\|_F\leq R+r\leq 2R$. Denote the residual as
\begin{equation}
\M{E}=\M{Z}-\M{Z}_1=\M{Z}-s\V{a}\V{b}^H,
\end{equation}
for which we have $\M{E}\V{b}=\V{0}$ and $\V{a}^H\V{E}=\V{0}^H$. Thus all columns of $\M{E}$ are within the orthogonal complement of $\V{a}$, implying that $\M{E}=\M{A}_\perp\M{W}\M{B}_\perp^H$ and $\M{W}=\M{A}_\perp^H\M{E}\M{B}_\perp$, and $\|\M{E}\|_F=\|\M{W}\|_F\leq r$. The zero matrix is represents by $s=0$, $\M{W}=\M{0}$. Therefore, the tubular coordinate \eqref{tubular_coordinate} indeed covers the entire $\Set{T}_1(R,r)$. Now, for any pair of points within the tube, we have
\begin{align}
\|\M{\Phi}_\ell(s,\V{t},\M{W})-\M{\Phi}_\ell(s^\prime,\V{t}^\prime,\M{W}^\prime)\|_F& =\|s\M{Q}_\ell-s^\prime\M{Q}_\ell^\prime+\M{A}_{\ell,\perp}\M{W}\M{B}_{\ell,\perp}^H-\M{A}_{\ell,\perp}^\prime\M{W}^\prime(\M{B}_{\ell,\perp}^\prime)^H\|_F\nonumber \\
&\leq |s-s^\prime|\|\M{Q}_\ell\|_F+s^\prime \|\M{Q}_\ell-\M{Q}_\ell^\prime\|_F+\|\M{A}_{\ell,\perp}\|_2\|\M{W}-\M{W}^\prime\|_F\|\M{B}_{\ell,\perp}\|_2\nonumber \\
&~~+ \|\M{A}_{\ell,\perp}-\M{A}_{\ell,\perp}^\prime\|_2\|\M{W}^\prime\|_F\|\M{B}_{\ell,\perp}\|_2+\|\M{A}_{\ell,\perp}^\prime\|_2\|\M{W}^\prime\|_F\|\M{B}_{\ell,\perp}-\M{B}_{\ell,\perp}^\prime\|_2\nonumber \\
&\leq |s-s^\prime|+(2RL_Q+rL_A+rL_B)\|\V{t}-\V{t}^\prime\|+\|\M{W}-\M{W}^\prime\|_F\nonumber \\
&\leq C(|s-s^\prime|+R\|\V{t}-\V{t}^\prime\|+\|\M{W}-\M{W}^\prime\|_F),
\end{align}
where $L_Q$, $L_A$ and $L_B$ are the Lipschitz constants of $\M{Q}_{\ell}$, $\M{A}_{\ell,\perp}$ and $\M{B}_{\ell,\perp}$, respectively. By choosing grid spacings scaling as $\Theta(\epsilon)$, $\Theta(\epsilon/R)$ and $\Theta(\epsilon)$ for $s$, $\V{t}$ and $\M{W}$, respectively, we see that their $\epsilon$-cover would require $O(R/\epsilon)$, $O[(R/\epsilon)^{d_1-1}]$, and $O[(1+r/\epsilon)^{m-d_1}]$ points, respectively. Multiplying them and combining the finitely many local coordinate charts, we obtain
\begin{equation}
N_\epsilon(\Set{T}_1(R,r))\leq C\left(\frac{cR}{\epsilon}\right)^{d_1}\left(1+\frac{cr}{\epsilon}\right)^{m-d_1}.
\end{equation}
This proves the claimed covering bound, with constants depending only on $(K,T)$.
\end{IEEEproof}

\subsection{\texorpdfstring{Proof of \cref{lem:tube-entropy}}{Proof of Lemma \getrefnumber{lem:tube-entropy}}}
\label{app:proof-lem-tube-entropy}
\begin{IEEEproof}
Let $\RM{X}_1$ be a measurable best approximation to $\RM{X}$ with rank at most 1.\footnote{Choose a fixed measurable singular vector convention if the non-zero singular value has multiplicity larger than 1.} Then
\[
 \|\RM{X}-\RM{X}_1\|_F\le\delta,\qquad
 \|\RM{X}_1\|_F\le\|\RM{X}\|_F\le C_1.
\]
With $\RM{G}_{\rm path}$ from \eqref{random_pathwise}, write $\RM{Q}=\RM{G}_{\rm path}\RM{X}$ and $\RM{Q}_1=\RM{G}_{\rm path}\RM{X}_1$. Since the steering vectors have unit norms, $\|\RM{G}_{\rm path}\|_2\le\|\RM{G}_{\rm path}\|_F=\|\RV{\alpha}\|_2$, and therefore
\[
 \rank\{\RM{G}_{\rm path}\RM{X}_1\}\le1,~
 \|\RM{G}_{\rm path}\RM{X}_1\|_F\le C_1\|\RV{\alpha}\|_2,~
 \|\RM{G}_{\rm path}\RM{X}-\RM{G}_{\rm path}\RM{X}_1\|_F\le\delta\|\RV{\alpha}\|_2.
\]

\emph{Gain tails.}
Fix $M>0$, and only then choose a fixed $c>0$ sufficiently large. Define $\Set{E}_\rho=\{\|\RV{\alpha}\|_2^2\le c\log\rho\}$. Since $\rv{\gamma}=\|\RV{\alpha}\|_2^2/\sigma_\alpha^2\sim\operatorname{Gamma}(K,1)$, for $t\ge0$, we have
\begin{align*}
 \Prob{\rv{\gamma}>t}&=e^{-t}\sum_{j=0}^{K-1}\frac{t^j}{j!},\\
 \E{\rv{\gamma}\mid \rv{\gamma}>t}
 &=K+\frac{t^K}{(K-1)!\sum_{j=0}^{K-1}t^j/j!}
 \le K+t.
\end{align*}
Taking $t=c\log\rho/\sigma_\alpha^2$ with $c/\sigma_\alpha^2>M+1$ therefore gives
\begin{equation}\label{moment_gain}
 \Prob{\Set{E}_\rho^c}\le C_M\rho^{-M},~~
 \E{\|\RV{\alpha}\|_2^2\mid \Set{E}_\rho^c}
 \le K\sigma_\alpha^2+c\log\rho.
\end{equation}

\emph{Support and covering index on the good event.} 
Choose fixed $C_r=\sqrt c$ and $C_R\ge\max\{1,\sqrt c C_1,\sqrt c\delta_{\max}\}$. On $\Set{E}_\rho$, we have
\begin{equation}\label{constants_cover_event}
 \RM{G}_{\rm path}\RM{X}\in\Set{T}_1(R_\rho,r_\rho),~~ R_\rho=C_R\sqrt{\log\rho},~~r_\rho=C_r\delta\sqrt{\log\rho}\le R_\rho.
\end{equation}
Take $\epsilon=\rho^{-1/2}$ and a finite $\epsilon$-cover with matrix centers $\M{C}_1,\ldots,\M{C}_{N_\epsilon}$. Let $\rv{J}$ be the least index of a center within $\epsilon$ of $\RM{G}_{\rm path}\RM{X}$ on $\Set{E}_\rho$. The noise $\RM{N}_0$ in $\RM{Z}_\rho=\sqrt\rho\,\RM{G}_{\rm path}\RM{X}+\RM{N}_0$ is independent of $(\RM{G}_{\rm path}\RM{X},\mathbb I_{\Set{E}_\rho},\rv{J})$. The entropy chain rule gives
\begin{align*}
 h(\RM{Z}_\rho\mid \Set{E}_\rho)
 &\le H(\rv{J}\mid \Set{E}_\rho)
      +h(\RM{Z}_\rho\mid\rv{J},\Set{E}_\rho),\\
 H(\rv{J}\mid \Set{E}_\rho)&\le\log N_\epsilon.
\end{align*}
For every index $j$ with positive probability, translation by $\sqrt\rho\,\M{C}_j$ leaves entropy unchanged. Since we are considering a $\epsilon$-cover, the scaled residual $\sqrt{\rho}(\RM{G}_{\rm path}\RM{X}-\M{C}_j)$ has norm at most 1. Thus
\[
 \E{\|\RM{Z}_\rho-\sqrt\rho\,\M{C}_j\|_F^2 \mid\rv{J}=j,\Set{E}_\rho} \le1+\E{\|\RM{N}_0\|_F^2}.
\]
Gaussian maximum entropy after realification implies $h(\RM{Z}_\rho\mid\rv{J}=j,\Set{E}_\rho)\le C_{K,T}$ uniformly in the index, the input distribution, and $\delta$. Hence
\begin{equation}\label{entropy_ncover_bound}
 h(\RM{Z}_\rho\mid \Set{E}_\rho)\le\log N_\epsilon+C.
\end{equation}

Lemma~\ref{lem:standard-rankone-cover} gives
\begin{equation}\label{ncover_upper}
 \log N_\epsilon
 \le C+d_1\log(c'R_\rho/\epsilon)+(m-d_1)\log(1+c'r_\rho/\epsilon).
\end{equation}
Using \eqref{constants_cover_event}, we have
\begin{subequations}\label{separate_terms}
\begin{align}
 \log(c'R_\rho/\epsilon)&=\tfrac12\log\rho+\tfrac12\log \log\rho+O(1),\\
 2\log(1+c'r_\rho/\epsilon)&=\log[(1+c'C_r \delta\sqrt{\rho\log\rho})^2]\leq  \log(1+\rho\delta^2)+O(\log\log \rho).
\end{align}
\end{subequations}
The last equation follows from $(1+y)^2\le2(1+y^2)$ and is valid also at $\rho\delta^2=0$. Substituting \eqref{separate_terms} and \eqref{ncover_upper} into \eqref{entropy_ncover_bound} yields
\begin{equation}\label{entropy_on_erho}
 h(\RM{Z}_\rho\mid \Set{E}_\rho)
 \le\tfrac{d_1}{2}\log\rho
    +\tfrac{m-d_1}{2}\log(1+\rho\delta^2)+C(1+\log\log\rho).
\end{equation}

\emph{The weighted bad-event entropy.}
The bound \eqref{moment_gain} on the moment of gain, the peak power bound on $\RM{X}$, and the fixed noise second moment imply
\[
 \E{\|\RM{Z}_\rho\|_F^2\mid \Set{E}_\rho^c}
 \le C[1+\rho(1+\log\rho)].
\]
Gaussian maximum entropy consequently gives
\begin{equation}\label{coh_bad_event}
 h(\RM{Z}_\rho\mid \Set{E}_\rho^c)
 \le KT\log\bigl(C[1+\rho(1+\log\rho)]\bigr).
\end{equation}
For a sufficiently large $C_1$, \eqref{coh_bad_event} simplifies into
\begin{equation}
h(\RM{Z}_\rho\mid \Set{E}_\rho^c)\leq C_1(1+\log\rho).
\end{equation}
Using again \eqref{moment_gain}, we have
\begin{subequations}
 \begin{align}
 \Prob{\Set{E}_\rho^c}h(\RM{Z}_\rho\mid \Set{E}_\rho^c)
 &\le C_M\rho^{-M}(1+\log\rho),\\
 H_2(\Prob{\Set{E}_\rho^c})&\le C_M\rho^{-M}(1+\log\rho).
 \end{align}
 \label{eq:alpha-tail-tube}
\end{subequations}
Finally, since conditioning on an event meaning that the event occurs, we have
\begin{equation}\label{entropy_total_uncond}
 h(\RM{Z}_\rho)
 \le H_2(\Prob{\Set{E}_\rho^c})+(1-\Prob{\Set{E}_\rho^c})h(\RM{Z}_\rho\mid \Set{E}_\rho)+\Prob{\Set{E}_\rho^c} h(\RM{Z}_\rho\mid \Set{E}_\rho^c).
\end{equation}
Now, insert the bounds \eqref{entropy_on_erho} and \eqref{eq:alpha-tail-tube} into \eqref{entropy_total_uncond}, and fix a value of $M$. We see that
\begin{equation}
C_M\rho^{-M}(1+\log \rho)\leq C_M\Big(1+\frac{1}{Me}\Big):=C_2,
\end{equation}
where $C_2$ is a constant. By adjusting the constants, we obtain \eqref{eq:tube-entropy} with $\mathcal R_{\rm tube}(\rho)=C(1+\log\log\rho)$ for $\rho\ge e^2$. This remainder is uniform over all input distributions and $0\le\delta\le\delta_{\max}$, and thus we are done.
\end{IEEEproof}

\section{Multiscale Converse and Final Assembly}
The three proofs below progress through successive levels of uniformity. The multiscale theorem applies the two component bounds directly inside finitely many two-dimensional cells and recombines them uniformly over all input distributions with peak power constraints. The lifting lemma removes the peak power constraint, and the final subsection assembles the converse with the achievability results.

For the $K\ge2$ arguments in this appendix, set
\[
 r_\star:=\min\{N_t,T\},
 \qquad
 c_{\rm rk}:=\max\{1,\sqrt{r_\star-1}\}.
\]
By \cref{lem:sigma2-rankone-distance}, $c_{\rm rk}$ is a fixed
dimension-dependent factor converting a $\sigma_2$ bound into a rank-1 tube
radius.
\subsection{\texorpdfstring{Proof of \cref{thm:peak}}{Proof of Theorem \getrefnumber{thm:peak}}}
\label{app:proof-thm-peak}
\begin{IEEEproof}
Fix $0<\Delta<1$ and define $q(\M{X}):=\log\!\left(1+\rho\sigma_2(\M{X})^2\right)$. Assign the label $\rv{\kappa}_\Delta=\dagger$ to the low-SNR event $\rho\sigma_1(\RM{X})^2\le A^2\rho^\Delta$, which includes $\RM{X}=\M{0}$. On its complement, define $s(\M{X}):=\log(\rho\sigma_1(\M{X})^2)$. The peak power constraint gives
\begin{align*}
 s(\M{X})&\in\bigl(\log(A^2\rho^\Delta),\log(A^2\rho)\bigr],~~q(\M{X})\in[0,\log(1+A^2\rho)].
\end{align*}
Partition the $s$-interval into left-open, right-closed intervals of length at most $\Delta\log\rho$. Partition the $q$-interval into intervals of the same maximum length, taking the first as $[0,\Delta\log\rho]$ and the remaining intervals left-open and right-closed. The label $\rv{\kappa}_\Delta$ records the resulting rectangular cell
outside the low-SNR event. Throughout the proof, $\Delta$ is fixed while $\rho\to\infty$.

On the low-SNR event $\rv{\kappa}_\Delta=\dagger$, we have
$\rho\|\RM{X}\|_F^2\le r_\star A^2\rho^\Delta$.
Reveal the pathwise channel $\RM{G}_{\rm path}$ defined in
\eqref{random_pathwise}. Since
$\E{\|\RM{G}_{\rm path}\|_2^2}
\le\E{\|\RM{G}_{\rm path}\|_F^2}=K\sigma_\alpha^2<\infty$,
the coherent channel capacity bound gives, for some fixed $c>0$,
\begin{align}
I(\RM{X};\RM{Z}_\rho\!\mid\! \rv{\kappa}_\Delta\!=\!\dagger)
&\!\le\! I(\RM{X};\RM{Z}_\rho\mid \RM{G}_{\rm path},\rv{\kappa}_\Delta=\dagger)\nonumber\\
&\le KT\log(1+c r_\star A^2\rho^\Delta)\nonumber\\
&=O_{A,K,T}(1+\Delta\log\rho).
\end{align}
Thus the contribution of this low-SNR cell can be neglected as $\Delta\rightarrow 0$. In addition, for fixed $(A,\Delta)$ and all sufficiently large $\rho$, the number of labels is $O_{A,\Delta}(1)$, and hence $H(\rv{\kappa}_\Delta)=O_{A,\Delta}(1)=o(\log\rho)$, so the cell label contributes only negligible entropy.

Let us now consider the ``ordinary'' grid. Fix a positive-probability cell $\rv{\kappa}_{\Delta}=\Set{C}$ and denote its endpoints by
\[
 \underline{s}<s(\RM{X})\le \overline{s},
 \qquad
 \underline{q}\le q(\RM{X})\le \overline{q}.
\]
They satisfy
\[
 0\le \overline{s}-\underline{s}\le \Delta\log \rho,
 \qquad
 0\le \overline{q}-\underline{q}\le \Delta\log \rho.
\]

For the output entropy bound, define $\RM{X}' :=\sqrt{\rho}\,e^{-\overline{s}/2}\RM{X}$. The same observation is obtained by operating the pathwise channel at SNR $e^{\overline{s}}$ with input $\RM{X}'$. On $\Set{C}$, we have
\[
 \sigma_1(\RM{X}')\le1,~~
 \|\RM{X}'\|_F\le\sqrt{r_\star},~~
 e^{\overline{s}}\sigma_2(\RM{X}')^2
 \le e^{\overline{q}}-1.
\]
Thus \cref{lem:sigma2-rankone-distance} places the conditional input support
in a rank-1 tube of radius
\[
 \delta_C
 :=c_{\rm rk}\min\!\left\{
 1,\,
 e^{-\overline{s}/2}\sqrt{e^{\overline{q}}-1}
 \right\}
 \le c_{\rm rk}.
\]
The truncation by $1$ ensures the fixed radius bound required by \cref{lem:tube-entropy}, while
\[
 \log(1+e^{\overline{s}}\delta_C^2)
 \le \overline{q}+2\log c_{\rm rk}.
\]
Applying \cref{lem:tube-entropy} at SNR $e^{\overline{s}}$ yields
\begin{equation}\label{entropy_c}
 h(\RM{Z}_\rho\mid\rv{\kappa}_{\Delta}=\Set{C})
 \le (K+T-1)\overline{s} +(KT-K-T+1)\overline{q}
 +o_{A,\Delta}(\log\rho).
\end{equation}

For the conditional output entropy, normalize each realization $\M{X}$ by $\sigma_1(\M{X})$. The actual SNR is $e^{s(\M{X})}$, and the observation with ${\rm SNR}=e^{\underline{s}}$ is a stochastic degradation of the actual one. For any fixed signal distribution and independent Gaussian noise of fixed covariance, we have
\[
 h(\sqrt t\,\RV{v}+\RV{n})
 =h(\RV{n})+I(\RV{v};\sqrt t\,\RV{v}+\RV{n}).
\]
Data processing therefore makes this entropy nondecreasing in $t$.
Thus the lower-SNR entropy is a valid lower bound on the actual
conditional-output entropy. Its second-mode term obeys
\begin{align*}
 \log\!\left(
 1+e^{\underline{s}}
 \sigma_2\!\left(\frac{\M{X}}{\sigma_1(\M{X})}\right)^2
 \right)
 &=
 \log\!\left(
 1+e^{\underline{s}-s(\M{X})}(e^{q(\M{X})}-1)
 \right)\\
 &\ge q(\M{X})-\bigl(s(\M{X})-\underline{s}\bigr)
 \ge \underline{q} -\Delta\log \rho.
\end{align*}
The first inequality uses
$1+\theta(e^q-1)\ge \theta e^q$ for $0<\theta\le1$. Applying
\cref{thm:one-path-entropy} at SNR $e^{\underline{s}}$, summing the $K$ conditionally
independent rows, and averaging within the cell gives
\begin{equation}\label{entropy_condition_c}
h(\RM{Z}_\rho\!\mid\!\RM{X},\rv{\kappa}_{\Delta}\!=\!\Set{C})\ge K\underline{s}\!+\!\frac K2(\underline{q}-\Delta\log\rho)
 -o_{A,\Delta}(\log\rho).
\end{equation}
Both effective SNRs $e^{\underline{s}}$ and $e^{\overline{s}}$ lie in the interval
$\Set{I}_\rho:=[A^2\rho^\Delta,A^2\rho]$. Hence the sum of the remainder terms in the two entropy bounds is at most
\[
\mathcal R_{A,\Delta}(\rho)
:=
K\sup_{t\in\Set{I}_\rho}\mathcal R_{\rm sp}(t)
+\sup_{t\in\Set{I}_\rho}\mathcal R_{\rm tube}(t)
=o_{A,\Delta}(\log\rho).
\]
The last relation follows from the explicit remainder bounds, since $\Delta>0$ is fixed. The normalization fixes the input shell bounds, and $\delta_C\le c_{\rm rk}$ fixes the maximum tube radius. Thus this remainder is independent of the cell and the input distribution.

Now, subtracting the two entropy bounds \eqref{entropy_c} and \eqref{entropy_condition_c}, and using the cell widths gives
\[
 \begin{aligned}
 I(\RM{X};\RM{Z}_\rho\mid\rv{\kappa}_{\Delta}=\Set{C})
 &\le
 (T-1)\overline{s}
 +\bigl[D_\star-(T-1)\bigr]\overline{q}+O_{K,T}(\Delta\log\rho)
 +o_{A,\Delta}(\log\rho).
 \end{aligned}
\]
Here, $KT-K-T+1-\frac K2 =D_\star-(T-1)\ge0$. Because the cell has positive probability, it contains some valid realizations of $\M{X}$, and
\begin{align*}
 \overline{q}
 \le q(\M{X})+ \Delta\log\rho
 &\le\log(1+e^{s(\M{X})})+\Delta\log\rho\\
 &\le \overline{s}+\Delta\log\rho+\log2.
\end{align*}
Together with $\overline{s}\le\log(A^2\rho)$, this yields, uniformly over all active
cells,
\[
 I(\RM{X};\RM{Z}_\rho\!\mid\!\rv{\kappa}_{\Delta}\!=\!\Set{C})
 \!\le\!
 D_\star\log\rho
 \!+\!O_{A,K,T}(1\!+\!\Delta\log\rho)
 \!+\!o_{A,\Delta}(\log\rho).
\]

Finally, informing the receiver of the cell label gives
\begin{align*}
 I(\RM{X};\RM{Z}_\rho)
 &\le H(\rv{\kappa}_\Delta)
 +\Prob{\rv{\kappa}_\Delta=\dagger}
 I(\RM{X};\RM{Z}_\rho\mid\rv{\kappa}_\Delta=\dagger)+
 \sum_{\Set{C}}
 \Prob{\rv{\kappa}_{\Delta}=\Set{C}}\,
 I(\RM{X};\RM{Z}_\rho\mid\rv{\kappa}_{\Delta}=\Set{C})\\
 &\le
 D_\star\log\rho
 +O_{A,K,T}(\Delta\log\rho+1)
 +o_{A,\Delta}(\log\rho).
\end{align*}
More explicitly, we have
\begin{align*}
 J_{\rm pk}^{Z}(\rho;A)
 &\le D_\star\log\rho
 +C_{A,K,T}(1+\Delta\log\rho)+C_{A,\Delta}+\mathcal R_{A,\Delta}(\rho).
\end{align*}
The right-hand side is independent of the input distribution. For every fixed $\Delta$, we have
\[
 \limsup_{\rho\to\infty}
 \frac{J_{\rm pk}^{Z}(\rho;A)}{\log\rho}
 \le D_\star+O_{A,K,T}(\Delta).
\]
Letting $\Delta\downarrow0$ after this limit proves
\eqref{eq:peak-envelope-converse}.
\end{IEEEproof}

\subsection{\texorpdfstring{Proof of \cref{lem:power-tail}}{Proof of Lemma \getrefnumber{lem:power-tail}}}
\label{app:proof-lem-power-tail}
\begin{IEEEproof}
Fix arbitrary $\epsilon>0$. Consider any possibly SNR-dependent input distribution satisfying $\E{\|\RM{X}\|_F^2}\le T$, and define $\Set{E}_\epsilon:=\{\|\RM{X}\|_F^2\le\rho^\epsilon\}$. Markov's inequality gives
\begin{equation}
 \Prob{\Set{E}_\epsilon^c}\le T\rho^{-\epsilon}.
 \label{eq:markov-power-tail}
\end{equation}
Since whether $\Set{E}_\epsilon$ occurs is determined by $\RM{X}$, revealing this event label gives
\begin{align}
 I(\RM{X};\RM{Z}_\rho)
 &\le H_2(\Prob{\Set{E}_\epsilon^c})
 +\Prob{\Set{E}_\epsilon}I(\RM{X};\RM{Z}_\rho\mid \Set{E}_\epsilon)+\Prob{\Set{E}_\epsilon^c} I(\RM{X};\RM{Z}_\rho\mid \Set{E}_\epsilon^c).
 \label{eq:power-event-split}
\end{align}

On the high-power tail, reveal the random pathwise channel $\RM{G}_{\rm path}$ as defined in \eqref{random_pathwise}. Because $\Set{E}_\epsilon^c$ depends only on $\RM{X}$, conditioning preserves the independence between $\RM{X}$ and $\RM{G}_{\rm path}$. When $\Prob{\Set{E}_\epsilon^c}>0$, we have
\[
 \E{\|\RM{X}\|_F^2\mid\Set{E}_\epsilon^c}
 \le\frac{T}{\Prob{\Set{E}_\epsilon^c}}.
\]
This implies that
\begin{align}
\E{\|\RM{Z}_\rho\|_F^2\mid\Set{E}_\epsilon^c,\RM{G}_{\rm path}=\M{G}}&=\rho\E{\|\M{G}\RM{X}\|_F^2|\Set{E}_\epsilon^c}+\E{\|\RM{N}_0\|_F^2}\nonumber \\
&\leq\rho\|\M{G}\|_2^2\E{\|\RM{X}\|_F^2|\Set{E}_\epsilon^c}+T\nonumber \\
&\leq \rho\|\M{G}\|_2^2\frac{T}{\Prob{\Set{E}_\epsilon^c}}+T.
\end{align}
The Gaussian maximum-entropy bound yields
$$
h(\RM{Z}_\rho\mid\Set{E}_\epsilon^c,\RM{G}_{\rm path}=\M{G})\leq KT\log\left(\frac{\pi e}{K}\left(1+\frac{\rho\|\M{G}\|_2^2}{\Prob{\Set{E}_\epsilon^c}}\right)\right).
$$
After revealing $\RM{G}_{\rm path}$, Jensen's inequality gives
\begin{align}
 I(\RM{X};\RM{Z}_\rho\mid\Set{E}_\epsilon^c)
 &\le I(\RM{X};\RM{Z}_\rho\mid\Set{E}_\epsilon^c,\RM{G}_{\rm path})\nonumber\\
 &\le\!KT\log\!\left(1\!+\!\frac{\E{\|\RM{G}_{\rm path}\|_2^2}\rho}{\Prob{\Set{E}_\epsilon^c}}\right).
 \label{eq:coherent-tail-bound}
\end{align}
For all sufficiently large $\rho$, $T\rho^{-\epsilon}\le1$. Since $p\mapsto p\log(1+a/p)$ is increasing on $(0,1]$, combining
\eqref{eq:markov-power-tail} and \eqref{eq:coherent-tail-bound} yields
\begin{align}
 &\Prob{\Set{E}_\epsilon^c} I(\RM{X};\RM{Z}_\rho\mid \Set{E}_\epsilon^c)\le\! KT^2\rho^{-\epsilon}
 \log\!\left(1\!+\!\frac{\E{\|\RM{G}_{\rm path}\|_2^2}}{T}\rho^{1+\epsilon}\right)
 \!=\!o(\log\rho).
 \label{eq:weighted-tail-zero}
\end{align}
The weighted tail term $\Prob{\Set{E}_\epsilon^c}I(\RM{X};\RM{Z}_\rho\mid\Set{E}_\epsilon^c)$ is understood as zero when $\Prob{\Set{E}_\epsilon^c}=0$. The event-label cost satisfies the uniform bound
\begin{equation}
 H_2(\Prob{\Set{E}_\epsilon^c})\le\log2=o(\log\rho).
 \label{eq:event-entropy-zero}
\end{equation}

On $\Set{E}_\epsilon$, choose $\widetilde{\RM{X}}:=\rho^{-\epsilon/2}\RM{X}$, $\gamma:=\rho^{1+\epsilon}$. The conditional input is supported on the unit Frobenius ball. Since $\Set{E}_\epsilon$ depends only on the input, conditioning preserves its
independence from the pathwise channel and noise. Moreover, $\sqrt\rho\,\RM{G}_{\rm path}\RM{X} =\sqrt\gamma\,\RM{G}_{\rm path}\widetilde{\RM{X}}$. Thus the conditional distribution is admissible in the definition of $J_{\rm pk}^{Z}(\gamma;1)$. 
Now, fix $\eta>0$. By \eqref{eq:peak-envelope-converse}, there exists $\gamma_0(\eta)<\infty$ such that
\begin{equation}
 \gamma\ge\gamma_0(\eta)\Longrightarrow
 J_{\rm pk}^{Z}(\gamma;1)
 \le(D_\star+\eta)\log\gamma.
 \label{eq:peak-eventual-bound}
\end{equation}
For all sufficiently large $\rho$, \eqref{eq:peak-eventual-bound} gives
\begin{align}
 I(\RM{X};\RM{Z}_\rho\mid \Set{E}_\epsilon)
 &=I(\widetilde{\RM{X}};\RM{Z}_\gamma\mid \Set{E}_\epsilon)\nonumber\\
 &\le J_{\rm pk}^{Z}(\gamma;1)\nonumber\\
 &\le(D_\star+\eta)(1+\epsilon)\log\rho.
 \label{eq:typical-power-bound}
\end{align}
Here $\RM{Z}_\gamma$ denotes the same pathwise channel operated at SNR $\gamma$ and driven by the conditional input $\widetilde{\RM{X}}$.

Combining \eqref{eq:power-event-split}, \eqref{eq:weighted-tail-zero},
\eqref{eq:event-entropy-zero}, and \eqref{eq:typical-power-bound} yields the input-independent, finite-SNR bound
\begin{align*}
 I(\RM{X};\RM{Z}_\rho)
 &\le(D_\star+\eta)(1+\epsilon)\log\rho
 +\log2+KT^2\rho^{-\epsilon}
 \log\!\left(1+\frac{\E{\|\RM{G}_{\rm path}\|_2^2}}{T}\rho^{1+\epsilon}\right),
\end{align*}
implying that
\[
 \limsup_{\rho\to\infty}
 \frac{J_{\rm av}^{Z}(\rho)}{\log\rho}
 \le(D_\star+\eta)(1+\epsilon).
\]
Finally, letting $\eta\downarrow0$ and then $\epsilon\downarrow0$ yields
\eqref{eq:average-envelope-converse}.
\end{IEEEproof}

\Needspace{6\baselineskip}
\subsection{\texorpdfstring{Proof of \cref{thm:capacity}}{Proof of Theorem \getrefnumber{thm:capacity}}}
\label{app:proof-thm-capacity}
\begin{IEEEproof}
For $K=1$, the fixed-input lower bound in \cref{prop:single-beam}
and the capacity upper bound in \cref{prop:geom-genie} give
\begin{align*}
 1-\frac1T
 &\le\liminf_{\rho\to\infty}\frac{C_{\rm blind}(\rho)}{\log\rho}\\
 &\le\limsup_{\rho\to\infty}\frac{C_{\rm blind}(\rho)}{\log\rho}\\
 &\le1-\frac1T.
\end{align*}
Thus the limit exists and has the asserted value. Now let $K\ge2$.
By \cref{prop:pathwise-genie}, every admissible input distribution satisfies
\[
 I(\RM{X};\RM{Y}_\rho)
 \le
 I(\RM{X};\RM{Z}_\rho).
\]
Taking supremum over the input distributions under the average power constraint and applying
\cref{lem:power-tail} gives
\begin{align*}
 \limsup_{\rho\to\infty}
 \frac{C_{\rm blind}(\rho)}{\log\rho}
 &\le\frac1T\limsup_{\rho\to\infty}
 \frac{J_{\rm av}^{Z}(\rho)}{\log\rho}\\
 &\le\frac{D_\star}{T}
 =K\left(1-\frac{3}{2T}\right).
\end{align*}

Conversely, \cref{prop:full-ach} supplies the admissible Gaussian input in
\eqref{eq:gaussian-ach-input} for which
\[
 \lim_{\rho\to\infty}
 \frac{I(\RM{X};\RM{Y}_\rho)}{T\log\rho}
 =K\left(1-\frac{3}{2T}\right).
\]
Since capacity is at least the mutual information of this fixed admissible
input divided by $T$,
\[
 \liminf_{\rho\to\infty}
 \frac{C_{\rm blind}(\rho)}{\log\rho}
 \ge K\left(1-\frac{3}{2T}\right).
\]
The liminf lower bound and limsup upper bound coincide. Hence the
capacity pre-log limit exists also for $K\ge2$, proving
\eqref{eq:capacity} in both cases.
\end{IEEEproof}

\section{Proofs for Other Array Structures}
\label{app:other-array-proofs}
\subsection{\texorpdfstring{Proof of \cref{prop:fixed-precoder-conditional}}{Proof of Proposition \getrefnumber{prop:fixed-precoder-conditional}}}
\label{app:proof-fixed-precoder-conditional}
\begin{IEEEproof}
We work on the probability-1 open state set in (G2), and fix a realization $\M{S}$ with full row rank. Write
\[
 \M{F}=\diag(\V{\alpha})\M{A}_t^H\M{B},\qquad
 \M{C}=\M{F}\M{S},\qquad
 \M{Z}=\M{A}_r\M{C}.
\]
By (G2), $\rank\M{F}=K$. Since $\M{S}\M{S}^\dagger=\M{I}_L$, we have $\M{C}\M{S}^\dagger=\M{F}$ and hence $\rank\M{C}=K$.

First compute the joint rank. For any prescribed $\delta\M{C}\in\C^{K\times T}$, the data
variation $\delta\M{S}=\M{F}^\dagger\delta\M{C}$ gives
\[
 \M{F}\delta\M{S}
 =\M{F}\M{F}^\dagger\delta\M{C}
 =\delta\M{C}.
\]
Thus data variations generate every output variation $\M{A}_r\delta\M{C}$, a real space of dimension $2KT$. The complete output differential is
\[
 \delta\M{Z}
 =\delta\M{A}_r\M{C}
   +\M{A}_r(\delta\M{F}\M{S}+\M{F}\delta\M{S}).
\]
In particular, all gain and \ac{aod} variations are already contained in the space generated by the data.  

To identify the remaining receive contribution, let
$\M{Q}_r=\M{I}_{N_r}-\M{A}_r\M{A}_r^\dagger$. Projection gives
\[
 \M{Q}_r\delta\M{Z}=\M{Q}_r\delta\M{A}_r\M{C}.
\]
Since $\M{C}$ has full row rank, multiplication by
$\M{C}$ is injective on these projected variations:
\[
 (\M{Q}_r\delta\M{A}_r\M{C})\M{C}^\dagger
 =\M{Q}_r\delta\M{A}_r.
\]
Moreover, $\M{Q}_r\delta\M{A}_r=\M{0}$ precisely when the steering variations remain inside $\operatorname{col}\M{A}_r$, thereby the receive subspace has zero first-order variation. The maps $\delta\V{\xi}\mapsto\M{Q}_r\delta\M{A}_r$ and the differential defining $d_R$ therefore have the same kernel and the same real rank $d_R$. The kernel of $\delta\M{Z}\mapsto\M{Q}_r\delta\M{Z}$ on the joint differential image consists exactly of $\M{A}_r\delta\M{C}$. Thus we have $r=2KT+d_R$.

Next hold $\M{S}$ fixed. The differential of the conditional map is $\delta\M{Z}=\delta(\M{H}\M{B})\M{S}$, and
\[
 \delta\M{Z}=\M{0}
 \quad\Longleftrightarrow\quad
 \delta(\M{H}\M{B})=\M{0},
\]
because right multiplication by $\M{S}^\dagger$ recovers $\delta(\M{H}\M{B})$. Thus the fixed-input conditional map and the effective channel map have the same real kernel, proving $r_u=d_H(\M{B})$. Both $r$ and $r_u$ are constant on the required probability-1 open sets by (G2).

The independent Gaussian data and (G1) give (A1) of \cref{ass:regular}, the preceding calculations give (A2), and (G3) gives (A3). Applying \cref{lem:smooth-image,prop:quotient-rank} therefore gives
\[
 \begin{aligned}
 I(\RM{S};\RM{Y}_\rho)
 &=\frac{r-r_u}{2}\log\rho+o(\log\rho)\\
 &=\frac{2KT+d_R-d_H(\M{B})}{2}\log\rho
   +o(\log\rho).
 \end{aligned}
\]
Dividing by $T\log\rho$ proves
\eqref{eq:fixed-precoder-prelog}.
Finally, full column rank of $\M{B}$ gives $\RM{S}=\M{B}^\dagger\RM{X}$, so $\RM{S}$ and $\RM{X}=\M{B}\RM{S}$ determine one another and have the same mutual information with $\RM{Y}_\rho$.
\end{IEEEproof}

\subsection{\texorpdfstring{Proof of \cref{prop:explicit-precoder-conditions}}{Proof of Proposition \getrefnumber{prop:explicit-precoder-conditions}}}
\label{app:proof-explicit-precoder-conditions}
\begin{IEEEproof}
First consider the receive subspace map $\V{\xi}\mapsto\operatorname{col}\M{A}_r$.
Since $\M{A}_r$ has full column rank, its subspace has zero first-order variation precisely when $\M{Q}_r\,\delta\M{A}_r=\M{0}$. Variations within the same column space have the form $\delta\M{A}_r=\M{A}_r\M{F}$ for a complex $K\times K$ matrix $\M{F}$. Taking columns gives
\begin{equation}\label{per_column}
 \M{Q}_rD_{\V{\xi}_k}\V{a}_r(\V{\xi}_k)
                  \delta\V{\xi}_k=\V{0},\qquad k=1,\ldots,K.
\end{equation}
By (R2), every $\delta\V{\xi}_k$ is zero. The receive subspace differential therefore has trivial kernel on its $Kp_r$-dimensional real parameter domain, proving $d_R=Kp_r$.

Next examine the kernel of the effective channel differential. The factorization $\M{H}\M{B}
 =\M{A}_r\diag(\V{\alpha})\M{A}_t^H\M{B}$ gives
\begin{align*}
 \delta(\M{H}\M{B})
 &=\delta\M{A}_r\diag(\V{\alpha})\M{A}_t^H\M{B}+\M{A}_r\,
       \delta\!\left(\diag(\V{\alpha})\M{A}_t^H\M{B}\right).
\end{align*}
Suppose $\delta(\M{H}\M{B})=\M{0}$. Left multiplication by $\M{Q}_r$ removes the second term. By (R1), $\diag(\V{\alpha})\M{A}_t^H\M{B}$ has full row rank. Right multiplication by one of its right inverses yields $\M{Q}_r\delta\M{A}_r=\M{0}$. The preceding argument for \eqref{per_column} then gives $\delta\V{\xi}_k=\V{0}$ for every path. Left multiplication of the remaining equation $\M{A}_r\,\delta\!\left(\diag(\V{\alpha})\M{A}_t^H\M{B}\right)=\V{0}$ by $\M{A}_r^\dagger$ separates the path rows:
\begin{equation}\label{rows_separated}
 \delta\alpha_k\,\V{b}(\V{\eta}_k)^H
 +\alpha_k
   \bigl(D_{\V{\eta}_k}\V{b}(\V{\eta}_k)
                    \delta\V{\eta}_k\bigr)^H
 =\V{0}^H.
\end{equation}
Here $\V{b}(\V{\eta}_k)\ne\V{0}$ and $\alpha_k\ne0$ by (R1). Such an equation admits a gain variation $\delta\alpha_k$ if and only if
\[
 \left(\M{I}_L-
   \frac{\V{b}(\V{\eta}_k)\V{b}(\V{\eta}_k)^H}
        {\|\V{b}(\V{\eta}_k)\|_2^2}\right)
 D_{\V{\eta}_k}\V{b}(\V{\eta}_k)\delta\V{\eta}_k
 =\V{0}.
\]
For every admissible $\delta\V{\eta}_k$, the compensating complex gain variation $\delta\alpha_k$ is uniquely determined by \eqref{rows_separated}. By (R3), these variations form a real kernel of dimension $p_t-q_{\M{B}}$. The path equations are separate, thus the complete effective channel kernel has real dimension $K(p_t-q_{\M{B}})$. Thereby we obtain
\begin{align*}
 d_H(\M{B})
 &=K(p_r+p_t+2)-K(p_t-q_{\M{B}})\\
 &=K(p_r+q_{\M{B}}+2).
\end{align*}
In particular, both $d_R$ and $d_H(\M{B})$ are constant on the required open set. Thus (G2) of \cref{prop:fixed-precoder-conditional} follows from (R1)--(R3). Together with (G1) and (G3), that proposition gives \eqref{eq:projected-geometry-prelog}.
\end{IEEEproof}

\subsection{Verification of the Array Examples}
\label{app:proof-array-examples}
We verify (R1)--(R2) for each example, using the Gaussian input of Sec.~\ref{sec:projected-geometry} with $K\le L\le T$. Condition (R3) then follows from the projective rank results in Sec.~\ref{ssec:examples_arrays}.

\subsubsection{UPA with antenna selection}

\emph{(R1): Separation of the path responses.}
Select $L$ distinct transmit elements $(m_\ell,n_\ell)$. Their response functions $e^{\jmathu(m_\ell u+n_\ell v)}$ are linearly independent on every nonempty open direction set. Hence the response vectors span $\C^L$, one can then choose $K$ directions with linearly independent responses. Therefore, there exists at least one $K\times K$ minor of $\M{A}_t^H\M{B}$ that is a nontrivial real-analytic function. Its zero set is null, proving the transmit rank in (R1)~\cite{KrantzParks2002}. Now, choose a receive UPA containing $(m,0)$ for $m=0,\ldots,K$ and the additional element $(0,1)$. Writing $z_k=e^{\jmathu u_k}$, the first $K+1$ receive rows have entries $z_k^m$. The $z_k$ are distinct almost surely, so the ordinary Vandermonde submatrix has column rank $K$. This proves the receive rank in (R1).

\emph{(R2): Local identifiability of \acp{aoa}.}
On the first $K+1$ rows of $\M{A}_r$, append the partial derivative in $u_k$ to the $K$ steering columns. The resulting square matrix is nonsingular, since a polynomial of degree at most $K$ annihilating its columns would vanish at the $K$ distinct nodes and have zero derivative at $z_k\ne0$, forcing the polynomial to be identically zero. The partial derivative in $v_k$ vanishes on these rows and has the nonzero entry $\jmathu e^{\jmathu v_k}$ on element $(0,1)$. Consequently,
\[
 \rank_{\C}
 [\M{A}_r,\partial_{u_k}\V{a}_{r,k},
              \partial_{v_k}\V{a}_{r,k}]=K+2.
\]
Projection by $\M{Q}_r$ leaves the two partial derivative columns that are linearly independent in $\mathbb{C}$, which are also linearly independent in $\mathbb{R}$. Thus (R2) holds with $p_r=2$. Additional receive elements preserve these ranks.

\subsubsection{Rectangular aperture with Fourier-mode selection}
Use the modal responses in \eqref{eq:aperture-mode-response}.

\emph{(R1): Separation of the path responses.}
Choose $L$ distinct transmit modes. Their response functions are linearly independent, because their values at the selected integer pairs form the identity matrix. Real analyticity extends this independence to every nonempty open direction set. The response vectors therefore span $\C^L$, and there exists a subset containing $K$ of them that are mutually linearly independent. Thus $\rank_{\C}(\M{A}_t^H\M{B})=K$ holds almost surely.

For reception, choose the modes $(m,0)$, $m=0,\ldots,K$, and $(0,1)$. Away from the zero set of $b_{0,0}$, divide each receive steering column by this common response. The first $K+1$ components become
\[
 f_m(u)=(-1)^m\frac{u}{u-m},\qquad f_0(u)=1,
\]
and the last component becomes $v/(1-v)$. For distinct noninteger $u_1,\ldots,u_K$, the first $K$ rows form a nonsingular scaled Cauchy matrix with entries $1/(u_k-m)$, $m=0,\ldots,K-1$. Indeed, a linear combination of these rational functions has a numerator of degree at most $K-1$, so vanishing at all $K$ nodes forces it to be identically zero. Thus the receive part of (R1) also holds almost surely.

\emph{(R2): Local identifiability of \acp{aoa}.}
Consider all $K+1$ functions $f_0,\ldots,f_K$ and append the partial derivative in $u_k$ to their $K$ response columns. If a linear combination vanished on these columns, it would have zeros at all $u_j$ and a double zero at $u_k$. After division by $u$, the linear combination becomes
\[
g(u)=\sum_{m=0}^K \frac{c_m}{u} f_m(u)=\sum_{m=0}^K\frac{c_m(-1)^m}{u-m}.
\]
Its numerator has degree at most $K$, but has at least $K+1$ zeros, so it is identically zero. Multiplying the identity $g(u)=0$ by $u-m$ and letting $u\to m$ gives $c_m=0$ for every $m=0,\ldots,K$, proving the augmented matrix nonsingular. The $v_k$ derivative is zero on these $K+1$ rows and equals $(1-v_k)^{-2}\ne0$ on the final row. Hence we have $\rank_{\C}
 [\M{A}_r,\partial_{u_k}\V{a}_{r,k},\partial_{v_k}\V{a}_{r,k}]=K+2$, and (R2) holds with $p_r=2$. The column normalization preserves these ranks.

\bibliographystyle{IEEEtran}
\bibliography{IEEEabrv,sparse_mimo_dof}

\end{document}

%% file: nula_channel_tikz.tex
\begin{tikzpicture}[x=1cm,y=1cm,
  font=\fontsize{9}{11}\selectfont,
  >=Stealth,
  element/.style={circle,draw=black!80,fill=white,line width=.75pt,minimum size=4.2pt,inner sep=0pt},
  scatter/.style={diamond,draw=black!60,fill=black!15,minimum size=5pt,inner sep=0pt},
  ray/.style={line width=.75pt,postaction={decorate},
    decoration={markings,mark=at position .24 with {\arrow{Stealth}},
    mark=at position .77 with {\arrow{Stealth}}}},
  labelbg/.style={fill=white,inner sep=1.6pt},
  small/.style={font=\fontsize{8}{10}\selectfont}]
\definecolor{rayblue}{RGB}{30,95,143}
\definecolor{panelgray}{RGB}{246,247,249}
\path[use as bounding box] (-.05,-2.22) rectangle (17.25,7.12);

\node[font=\fontsize{10}{12}\selectfont\bfseries] at (6.8,6.83)
  {NULA-SV channel within one block};
\coordinate (tx) at (2.3,2.5);
\coordinate (rx) at (11.4,2.5);

\draw[black!40,densely dotted,->] (2.3,2.25)--(2.3,5.85);
\draw[black!40,densely dotted,->] (11.4,2.25)--(11.4,5.85);
\node[small,anchor=south] at (2.3,5.9) {Tx array axis};
\node[small,anchor=south] at (11.4,5.9) {Rx array axis};
\foreach \yy in {2.5,3.0,4.0,5.25} \node[element] at (2.3,\yy) {};
\foreach \yy in {2.5,3.25,3.65,5.25} \node[element] at (11.4,\yy) {};
\node[fill=white,inner sep=0pt] at (2.3,4.63) {$\vdots$};
\node[fill=white,inner sep=0pt] at (11.4,4.47) {$\vdots$};
\node[anchor=east] at (2.08,2.27) {$x_{t,1}=0$};
\node[anchor=east] at (2.08,3.0) {$x_{t,2}$};
\node[anchor=east] at (2.08,4.0) {$x_{t,3}$};
\node[anchor=east] at (2.08,5.25) {$x_{t,N_t}$};
\node[anchor=west] at (11.62,2.27) {$x_{r,1}=0$};
\node[anchor=west] at (11.62,3.25) {$x_{r,2}$};
\node[anchor=west] at (11.62,3.65) {$x_{r,3}$};
\node[anchor=west] at (11.62,5.25) {$x_{r,N_r}$};
\node[align=center] at (2.3,1.35) {Tx NULA\\$N_t$ elements};
\node[align=center] at (11.4,1.35) {Rx NULA\\$N_r$ elements};

\coordinate (sone) at (6.15,6.0);
\coordinate (sk) at (7.25,4.4);
\coordinate (slast) at (6.35,1.55);
\draw[ray,black!48,dashed] (tx)--(sone)--(rx);
\draw[ray,black!48,dashed] (tx)--(slast)--(rx);
\draw[ray,rayblue,line width=1.15pt] (tx)--(sk)--(rx);
\node[scatter] at (sone) {};
\node[scatter] at (slast) {};
\node[scatter,draw=rayblue,fill=rayblue!15] at (sk) {};
\node[small,anchor=south] at (6.15,6.16) {Path $1$};
\node[small,anchor=north] at (6.35,1.38) {Path $K$};
\node[text=rayblue,anchor=south] at (7.25,4.56) {Path $k$};
\node[labelbg,text=black!65] at (4.65,4.76) {$\rv{\alpha}_1$};
\node[labelbg,text=black!65] at (8.0,1.77) {$\rv{\alpha}_K$};
\node[labelbg,text=rayblue] at (5.88,3.88) {$\rv{\alpha}_k$};
\node[labelbg,text=rayblue,small] at (4.25,2.94) {AoD cosine $\rv{u}_{t,k}$};
\node[labelbg,text=rayblue,small] at (9.22,2.95) {AoA cosine $\rv{u}_{r,k}$};
\draw[rayblue!75,line width=.45pt] (4.25,3.13)--(4.55,3.36);
\draw[rayblue!75,line width=.45pt] (9.22,3.14)--(9.12,3.44);

\draw[->,line width=.8pt] (.15,2.5)--(2.16,2.5);
\node[anchor=south] at (.78,2.63) {$\RM{X}_b$};
\draw[->,line width=.8pt] (11.54,2.5)--(13.50,2.5);
\node[small,anchor=south] at (12.5,2.64) {$\sqrt{\rho}\,\RM{H}_b\RM{X}_b$};
\node[circle,draw=black!75,line width=.7pt,minimum size=4.5mm,inner sep=0pt] (sum) at (13.73,2.5) {$+$};
\draw[->,line width=.8pt] (13.73,3.24)--(sum.north);
\node[anchor=south] at (13.73,3.29) {$\RM{N}_b$};
\draw[->,line width=.8pt] (sum.east)--(16.75,2.5);
\node[anchor=south] at (15.75,2.63) {$\RM{Y}_{\rho,b}$};

\draw[rounded corners=3pt,draw=black!25,fill=panelgray,line width=.55pt]
 (13.3,3.90) rectangle (17.1,6.45);
\node[font=\fontsize{9}{11}\selectfont\bfseries] at (15.2,6.11) {ULA special case};
\node[small] at (15.2,5.69) {Equal element spacing};
\draw[black!45,line width=.55pt] (13.9,5.12)--(16.5,5.12);
\foreach \xx in {14.0,14.8,15.6,16.4} \node[element] at (\xx,5.12) {};
\draw[<->,line width=.45pt] (14.0,4.78)--(14.8,4.78);
\node[small,anchor=north] at (14.4,4.72) {$d_\nu$};
\node[small] at (15.2,4.10) {$x_{\nu,n}=(n-1)d_\nu,~\nu\in\{t,r\}$};

\draw[black!18,line width=.5pt] (.15,.22)--(17.1,.22);
\node[anchor=west,font=\fontsize{9}{11}\selectfont\bfseries] at (.25,-.14)
 {Blockwise memoryless state};
\foreach \left/\right/\lab in {1.0/5.55/{b-1},5.9/10.45/{b},10.8/15.35/{b+1}} {
 \draw[rounded corners=2pt,draw=black!35,fill=panelgray,line width=.6pt]
  (\left,-1.46) rectangle (\right,-.47);
 \node at ({(\left+\right)/2},-.74) {Block $\lab$ };
 \node[small] at ({(\left+\right)/2},-1.16) {State $\RS{S}_{\lab}$ fixed};
}
\draw[->,black!65] (5.57,-.96)--(5.88,-.96);
\draw[->,black!65] (10.47,-.96)--(10.78,-.96);
\draw[->,black!65] (15.48,-.96)--(16.8,-.96);
\node[small,anchor=south] at (16.2,-.84) {time};
\end{tikzpicture}

%% file: preimage_entropy_tikz.tex
\begin{tikzpicture}[x=1cm,y=1cm,>=Stealth,font=\fontsize{10}{12}\selectfont,
 small/.style={font=\fontsize{9}{11}\selectfont},
 flow/.style={->,line width=.8pt,black!55}]
\definecolor{amber}{RGB}{183,110,25}
\definecolor{teal}{RGB}{20,114,106}
\definecolor{blue}{RGB}{42,107,158}
\definecolor{violet}{RGB}{122,93,150}
\path[use as bounding box] (.15,2.10) rectangle (19.05,8.9);
\draw[rounded corners=5pt,black!16,fill=black!1] (.3,5.75) rectangle (18.9,8.75);
\draw[rounded corners=5pt,teal!22,fill=teal!1] (.3,2.25) rectangle (18.9,5.4);

\path[fill=amber!12,draw=amber!60] (1.2,6.5)--(6.1,6.5)--(6.65,7.85)--(1.75,7.85)--cycle;
\fill[amber] (1.4729,6.6688) circle (1.1pt);
\fill[amber] (1.8813,6.6688) circle (1.1pt);
\fill[amber] (2.2896,6.6688) circle (1.1pt);
\fill[amber] (2.6979,6.6688) circle (1.1pt);
\fill[amber] (3.1063,6.6688) circle (1.1pt);
\fill[amber] (3.5146,6.6688) circle (1.1pt);
\fill[amber] (3.9229,6.6688) circle (1.1pt);
\fill[amber] (4.3312,6.6688) circle (1.1pt);
\fill[amber] (4.7396,6.6688) circle (1.1pt);
\fill[amber] (5.1479,6.6688) circle (1.1pt);
\fill[amber] (5.5563,6.6688) circle (1.1pt);
\fill[amber] (5.9646,6.6688) circle (1.1pt);
\fill[amber] (1.6104,7.0062) circle (1.1pt);
\fill[amber] (2.0187,7.0062) circle (1.1pt);
\fill[amber] (2.4271,7.0062) circle (1.1pt);
\fill[amber] (2.8354,7.0062) circle (1.1pt);
\fill[amber] (3.2437,7.0062) circle (1.1pt);
\fill[amber] (3.6521,7.0062) circle (1.1pt);
\fill[amber] (4.0604,7.0062) circle (1.1pt);
\fill[amber] (4.4688,7.0062) circle (1.1pt);
\fill[amber] (4.8771,7.0062) circle (1.1pt);
\fill[amber] (5.2854,7.0062) circle (1.1pt);
\fill[amber] (5.6938,7.0062) circle (1.1pt);
\fill[amber] (6.1021,7.0062) circle (1.1pt);
\fill[amber] (1.7479,7.3438) circle (1.1pt);
\fill[amber] (2.1562,7.3438) circle (1.1pt);
\fill[amber] (2.5646,7.3438) circle (1.1pt);
\fill[amber] (2.9729,7.3438) circle (1.1pt);
\fill[amber] (3.3813,7.3438) circle (1.1pt);
\fill[amber] (3.7896,7.3438) circle (1.1pt);
\fill[amber] (4.1979,7.3438) circle (1.1pt);
\fill[amber] (4.6063,7.3438) circle (1.1pt);
\fill[amber] (5.0146,7.3438) circle (1.1pt);
\fill[amber] (5.4229,7.3438) circle (1.1pt);
\fill[amber] (5.8313,7.3438) circle (1.1pt);
\fill[amber] (6.2396,7.3438) circle (1.1pt);
\fill[amber] (1.8854,7.6813) circle (1.1pt);
\fill[amber] (2.2938,7.6813) circle (1.1pt);
\fill[amber] (2.7021,7.6813) circle (1.1pt);
\fill[amber] (3.1104,7.6813) circle (1.1pt);
\fill[amber] (3.5188,7.6813) circle (1.1pt);
\fill[amber] (3.9271,7.6813) circle (1.1pt);
\fill[amber] (4.3354,7.6813) circle (1.1pt);
\fill[amber] (4.7438,7.6813) circle (1.1pt);
\fill[amber] (5.1521,7.6813) circle (1.1pt);
\fill[amber] (5.5604,7.6813) circle (1.1pt);
\fill[amber] (5.9688,7.6813) circle (1.1pt);
\fill[amber] (6.3771,7.6813) circle (1.1pt);
\node[small,text=amber] at (3.95,8.35) {A large parameter resolution cell};
\node[small] at (3.95,6.13) {Many parameters fall into the same output ball};
\draw[flow] (6.95,7.22)--(8.8,7.22);
\node[small] at (7.9,7.54) {Channel map};
\path[fill=amber!12,draw=amber!55] (1.2000,3.0000)--(2.4250,3.0000)--(2.9750,4.3500)--(1.7500,4.3500)--cycle;
\path[fill=teal!12,draw=teal!55] (2.4250,3.0000)--(3.6500,3.0000)--(4.2000,4.3500)--(2.9750,4.3500)--cycle;
\path[fill=blue!12,draw=blue!55] (3.6500,3.0000)--(4.8750,3.0000)--(5.4250,4.3500)--(4.2000,4.3500)--cycle;
\path[fill=violet!12,draw=violet!55] (4.8750,3.0000)--(6.1000,3.0000)--(6.6500,4.3500)--(5.4250,4.3500)--cycle;
\fill[amber] (1.4729,3.1688) circle (1.1pt);
\fill[amber] (1.8813,3.1688) circle (1.1pt);
\fill[amber] (2.2896,3.1688) circle (1.1pt);
\fill[teal] (2.6979,3.1688) circle (1.1pt);
\fill[teal] (3.1063,3.1688) circle (1.1pt);
\fill[teal] (3.5146,3.1688) circle (1.1pt);
\fill[blue] (3.9229,3.1688) circle (1.1pt);
\fill[blue] (4.3312,3.1688) circle (1.1pt);
\fill[blue] (4.7396,3.1688) circle (1.1pt);
\fill[violet] (5.1479,3.1688) circle (1.1pt);
\fill[violet] (5.5563,3.1688) circle (1.1pt);
\fill[violet] (5.9646,3.1688) circle (1.1pt);
\fill[amber] (1.6104,3.5063) circle (1.1pt);
\fill[amber] (2.0187,3.5063) circle (1.1pt);
\fill[amber] (2.4271,3.5063) circle (1.1pt);
\fill[teal] (2.8354,3.5063) circle (1.1pt);
\fill[teal] (3.2437,3.5063) circle (1.1pt);
\fill[teal] (3.6521,3.5063) circle (1.1pt);
\fill[blue] (4.0604,3.5063) circle (1.1pt);
\fill[blue] (4.4688,3.5063) circle (1.1pt);
\fill[blue] (4.8771,3.5063) circle (1.1pt);
\fill[violet] (5.2854,3.5063) circle (1.1pt);
\fill[violet] (5.6938,3.5063) circle (1.1pt);
\fill[violet] (6.1021,3.5063) circle (1.1pt);
\fill[amber] (1.7479,3.8438) circle (1.1pt);
\fill[amber] (2.1562,3.8438) circle (1.1pt);
\fill[amber] (2.5646,3.8438) circle (1.1pt);
\fill[teal] (2.9729,3.8438) circle (1.1pt);
\fill[teal] (3.3813,3.8438) circle (1.1pt);
\fill[teal] (3.7896,3.8438) circle (1.1pt);
\fill[blue] (4.1979,3.8438) circle (1.1pt);
\fill[blue] (4.6063,3.8438) circle (1.1pt);
\fill[blue] (5.0146,3.8438) circle (1.1pt);
\fill[violet] (5.4229,3.8438) circle (1.1pt);
\fill[violet] (5.8313,3.8438) circle (1.1pt);
\fill[violet] (6.2396,3.8438) circle (1.1pt);
\fill[amber] (1.8854,4.1813) circle (1.1pt);
\fill[amber] (2.2938,4.1813) circle (1.1pt);
\fill[amber] (2.7021,4.1813) circle (1.1pt);
\fill[teal] (3.1104,4.1813) circle (1.1pt);
\fill[teal] (3.5188,4.1813) circle (1.1pt);
\fill[teal] (3.9271,4.1813) circle (1.1pt);
\fill[blue] (4.3354,4.1813) circle (1.1pt);
\fill[blue] (4.7438,4.1813) circle (1.1pt);
\fill[blue] (5.1521,4.1813) circle (1.1pt);
\fill[violet] (5.5604,4.1813) circle (1.1pt);
\fill[violet] (5.9688,4.1813) circle (1.1pt);
\fill[violet] (6.3771,4.1813) circle (1.1pt);
\node[small,text=teal] at (3.95,4.99) {Smaller parameter resolution cells};
\node[small] at (3.95,2.61) {Each output ball receives only part of the probability};
\draw[flow] (6.95,3.7199999999999998)--(8.8,3.7199999999999998);
\node[small] at (7.9,4.04) {Channel map};
\path[fill=teal!4,draw=teal!25] (9.1920,6.9436)--(9.2899,6.8949)--(9.3879,6.8478)--(9.4858,6.8023)--(9.5837,6.7585)--(9.6817,6.7164)--(9.7796,6.6759)--(9.8775,6.6371)--(9.9755,6.5999)--(10.0734,6.5644)--(10.1713,6.5305)--(10.2693,6.4983)--(10.3672,6.4677)--(10.4651,6.4388)--(10.5631,6.4115)--(10.6610,6.3859)--(10.7589,6.3619)--(10.8569,6.3396)--(10.9548,6.3190)--(11.0527,6.3000)--(11.1507,6.2826)--(11.2486,6.2669)--(11.3465,6.2529)--(11.4445,6.2405)--(11.5424,6.2297)--(11.6403,6.2207)--(11.7383,6.2132)--(11.8362,6.2074)--(11.9341,6.2033)--(12.0321,6.2008)--(12.1300,6.2000)--(12.2279,6.2008)--(12.3259,6.2033)--(12.4238,6.2074)--(12.5217,6.2132)--(12.6197,6.2207)--(12.7176,6.2297)--(12.8155,6.2405)--(12.9135,6.2529)--(13.0114,6.2669)--(13.1093,6.2826)--(13.2073,6.3000)--(13.3052,6.3190)--(13.4031,6.3396)--(13.5011,6.3619)--(13.5990,6.3859)--(13.6969,6.4115)--(13.7949,6.4388)--(13.8928,6.4677)--(13.9907,6.4983)--(14.0887,6.5305)--(14.1866,6.5644)--(14.2845,6.5999)--(14.3825,6.6371)--(14.4804,6.6759)--(14.5783,6.7164)--(14.6763,6.7585)--(14.7742,6.8023)--(14.8721,6.8478)--(14.9701,6.8949)--(15.0680,6.9436)--(15.7080,7.5836)--(15.6101,7.5349)--(15.5121,7.4878)--(15.4142,7.4423)--(15.3163,7.3985)--(15.2183,7.3564)--(15.1204,7.3159)--(15.0225,7.2771)--(14.9245,7.2399)--(14.8266,7.2044)--(14.7287,7.1705)--(14.6307,7.1383)--(14.5328,7.1077)--(14.4349,7.0788)--(14.3369,7.0515)--(14.2390,7.0259)--(14.1411,7.0019)--(14.0431,6.9796)--(13.9452,6.9590)--(13.8473,6.9400)--(13.7493,6.9226)--(13.6514,6.9069)--(13.5535,6.8929)--(13.4555,6.8805)--(13.3576,6.8697)--(13.2597,6.8607)--(13.1617,6.8532)--(13.0638,6.8474)--(12.9659,6.8433)--(12.8679,6.8408)--(12.7700,6.8400)--(12.6721,6.8408)--(12.5741,6.8433)--(12.4762,6.8474)--(12.3783,6.8532)--(12.2803,6.8607)--(12.1824,6.8697)--(12.0845,6.8805)--(11.9865,6.8929)--(11.8886,6.9069)--(11.7907,6.9226)--(11.6927,6.9400)--(11.5948,6.9590)--(11.4969,6.9796)--(11.3989,7.0019)--(11.3010,7.0259)--(11.2031,7.0515)--(11.1051,7.0788)--(11.0072,7.1077)--(10.9093,7.1383)--(10.8113,7.1705)--(10.7134,7.2044)--(10.6155,7.2399)--(10.5175,7.2771)--(10.4196,7.3159)--(10.3217,7.3564)--(10.2237,7.3985)--(10.1258,7.4423)--(10.0279,7.4878)--(9.9299,7.5349)--(9.8320,7.5836)--cycle;
\draw[teal!12,line width=.35pt] (9.3840,7.1356)--(9.4819,7.0869)--(9.5799,7.0398)--(9.6778,6.9943)--(9.7757,6.9505)--(9.8737,6.9084)--(9.9716,6.8679)--(10.0695,6.8291)--(10.1675,6.7919)--(10.2654,6.7564)--(10.3633,6.7225)--(10.4613,6.6903)--(10.5592,6.6597)--(10.6571,6.6308)--(10.7551,6.6035)--(10.8530,6.5779)--(10.9509,6.5539)--(11.0489,6.5316)--(11.1468,6.5110)--(11.2447,6.4920)--(11.3427,6.4746)--(11.4406,6.4589)--(11.5385,6.4449)--(11.6365,6.4325)--(11.7344,6.4217)--(11.8323,6.4127)--(11.9303,6.4052)--(12.0282,6.3994)--(12.1261,6.3953)--(12.2241,6.3928)--(12.3220,6.3920)--(12.4199,6.3928)--(12.5179,6.3953)--(12.6158,6.3994)--(12.7137,6.4052)--(12.8117,6.4127)--(12.9096,6.4217)--(13.0075,6.4325)--(13.1055,6.4449)--(13.2034,6.4589)--(13.3013,6.4746)--(13.3993,6.4920)--(13.4972,6.5110)--(13.5951,6.5316)--(13.6931,6.5539)--(13.7910,6.5779)--(13.8889,6.6035)--(13.9869,6.6308)--(14.0848,6.6597)--(14.1827,6.6903)--(14.2807,6.7225)--(14.3786,6.7564)--(14.4765,6.7919)--(14.5745,6.8291)--(14.6724,6.8679)--(14.7703,6.9084)--(14.8683,6.9505)--(14.9662,6.9943)--(15.0641,7.0398)--(15.1621,7.0869)--(15.2600,7.1356);
\draw[teal!12,line width=.35pt] (9.6400,7.3916)--(9.7379,7.3429)--(9.8359,7.2958)--(9.9338,7.2503)--(10.0317,7.2065)--(10.1297,7.1644)--(10.2276,7.1239)--(10.3255,7.0851)--(10.4235,7.0479)--(10.5214,7.0124)--(10.6193,6.9785)--(10.7173,6.9463)--(10.8152,6.9157)--(10.9131,6.8868)--(11.0111,6.8595)--(11.1090,6.8339)--(11.2069,6.8099)--(11.3049,6.7876)--(11.4028,6.7670)--(11.5007,6.7480)--(11.5987,6.7306)--(11.6966,6.7149)--(11.7945,6.7009)--(11.8925,6.6885)--(11.9904,6.6777)--(12.0883,6.6687)--(12.1863,6.6612)--(12.2842,6.6554)--(12.3821,6.6513)--(12.4801,6.6488)--(12.5780,6.6480)--(12.6759,6.6488)--(12.7739,6.6513)--(12.8718,6.6554)--(12.9697,6.6612)--(13.0677,6.6687)--(13.1656,6.6777)--(13.2635,6.6885)--(13.3615,6.7009)--(13.4594,6.7149)--(13.5573,6.7306)--(13.6553,6.7480)--(13.7532,6.7670)--(13.8511,6.7876)--(13.9491,6.8099)--(14.0470,6.8339)--(14.1449,6.8595)--(14.2429,6.8868)--(14.3408,6.9157)--(14.4387,6.9463)--(14.5367,6.9785)--(14.6346,7.0124)--(14.7325,7.0479)--(14.8305,7.0851)--(14.9284,7.1239)--(15.0263,7.1644)--(15.1243,7.2065)--(15.2222,7.2503)--(15.3201,7.2958)--(15.4181,7.3429)--(15.5160,7.3916);
\fill[amber!8,opacity=.65] (12.4500,6.5200) circle (.48);
\fill[amber] (12.1787,6.3133) circle (1.1pt);
\fill[amber] (12.2562,6.3084) circle (1.1pt);
\fill[amber] (12.3337,6.3076) circle (1.1pt);
\fill[amber] (12.4112,6.3113) circle (1.1pt);
\fill[amber] (12.4887,6.3189) circle (1.1pt);
\fill[amber] (12.5663,6.3310) circle (1.1pt);
\fill[amber] (12.6437,6.3470) circle (1.1pt);
\fill[amber] (12.7212,6.3675) circle (1.1pt);
\fill[amber] (12.1787,6.3953) circle (1.1pt);
\fill[amber] (12.2562,6.3906) circle (1.1pt);
\fill[amber] (12.3337,6.3898) circle (1.1pt);
\fill[amber] (12.4112,6.3935) circle (1.1pt);
\fill[amber] (12.4887,6.4012) circle (1.1pt);
\fill[amber] (12.5663,6.4132) circle (1.1pt);
\fill[amber] (12.6437,6.4292) circle (1.1pt);
\fill[amber] (12.7212,6.4497) circle (1.1pt);
\fill[amber] (12.1787,6.4776) circle (1.1pt);
\fill[amber] (12.2562,6.4726) circle (1.1pt);
\fill[amber] (12.3337,6.4721) circle (1.1pt);
\fill[amber] (12.4112,6.4755) circle (1.1pt);
\fill[amber] (12.4887,6.4834) circle (1.1pt);
\fill[amber] (12.5663,6.4952) circle (1.1pt);
\fill[amber] (12.6437,6.5115) circle (1.1pt);
\fill[amber] (12.7212,6.5317) circle (1.1pt);
\fill[amber] (12.1787,6.5598) circle (1.1pt);
\fill[amber] (12.2562,6.5549) circle (1.1pt);
\fill[amber] (12.3337,6.5541) circle (1.1pt);
\fill[amber] (12.4112,6.5578) circle (1.1pt);
\fill[amber] (12.4887,6.5654) circle (1.1pt);
\fill[amber] (12.5663,6.5775) circle (1.1pt);
\fill[amber] (12.6437,6.5935) circle (1.1pt);
\fill[amber] (12.7212,6.6140) circle (1.1pt);
\fill[amber] (12.1787,6.6418) circle (1.1pt);
\fill[amber] (12.2562,6.6371) circle (1.1pt);
\fill[amber] (12.3337,6.6363) circle (1.1pt);
\fill[amber] (12.4112,6.6400) circle (1.1pt);
\fill[amber] (12.4887,6.6477) circle (1.1pt);
\fill[amber] (12.5663,6.6597) circle (1.1pt);
\fill[amber] (12.6437,6.6757) circle (1.1pt);
\fill[amber] (12.7212,6.6962) circle (1.1pt);
\fill[amber] (12.1787,6.7241) circle (1.1pt);
\fill[amber] (12.2562,6.7191) circle (1.1pt);
\fill[amber] (12.3337,6.7186) circle (1.1pt);
\fill[amber] (12.4112,6.7220) circle (1.1pt);
\fill[amber] (12.4887,6.7299) circle (1.1pt);
\fill[amber] (12.5663,6.7417) circle (1.1pt);
\fill[amber] (12.6437,6.7580) circle (1.1pt);
\fill[amber] (12.7212,6.7782) circle (1.1pt);
\draw[amber,line width=.75pt] (12.4500,6.5200) circle (.48);
\draw[amber!35,dashed,line width=.3pt] (11.9700,6.5200) arc[start angle=180,end angle=0,x radius=.48,y radius=.15];
\draw[amber!50,line width=.3pt] (11.9700,6.5200) arc[start angle=180,end angle=360,x radius=.48,y radius=.15];
\draw[amber!25,line width=.3pt] (12.4500,6.5200) ellipse (.17 and .48);
\node[small,text=amber] at (12.45,8.35) {One ball collects nearly all the probability};
\draw[->,amber!70] (12.45,8.05)--(12.45,7.08);
\node[align=center,small] at (17.30,7.18) {Hard to identify\\the parameters\\[5pt]\textbf{Few distinguishable}\\\textbf{outputs}};
\path[fill=teal!4,draw=teal!25] (9.1920,3.4436)--(9.2899,3.3949)--(9.3879,3.3478)--(9.4858,3.3023)--(9.5837,3.2585)--(9.6817,3.2164)--(9.7796,3.1759)--(9.8775,3.1371)--(9.9755,3.0999)--(10.0734,3.0644)--(10.1713,3.0305)--(10.2693,2.9983)--(10.3672,2.9677)--(10.4651,2.9388)--(10.5631,2.9115)--(10.6610,2.8859)--(10.7589,2.8619)--(10.8569,2.8396)--(10.9548,2.8190)--(11.0527,2.8000)--(11.1507,2.7826)--(11.2486,2.7669)--(11.3465,2.7529)--(11.4445,2.7405)--(11.5424,2.7297)--(11.6403,2.7207)--(11.7383,2.7132)--(11.8362,2.7074)--(11.9341,2.7033)--(12.0321,2.7008)--(12.1300,2.7000)--(12.2279,2.7008)--(12.3259,2.7033)--(12.4238,2.7074)--(12.5217,2.7132)--(12.6197,2.7207)--(12.7176,2.7297)--(12.8155,2.7405)--(12.9135,2.7529)--(13.0114,2.7669)--(13.1093,2.7826)--(13.2073,2.8000)--(13.3052,2.8190)--(13.4031,2.8396)--(13.5011,2.8619)--(13.5990,2.8859)--(13.6969,2.9115)--(13.7949,2.9388)--(13.8928,2.9677)--(13.9907,2.9983)--(14.0887,3.0305)--(14.1866,3.0644)--(14.2845,3.0999)--(14.3825,3.1371)--(14.4804,3.1759)--(14.5783,3.2164)--(14.6763,3.2585)--(14.7742,3.3023)--(14.8721,3.3478)--(14.9701,3.3949)--(15.0680,3.4436)--(15.7080,4.0836)--(15.6101,4.0349)--(15.5121,3.9878)--(15.4142,3.9423)--(15.3163,3.8985)--(15.2183,3.8564)--(15.1204,3.8159)--(15.0225,3.7771)--(14.9245,3.7399)--(14.8266,3.7044)--(14.7287,3.6705)--(14.6307,3.6383)--(14.5328,3.6077)--(14.4349,3.5788)--(14.3369,3.5515)--(14.2390,3.5259)--(14.1411,3.5019)--(14.0431,3.4796)--(13.9452,3.4590)--(13.8473,3.4400)--(13.7493,3.4226)--(13.6514,3.4069)--(13.5535,3.3929)--(13.4555,3.3805)--(13.3576,3.3697)--(13.2597,3.3607)--(13.1617,3.3532)--(13.0638,3.3474)--(12.9659,3.3433)--(12.8679,3.3408)--(12.7700,3.3400)--(12.6721,3.3408)--(12.5741,3.3433)--(12.4762,3.3474)--(12.3783,3.3532)--(12.2803,3.3607)--(12.1824,3.3697)--(12.0845,3.3805)--(11.9865,3.3929)--(11.8886,3.4069)--(11.7907,3.4226)--(11.6927,3.4400)--(11.5948,3.4590)--(11.4969,3.4796)--(11.3989,3.5019)--(11.3010,3.5259)--(11.2031,3.5515)--(11.1051,3.5788)--(11.0072,3.6077)--(10.9093,3.6383)--(10.8113,3.6705)--(10.7134,3.7044)--(10.6155,3.7399)--(10.5175,3.7771)--(10.4196,3.8159)--(10.3217,3.8564)--(10.2237,3.8985)--(10.1258,3.9423)--(10.0279,3.9878)--(9.9299,4.0349)--(9.8320,4.0836)--cycle;
\draw[teal!12,line width=.35pt] (9.3840,3.6356)--(9.4819,3.5869)--(9.5799,3.5398)--(9.6778,3.4943)--(9.7757,3.4505)--(9.8737,3.4084)--(9.9716,3.3679)--(10.0695,3.3291)--(10.1675,3.2919)--(10.2654,3.2564)--(10.3633,3.2225)--(10.4613,3.1903)--(10.5592,3.1597)--(10.6571,3.1308)--(10.7551,3.1035)--(10.8530,3.0779)--(10.9509,3.0539)--(11.0489,3.0316)--(11.1468,3.0110)--(11.2447,2.9920)--(11.3427,2.9746)--(11.4406,2.9589)--(11.5385,2.9449)--(11.6365,2.9325)--(11.7344,2.9217)--(11.8323,2.9127)--(11.9303,2.9052)--(12.0282,2.8994)--(12.1261,2.8953)--(12.2241,2.8928)--(12.3220,2.8920)--(12.4199,2.8928)--(12.5179,2.8953)--(12.6158,2.8994)--(12.7137,2.9052)--(12.8117,2.9127)--(12.9096,2.9217)--(13.0075,2.9325)--(13.1055,2.9449)--(13.2034,2.9589)--(13.3013,2.9746)--(13.3993,2.9920)--(13.4972,3.0110)--(13.5951,3.0316)--(13.6931,3.0539)--(13.7910,3.0779)--(13.8889,3.1035)--(13.9869,3.1308)--(14.0848,3.1597)--(14.1827,3.1903)--(14.2807,3.2225)--(14.3786,3.2564)--(14.4765,3.2919)--(14.5745,3.3291)--(14.6724,3.3679)--(14.7703,3.4084)--(14.8683,3.4505)--(14.9662,3.4943)--(15.0641,3.5398)--(15.1621,3.5869)--(15.2600,3.6356);
\draw[teal!12,line width=.35pt] (9.6400,3.8916)--(9.7379,3.8429)--(9.8359,3.7958)--(9.9338,3.7503)--(10.0317,3.7065)--(10.1297,3.6644)--(10.2276,3.6239)--(10.3255,3.5851)--(10.4235,3.5479)--(10.5214,3.5124)--(10.6193,3.4785)--(10.7173,3.4463)--(10.8152,3.4157)--(10.9131,3.3868)--(11.0111,3.3595)--(11.1090,3.3339)--(11.2069,3.3099)--(11.3049,3.2876)--(11.4028,3.2670)--(11.5007,3.2480)--(11.5987,3.2306)--(11.6966,3.2149)--(11.7945,3.2009)--(11.8925,3.1885)--(11.9904,3.1777)--(12.0883,3.1687)--(12.1863,3.1612)--(12.2842,3.1554)--(12.3821,3.1513)--(12.4801,3.1488)--(12.5780,3.1480)--(12.6759,3.1488)--(12.7739,3.1513)--(12.8718,3.1554)--(12.9697,3.1612)--(13.0677,3.1687)--(13.1656,3.1777)--(13.2635,3.1885)--(13.3615,3.2009)--(13.4594,3.2149)--(13.5573,3.2306)--(13.6553,3.2480)--(13.7532,3.2670)--(13.8511,3.2876)--(13.9491,3.3099)--(14.0470,3.3339)--(14.1449,3.3595)--(14.2429,3.3868)--(14.3408,3.4157)--(14.4387,3.4463)--(14.5367,3.4785)--(14.6346,3.5124)--(14.7325,3.5479)--(14.8305,3.5851)--(14.9284,3.6239)--(15.0263,3.6644)--(15.1243,3.7065)--(15.2222,3.7503)--(15.3201,3.7958)--(15.4181,3.8429)--(15.5160,3.8916);
\fill[amber!8,opacity=.65] (10.0770,3.5051) circle (.48);
\fill[amber] (9.8445,3.3874) circle (1.1pt);
\fill[amber] (9.9995,3.3861) circle (1.1pt);
\fill[amber] (10.1545,3.4016) circle (1.1pt);
\fill[amber] (10.3095,3.4339) circle (1.1pt);
\fill[amber] (9.8445,3.5008) circle (1.1pt);
\fill[amber] (9.9995,3.4995) circle (1.1pt);
\fill[amber] (10.1545,3.5150) circle (1.1pt);
\fill[amber] (10.3095,3.5473) circle (1.1pt);
\fill[amber] (9.8445,3.6141) circle (1.1pt);
\fill[amber] (9.9995,3.6128) circle (1.1pt);
\fill[amber] (10.1545,3.6283) circle (1.1pt);
\fill[amber] (10.3095,3.6606) circle (1.1pt);
\draw[amber,line width=.75pt] (10.0770,3.5051) circle (.48);
\draw[amber!35,dashed,line width=.3pt] (9.5970,3.5051) arc[start angle=180,end angle=0,x radius=.48,y radius=.15];
\draw[amber!50,line width=.3pt] (9.5970,3.5051) arc[start angle=180,end angle=360,x radius=.48,y radius=.15];
\draw[amber!25,line width=.3pt] (10.0770,3.5051) ellipse (.17 and .48);
\fill[teal!8,opacity=.65] (11.6590,3.0739) circle (.48);
\fill[teal] (11.4265,2.9562) circle (1.1pt);
\fill[teal] (11.5815,2.9549) circle (1.1pt);
\fill[teal] (11.7365,2.9704) circle (1.1pt);
\fill[teal] (11.8915,3.0027) circle (1.1pt);
\fill[teal] (11.4265,3.0696) circle (1.1pt);
\fill[teal] (11.5815,3.0683) circle (1.1pt);
\fill[teal] (11.7365,3.0838) circle (1.1pt);
\fill[teal] (11.8915,3.1161) circle (1.1pt);
\fill[teal] (11.4265,3.1829) circle (1.1pt);
\fill[teal] (11.5815,3.1816) circle (1.1pt);
\fill[teal] (11.7365,3.1971) circle (1.1pt);
\fill[teal] (11.8915,3.2294) circle (1.1pt);
\draw[teal,line width=.75pt] (11.6590,3.0739) circle (.48);
\draw[teal!35,dashed,line width=.3pt] (11.1790,3.0739) arc[start angle=180,end angle=0,x radius=.48,y radius=.15];
\draw[teal!50,line width=.3pt] (11.1790,3.0739) arc[start angle=180,end angle=360,x radius=.48,y radius=.15];
\draw[teal!25,line width=.3pt] (11.6590,3.0739) ellipse (.17 and .48);
\fill[blue!8,opacity=.65] (13.2410,3.0739) circle (.48);
\fill[blue] (13.0085,2.9562) circle (1.1pt);
\fill[blue] (13.1635,2.9549) circle (1.1pt);
\fill[blue] (13.3185,2.9704) circle (1.1pt);
\fill[blue] (13.4735,3.0027) circle (1.1pt);
\fill[blue] (13.0085,3.0696) circle (1.1pt);
\fill[blue] (13.1635,3.0683) circle (1.1pt);
\fill[blue] (13.3185,3.0838) circle (1.1pt);
\fill[blue] (13.4735,3.1161) circle (1.1pt);
\fill[blue] (13.0085,3.1829) circle (1.1pt);
\fill[blue] (13.1635,3.1816) circle (1.1pt);
\fill[blue] (13.3185,3.1971) circle (1.1pt);
\fill[blue] (13.4735,3.2294) circle (1.1pt);
\draw[blue,line width=.75pt] (13.2410,3.0739) circle (.48);
\draw[blue!35,dashed,line width=.3pt] (12.7610,3.0739) arc[start angle=180,end angle=0,x radius=.48,y radius=.15];
\draw[blue!50,line width=.3pt] (12.7610,3.0739) arc[start angle=180,end angle=360,x radius=.48,y radius=.15];
\draw[blue!25,line width=.3pt] (13.2410,3.0739) ellipse (.17 and .48);
\fill[violet!8,opacity=.65] (14.8230,3.5051) circle (.48);
\fill[violet] (14.5905,3.3874) circle (1.1pt);
\fill[violet] (14.7455,3.3861) circle (1.1pt);
\fill[violet] (14.9005,3.4016) circle (1.1pt);
\fill[violet] (15.0555,3.4339) circle (1.1pt);
\fill[violet] (14.5905,3.5008) circle (1.1pt);
\fill[violet] (14.7455,3.4995) circle (1.1pt);
\fill[violet] (14.9005,3.5150) circle (1.1pt);
\fill[violet] (15.0555,3.5473) circle (1.1pt);
\fill[violet] (14.5905,3.6141) circle (1.1pt);
\fill[violet] (14.7455,3.6128) circle (1.1pt);
\fill[violet] (14.9005,3.6283) circle (1.1pt);
\fill[violet] (15.0555,3.6606) circle (1.1pt);
\draw[violet,line width=.75pt] (14.8230,3.5051) circle (.48);
\draw[violet!35,dashed,line width=.3pt] (14.3430,3.5051) arc[start angle=180,end angle=0,x radius=.48,y radius=.15];
\draw[violet!50,line width=.3pt] (14.3430,3.5051) arc[start angle=180,end angle=360,x radius=.48,y radius=.15];
\draw[violet!25,line width=.3pt] (14.8230,3.5051) ellipse (.17 and .48);
\node[small,text=teal] at (12.45,4.99) {The probability is distributed across several balls};
\node[align=center,small] at (17.30,3.71) {Easier to identify\\the parameters\\[5pt]\textbf{More distinguishable}\\\textbf{outputs}};

\end{tikzpicture}

%% file: theorem2_grid_tikz.tex
\begin{tikzpicture}[x=1cm,y=1cm,>=Stealth,font=\fontsize{10}{12}\selectfont,
 small/.style={font=\fontsize{9}{11}\selectfont},
 flow/.style={->,black!50,line width=.75pt}]
\definecolor{gridblue}{RGB}{35,101,152}
\definecolor{lowerteal}{RGB}{20,114,104}
\definecolor{lowgold}{RGB}{182,112,25}
\path[use as bounding box] (.1,2.05) rectangle (19.6,8.0);
\path[fill=lowgold!13] (0.8000,3.0000)--(3.2000,3.0000)--(3.2000,3.9193)--(3.1760,3.9040)--(3.1520,3.8888)--(3.1280,3.8738)--(3.1040,3.8589)--(3.0800,3.8441)--(3.0560,3.8295)--(3.0320,3.8149)--(3.0080,3.8006)--(2.9840,3.7863)--(2.9600,3.7722)--(2.9360,3.7583)--(2.9120,3.7444)--(2.8880,3.7308)--(2.8640,3.7172)--(2.8400,3.7038)--(2.8160,3.6906)--(2.7920,3.6775)--(2.7680,3.6646)--(2.7440,3.6518)--(2.7200,3.6391)--(2.6960,3.6266)--(2.6720,3.6143)--(2.6480,3.6021)--(2.6240,3.5900)--(2.6000,3.5782)--(2.5760,3.5664)--(2.5520,3.5549)--(2.5280,3.5434)--(2.5040,3.5322)--(2.4800,3.5211)--(2.4560,3.5101)--(2.4320,3.4993)--(2.4080,3.4887)--(2.3840,3.4782)--(2.3600,3.4679)--(2.3360,3.4578)--(2.3120,3.4478)--(2.2880,3.4379)--(2.2640,3.4282)--(2.2400,3.4187)--(2.2160,3.4093)--(2.1920,3.4001)--(2.1680,3.3910)--(2.1440,3.3821)--(2.1200,3.3734)--(2.0960,3.3648)--(2.0720,3.3563)--(2.0480,3.3480)--(2.0240,3.3399)--(2.0000,3.3319)--(1.9760,3.3240)--(1.9520,3.3163)--(1.9280,3.3087)--(1.9040,3.3013)--(1.8800,3.2940)--(1.8560,3.2869)--(1.8320,3.2799)--(1.8080,3.2731)--(1.7840,3.2663)--(1.7600,3.2598)--(1.7360,3.2533)--(1.7120,3.2470)--(1.6880,3.2408)--(1.6640,3.2348)--(1.6400,3.2289)--(1.6160,3.2231)--(1.5920,3.2174)--(1.5680,3.2119)--(1.5440,3.2064)--(1.5200,3.2011)--(1.4960,3.1959)--(1.4720,3.1909)--(1.4480,3.1859)--(1.4240,3.1811)--(1.4000,3.1764)--(1.3760,3.1717)--(1.3520,3.1672)--(1.3280,3.1628)--(1.3040,3.1585)--(1.2800,3.1543)--(1.2560,3.1502)--(1.2320,3.1462)--(1.2080,3.1423)--(1.1840,3.1385)--(1.1600,3.1347)--(1.1360,3.1311)--(1.1120,3.1276)--(1.0880,3.1241)--(1.0640,3.1207)--(1.0400,3.1175)--(1.0160,3.1142)--(0.9920,3.1111)--(0.9680,3.1081)--(0.9440,3.1051)--(0.9200,3.1022)--(0.8960,3.0994)--(0.8720,3.0967)--(0.8480,3.0940)--(0.8240,3.0914)--(0.8000,3.0888)--cycle;
\path[fill=gridblue!5] (3.2000,3.0000)--(7.2000,3.0000)--(7.2000,7.2017)--(7.1600,7.1668)--(7.1200,7.1319)--(7.0800,7.0970)--(7.0400,7.0621)--(7.0000,7.0272)--(6.9600,6.9923)--(6.9200,6.9575)--(6.8800,6.9226)--(6.8400,6.8877)--(6.8000,6.8529)--(6.7600,6.8180)--(6.7200,6.7832)--(6.6800,6.7483)--(6.6400,6.7135)--(6.6000,6.6787)--(6.5600,6.6439)--(6.5200,6.6090)--(6.4800,6.5743)--(6.4400,6.5395)--(6.4000,6.5047)--(6.3600,6.4699)--(6.3200,6.4352)--(6.2800,6.4005)--(6.2400,6.3657)--(6.2000,6.3310)--(6.1600,6.2963)--(6.1200,6.2617)--(6.0800,6.2270)--(6.0400,6.1924)--(6.0000,6.1577)--(5.9600,6.1231)--(5.9200,6.0885)--(5.8800,6.0540)--(5.8400,6.0194)--(5.8000,5.9849)--(5.7600,5.9504)--(5.7200,5.9159)--(5.6800,5.8815)--(5.6400,5.8471)--(5.6000,5.8127)--(5.5600,5.7784)--(5.5200,5.7440)--(5.4800,5.7097)--(5.4400,5.6755)--(5.4000,5.6413)--(5.3600,5.6071)--(5.3200,5.5730)--(5.2800,5.5389)--(5.2400,5.5048)--(5.2000,5.4708)--(5.1600,5.4369)--(5.1200,5.4030)--(5.0800,5.3691)--(5.0400,5.3354)--(5.0000,5.3016)--(4.9600,5.2680)--(4.9200,5.2344)--(4.8800,5.2008)--(4.8400,5.1674)--(4.8000,5.1340)--(4.7600,5.1007)--(4.7200,5.0675)--(4.6800,5.0344)--(4.6400,5.0013)--(4.6000,4.9684)--(4.5600,4.9355)--(4.5200,4.9028)--(4.4800,4.8702)--(4.4400,4.8376)--(4.4000,4.8052)--(4.3600,4.7729)--(4.3200,4.7408)--(4.2800,4.7088)--(4.2400,4.6769)--(4.2000,4.6451)--(4.1600,4.6136)--(4.1200,4.5821)--(4.0800,4.5509)--(4.0400,4.5198)--(4.0000,4.4888)--(3.9600,4.4581)--(3.9200,4.4276)--(3.8800,4.3972)--(3.8400,4.3671)--(3.8000,4.3372)--(3.7600,4.3075)--(3.7200,4.2780)--(3.6800,4.2487)--(3.6400,4.2197)--(3.6000,4.1910)--(3.5600,4.1625)--(3.5200,4.1343)--(3.4800,4.1064)--(3.4400,4.0787)--(3.4000,4.0514)--(3.3600,4.0243)--(3.3200,3.9976)--(3.2800,3.9711)--(3.2400,3.9450)--(3.2000,3.9193)--cycle;
\begin{scope}
\clip (3.2000,3.0000)--(7.2000,3.0000)--(7.2000,7.2017)--(7.1600,7.1668)--(7.1200,7.1319)--(7.0800,7.0970)--(7.0400,7.0621)--(7.0000,7.0272)--(6.9600,6.9923)--(6.9200,6.9575)--(6.8800,6.9226)--(6.8400,6.8877)--(6.8000,6.8529)--(6.7600,6.8180)--(6.7200,6.7832)--(6.6800,6.7483)--(6.6400,6.7135)--(6.6000,6.6787)--(6.5600,6.6439)--(6.5200,6.6090)--(6.4800,6.5743)--(6.4400,6.5395)--(6.4000,6.5047)--(6.3600,6.4699)--(6.3200,6.4352)--(6.2800,6.4005)--(6.2400,6.3657)--(6.2000,6.3310)--(6.1600,6.2963)--(6.1200,6.2617)--(6.0800,6.2270)--(6.0400,6.1924)--(6.0000,6.1577)--(5.9600,6.1231)--(5.9200,6.0885)--(5.8800,6.0540)--(5.8400,6.0194)--(5.8000,5.9849)--(5.7600,5.9504)--(5.7200,5.9159)--(5.6800,5.8815)--(5.6400,5.8471)--(5.6000,5.8127)--(5.5600,5.7784)--(5.5200,5.7440)--(5.4800,5.7097)--(5.4400,5.6755)--(5.4000,5.6413)--(5.3600,5.6071)--(5.3200,5.5730)--(5.2800,5.5389)--(5.2400,5.5048)--(5.2000,5.4708)--(5.1600,5.4369)--(5.1200,5.4030)--(5.0800,5.3691)--(5.0400,5.3354)--(5.0000,5.3016)--(4.9600,5.2680)--(4.9200,5.2344)--(4.8800,5.2008)--(4.8400,5.1674)--(4.8000,5.1340)--(4.7600,5.1007)--(4.7200,5.0675)--(4.6800,5.0344)--(4.6400,5.0013)--(4.6000,4.9684)--(4.5600,4.9355)--(4.5200,4.9028)--(4.4800,4.8702)--(4.4400,4.8376)--(4.4000,4.8052)--(4.3600,4.7729)--(4.3200,4.7408)--(4.2800,4.7088)--(4.2400,4.6769)--(4.2000,4.6451)--(4.1600,4.6136)--(4.1200,4.5821)--(4.0800,4.5509)--(4.0400,4.5198)--(4.0000,4.4888)--(3.9600,4.4581)--(3.9200,4.4276)--(3.8800,4.3972)--(3.8400,4.3671)--(3.8000,4.3372)--(3.7600,4.3075)--(3.7200,4.2780)--(3.6800,4.2487)--(3.6400,4.2197)--(3.6000,4.1910)--(3.5600,4.1625)--(3.5200,4.1343)--(3.4800,4.1064)--(3.4400,4.0787)--(3.4000,4.0514)--(3.3600,4.0243)--(3.3200,3.9976)--(3.2800,3.9711)--(3.2400,3.9450)--(3.2000,3.9193)--cycle;
\draw[gridblue!35,line width=.4pt] (3.2000,3.0000)--(3.2000,7.2700);
\draw[gridblue!35,line width=.4pt] (4.0000,3.0000)--(4.0000,7.2700);
\draw[gridblue!35,line width=.4pt] (4.8000,3.0000)--(4.8000,7.2700);
\draw[gridblue!35,line width=.4pt] (5.6000,3.0000)--(5.6000,7.2700);
\draw[gridblue!35,line width=.4pt] (6.4000,3.0000)--(6.4000,7.2700);
\draw[gridblue!35,line width=.4pt] (7.2000,3.0000)--(7.2000,7.2700);
\draw[gridblue!35,line width=.4pt] (3.2000,3.0000)--(7.2000,3.0000);
\draw[gridblue!35,line width=.4pt] (3.2000,3.7000)--(7.2000,3.7000);
\draw[gridblue!35,line width=.4pt] (3.2000,4.4000)--(7.2000,4.4000);
\draw[gridblue!35,line width=.4pt] (3.2000,5.1000)--(7.2000,5.1000);
\draw[gridblue!35,line width=.4pt] (3.2000,5.8000)--(7.2000,5.8000);
\draw[gridblue!35,line width=.4pt] (3.2000,6.5000)--(7.2000,6.5000);
\draw[gridblue!35,line width=.4pt] (3.2000,7.2000)--(7.2000,7.2000);
\draw[gridblue!35,line width=.4pt] (3.2000,7.9000)--(7.2000,7.9000);
\end{scope}
\draw[gridblue!65,line width=.7pt] (0.8000,3.0888)--(0.8640,3.0958)--(0.9280,3.1032)--(0.9920,3.1111)--(1.0560,3.1196)--(1.1200,3.1287)--(1.1840,3.1385)--(1.2480,3.1488)--(1.3120,3.1599)--(1.3760,3.1717)--(1.4400,3.1843)--(1.5040,3.1977)--(1.5680,3.2119)--(1.6320,3.2269)--(1.6960,3.2429)--(1.7600,3.2598)--(1.8240,3.2776)--(1.8880,3.2964)--(1.9520,3.3163)--(2.0160,3.3372)--(2.0800,3.3591)--(2.1440,3.3821)--(2.2080,3.4062)--(2.2720,3.4314)--(2.3360,3.4578)--(2.4000,3.4852)--(2.4640,3.5138)--(2.5280,3.5434)--(2.5920,3.5742)--(2.6560,3.6061)--(2.7200,3.6391)--(2.7840,3.6732)--(2.8480,3.7083)--(2.9120,3.7444)--(2.9760,3.7816)--(3.0400,3.8198)--(3.1040,3.8589)--(3.1680,3.8989)--(3.2320,3.9399)--(3.2960,3.9817)--(3.3600,4.0243)--(3.4240,4.0677)--(3.4880,4.1119)--(3.5520,4.1568)--(3.6160,4.2025)--(3.6800,4.2487)--(3.7440,4.2956)--(3.8080,4.3431)--(3.8720,4.3912)--(3.9360,4.4398)--(4.0000,4.4888)--(4.0640,4.5384)--(4.1280,4.5884)--(4.1920,4.6388)--(4.2560,4.6896)--(4.3200,4.7408)--(4.3840,4.7923)--(4.4480,4.8441)--(4.5120,4.8963)--(4.5760,4.9487)--(4.6400,5.0013)--(4.7040,5.0542)--(4.7680,5.1074)--(4.8320,5.1607)--(4.8960,5.2142)--(4.9600,5.2680)--(5.0240,5.3219)--(5.0880,5.3759)--(5.1520,5.4301)--(5.2160,5.4844)--(5.2800,5.5389)--(5.3440,5.5934)--(5.4080,5.6481)--(5.4720,5.7029)--(5.5360,5.7578)--(5.6000,5.8127)--(5.6640,5.8677)--(5.7280,5.9228)--(5.7920,5.9780)--(5.8560,6.0332)--(5.9200,6.0885)--(5.9840,6.1439)--(6.0480,6.1993)--(6.1120,6.2547)--(6.1760,6.3102)--(6.2400,6.3657)--(6.3040,6.4213)--(6.3680,6.4769)--(6.4320,6.5325)--(6.4960,6.5882)--(6.5600,6.6439)--(6.6240,6.6996)--(6.6880,6.7553)--(6.7520,6.8110)--(6.8160,6.8668)--(6.8800,6.9226)--(6.9440,6.9784)--(7.0080,7.0342)--(7.0720,7.0900)--(7.1360,7.1459)--(7.2000,7.2017);
\draw[lowgold!65,dashed,line width=.7pt] (3.2000,3.0000)--(3.2000,4.2600);
\draw[gridblue!55,line width=.6pt] (7.2000,3.0000)--(7.2000,7.2017);
\draw[->,black!60] (.7,3)--(7.8,3);
\draw[->,black!60] (.8,2.9)--(.8,7.8);
\node[anchor=west] at (.94,7.72) {$q$};
\node[anchor=west] at (7.88,3.0) {$s$};
\node[small,anchor=east] at (.66,3) {$0$};
\node[small,align=center] at (3.2,2.55) {$\log(A^2\rho^\Delta)$};
\node[small,align=center] at (7.15,2.55) {$\log(A^2\rho)$};
\draw[black!60] (3.2,2.95)--(3.2,3.05);
\draw[black!60] (7.2,2.95)--(7.2,3.05);
\node[small,text=gridblue] at (4.45,7.12) {$q\le\log(1+e^s)$};
\draw[->,gridblue!65] (5.15,6.94)--(6.42,6.52);
\node[small,align=center,text=lowgold] at (1.93,5.2)
 {Low-SNR cell\\$O(\Delta\log\rho)$};
\draw[->,lowgold!70] (2.15,4.75)--(2.45,3.42);
\path[fill=gridblue!17,draw=gridblue,line width=.9pt] (4.8,4.4) rectangle (5.6,5.1);
\node[text=gridblue] at (5.2,4.75) {$\Set C$};
\fill[lowerteal] (4.8,4.4) circle (1.8pt);
\fill[gridblue] (5.6,5.1) circle (1.8pt);
\draw[flow] (5.8,4.8)--(9.55,4.8);
\node[small,text=black!60] at (8.05,5.12) {Zoom};
\path[fill=gridblue!4,draw=gridblue!65,line width=.8pt] (9.9,3.65) rectangle (13.1,6.85);
\draw[<->,black!50,line width=.55pt] (9.9,7.3)--(13.1,7.3);
\node[small] at (11.5,7.62) {$\overline{s}-\underline{s}\le w$};
\draw[<->,black!50,line width=.55pt] (9.5,3.65)--(9.5,6.85);
\node[small,rotate=90] at (9.15,5.8) {$\overline{q}-\underline{q}\le w$};
\fill[gridblue] (13.1,6.85) circle (2.6pt);
\fill[lowerteal] (9.9,3.65) circle (2.6pt);
\node[text=gridblue,anchor=north east] at (12.95,6.63) {$(\overline{s},\overline{q})$};
\node[text=lowerteal,anchor=north west] at (9.9,3.47) {$(\underline{s},\underline{q})$};
\node[text=gridblue,font=\fontsize{14}{16}\selectfont] at (11.5,5.3) {$\Set C$};
\draw[->,gridblue,line width=.75pt] (13.25,6.85)--(14.0,6.85);
\node[anchor=west,align=left,text=gridblue] at (14.12,6.75)
 {Upper endpoints\\\textbf{Output entropy upper bound}};
\draw[->,lowerteal,line width=.75pt] (10.05,3.65) to[bend left=12] (14.0,4.08);
\node[anchor=west,align=left,text=lowerteal] at (14.12,4.10)
 {Lower endpoints + SNR correction\\\textbf{Conditional entropy lower bound}};
\node[small] at (11.7,2.7) {$w=\Delta\log\rho$};
\end{tikzpicture}